\def\vertexnodes{\tikzstyle{every node}=[fill=black,circle, inner sep=2.2pt]}
\setlist[description]{font=\normalfont,leftmargin=\parindent,labelindent=\parindent}
\newtheorem{theorem}{Theorem}[section]
\newtheorem{lemma}[theorem]{Lemma}
\newtheorem{corollary}[theorem]{Corollary}
\newtheorem{conjecture}{Conjecture}
\newcommand{\sm}{\setminus}
\newcommand{\bpc}{\noindent {\em Proof of~(\theclaim). }}
\newcommand{\epc}{This proves~(\theclaim).}
\definecolor{C1}{RGB}{0,0,170}
\definecolor{C2}{RGB}{251,86,4}
\newcommand{\tw}{\operatorname{tw}}
\newcommand{\trigrid}[1]{\ensuremath{G^{\triangle}_{#1\times #1}}}
\newcommand{\dist}{\operatorname{dist}}
\newcommand{\N}{\mathbb{N}} %natural numbers
\newcommand{\C}[1]{\mathcal{C}_{#1}} %class of graphs of bounded degree #1
\newcommand{\G}{G} %graph
\renewcommand{\phi}{\varphi}
\newcommand{\ball}[2]{\ifx&#2& #1 \else (#1,#2)	\fi} %without arg gives just the graph of the ball, with arg gives the ball where the center equals the arg
\newcommand{\neighbourhood}[3]{N^{#1}_{#2}(#3)} %neighbourhood
\newcounter{claim}
\renewenvironment{proof}[1][]%
 {\noindent {\setcounter{claim}{0}\sc proof ---
   }{#1}{}}{\hfill$\Box$\vspace{2ex}} 
\newenvironment{claim}[1][]%
{\refstepcounter{claim}\vspace{1ex}\noindent{(\arabic{claim}) {#1}{}}\it}{\vspace{1ex}}
\newcommand*\samethanks[1][\value{footnote}]{\footnotemark[#1]}
\title{On the tree-width of even-hole-free graphs}
\date{}
\author{
	Pierre Aboulker
	\thanks{DIENS, {\'E}cole normale sup{\'e}rieure, CNRS, PSL University, Paris, France. 
	Email: pierreaboulker@gmail.com. \newline
	Supported by grant ANR-19-CE48-0016 from the French National Research Agency (ANR).}
	\and
	Isolde Adler
	\thanks{School of Computing, University of Leeds, Leeds, LS2 9JT, UK. 
	Email: i.m.adler@leeds.ac.uk. \newline This research was partly carried out during a visit to Universit{\'e} Paris Dauphine, funded by LAMSADE.} 
	\and
	Eun Jung Kim
	\thanks{Universit{\'e} Paris-Dauphine, PSL University, CNRS UMR7243, LAMSADE, Paris, France. Email: eun-jung.kim@dauphine.fr. \newline
	Supported by the grant from the French National Agency under JCJC program (ASSK: ANR-18-CE40-0025-01).
	}
	\and
	Ni Luh Dewi Sintiari
	\thanks{Univ Lyon, EnsL, UCBL, CNRS, LIP, F-69342, LYON Cedex 07, France.  
	Email: \{nicolas.trotignon,ni-luh-dewi.sintiari\}@ens-lyon.fr. \newline
	Partially
     supported by the LABEX MILYON (ANR-10-LABX-0070) of Universit\'e
     de Lyon, within the program `Investissements d'Avenir'     
	  (ANR-11-IDEX-0007) operated by the French National Research Agency
     (ANR).
	  }
	\and
	Nicolas Trotignon
	\samethanks
	%{Univ Lyon, EnsL, UCBL, CNRS, LIP, F-69342, LYON Cedex 07, France.  
	%Email: nicolas.trotignon@ens-lyon.fr.}
}
\begin{document}
\maketitle

\begin{abstract}
%We prove exciting things about even-hole-free graphs of bounded
%degree.
The class of all even-hole-free graphs has unbounded tree-width, as it contains all complete graphs. Recently, a class of (even-hole, $K_4$)-free graphs was constructed, that still has unbounded tree-width [Sintiari and Trotignon, 2019]. The class has unbounded degree 
\emph{and} contains arbitrarily large clique-minors. We ask whether this is necessary.

We prove that for every graph $G$, if $G$ excludes a fixed graph $H$ as a minor, then 
	$G$ either has small tree-width, or
	$G$ contains a large wall or the line graph of a large wall as  
	\emph{induced} subgraph.
	This can be seen as a strengthening of Robertson and Seymour's excluded grid theorem 
	for the case of minor-free graphs.
	Our theorem implies that every class of even-hole-free graphs excluding a fixed graph as a minor has 
	bounded tree-width. In fact, our theorem applies to a more general class: (theta, prism)-free graphs. 
%Since even-hole free graphs are odd-signable, 
	This implies the known result 
	that planar even hole-free graph have bounded tree-width [da Silva and Linhares Sales, 
	Discrete Applied Mathematics 2010].

We conjecture that even-hole-free graphs of bounded degree have bounded tree-width. If true, this would mean that even-hole-freeness is testable in the bounded-degree graph model of property testing. We prove the conjecture for subcubic graphs and we give a bound on the tree-width of the class of (even hole, pyramid)-free graphs of degree at most $4$.

\end{abstract}

\textbf{Keywords} Even-hole-free graphs, grid theorem, 
tree-width, bounded-degree graphs, property testing

\medskip\textbf{2012 ACM Subject Classification} Theory of computation $\rightarrow$ Mathematics of computing $\rightarrow$ Discrete mathematics $\rightarrow$ Graph Theory; 
Theory of computation $\rightarrow$ Design and analysis of algorithms $\rightarrow$ Streaming, sublinear and near linear time algorithms

\section{Introduction}
\label{sec:intro}
Here, all graphs are simple and undirected.  A \emph{hole} in a graph
is an induced cycle of length at least~4.  It is \emph{even} or
\emph{odd} according to the parity of its \emph{length}, that is the
number of its edges.  We say that a graph $G$ \emph{contains} a graph
$H$ if some induced subgraph of $G$ is isomorphic to $H$.  A graph is
\emph{$H$-free} if it does not contain $H$.  When $\cal H$ is a set of
graphs, $G$ is $\cal H$-free if $G$ contains no graph of $\cal H$. A
graph is therefore \emph{even-hole-free} if it does not contain an
even hole.

Even-hole-free graphs were the object of much attention, see for
instance the survey~\cite{vuskovic:evensurvey}.  However, many
questions about them remain unanswered, such as the existence of a
polynomial time algorithm to color them, or to find a maximum stable
set.  In fact, to the best of our knowledge, no problem that is
polynomial time solvable for chordal graphs is known to be NP-hard for
even-hole-free graphs (where a \emph{chordal} graph is a hole-free
graph). Despite the existence of several decomposition theorems or
structural properties (see~\cite{vuskovic:evensurvey}), no structure
theorem is known for even-hole-free graphs.

In addition, motivated by the question whether even-hole-freeness is testable
in the bounded degree model of property testing, the structure of
even-hole-free graphs of bounded maximum degree is of interest.  If
even-hole-free graphs of bounded degree have bounded tree-width, it
would imply testability in the bounded degree model, because
even-hole-freeness is expressible in monadic second-order logic with
modulo counting (CMSO) and CMSO is testable on bounded
tree-width~\cite{AdlerH18}. We will discuss this in greater detail below.
Let us first provide some more background.

\paragraph{Background.}
We begin by recalling known definitions and results about tree-width.  The
\emph{clique number} of a graph $G$, denoted by $\omega(G)$, is the
maximum number of pairwise adjacent vertices in $G$.  The \emph{tree-width} of
a graph $G$ is the minimum of $\omega(J)-1$ over all chordal graphs
$J$ such that $G$ is a subgraph of $J$. The tree-width can be seen as a measure of
the structural tameness of a graph: the smaller the tree-width, the more `tree-like' 
the graph. A celebrated
result~\cite{DBLP:journals/iandc/Courcelle90} asserts that many
problems (including graph coloring or finding a maximum stable set)
can be solved in polynomial time when restricted to graphs of bounded
tree-width.  However, many graphs with in some sense a simple structure
have large tree-width. For instance the complete graph on $n$
vertices, that we denote by $K_n$, has tree-width $n-1$.  
A graph $H$ is a \emph{minor} of a graph $G$, if $H$ can be obtained
from a subgraph of $G$ by contracting edges.
Tree-width
is monotone under taking minors in the sense that if $H$ is a minor
of $G$, then the tree-width of $H$ is less than or equal to the tree-width
of $G$.  It follows that graphs that contain $K_n$ as a minor have
tree-width at least $n-1$. The converse is not true:
grids have arbitrarily large tree-width but they do not contain
$K_5$ as a minor.

The class of all even-hole-free graphs trivially has unbounded tree-width, as is contains all
complete graphs.  Also, chordal graphs form a well studied subclass of
even-hole-free graphs of unbounded tree-width. However, even-hole-free
graphs with no triangle have bounded
tree-width~\cite{CameronSHV18}. This leads to asking
whether even-hole-free graphs of bounded clique number have bounded
tree-width -- a question that 
is first asked and motivated in~\cite{CameronCH18}. This was answered negatively
in~\cite{DBLP:journals/corr/abs-1906-10998}, where (even hole,
$K_4$)-free graphs of arbitrarily large tree-width are described.
However, the construction uses vertices of large degree \emph{and} a large
clique minor to increase the tree-width, and it seems natural to ask whether this
is necessary. 

It is known that planar even-hole-free graphs have bounded
tree-width~\cite{SilvaSS10}, and planar graphs do not contain $K_{\ell}$ as a minor for 
$\ell\geq 5$.
Besides that, it is known that
an upper bound on the length of the largest induced cycle implies an
upper bound on the tree-width for graphs of
bounded maximum degree~\cite{BodlaenderT97}.

\paragraph{Our contributions.}

The results explained above suggest the following two conjectures.
\begin{conjecture}
  \label{conj:minor}
  There is a function $f\colon\mathbb N\to\mathbb N$ such that every even-hole-free graph
	not containing $K_{\ell}$ as a minor
	has tree-width at most~$f(\ell)$.
\end{conjecture}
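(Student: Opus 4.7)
The plan is to derive Conjecture~1.1 from the structural theorem announced in the abstract: if $G$ excludes $K_\ell$ as a minor and has tree-width at least $h(k,\ell)$, then $G$ contains an induced $k \times k$ wall or the line graph of a $k \times k$ wall as an induced subgraph. Given this theorem, the conjecture follows quickly. A wall is bipartite, and its hexagonal faces are induced even cycles, hence even holes. The line graph of a wall also contains an even hole: the line graph of a single hexagonal face of the wall is an induced $6$-cycle, because nonconsecutive edges of a hexagon share no vertex in the wall. Thus for a small absolute constant $k_0$, both guaranteed induced subgraphs contain an even hole, and any even-hole-free $K_\ell$-minor-free graph must have tree-width at most $f(\ell) := h(k_0,\ell)$.

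The first step in proving the main theorem is to invoke the excluded grid theorem of Robertson and Seymour, which yields in $G$ a subdivision $W'$ of an arbitrarily large wall $W$ as a subgraph. The second step is to \emph{clean} $W'$. For each vertex $v \in V(G) \setminus V(W')$ one records its attachment pattern to the branch paths of $W'$, and for each chord edge internal to $W'$ one records its position in the wall coordinates. Ramsey-type arguments on these types extract a large sub-wall $W'' \subseteq W'$ whose external attachments and internal chords are uniformly distributed. If the surviving uniform pattern is \emph{rich} --- for example, many attachment vertices each touching many distinct branch paths of $W''$, or many chord edges distributed across $W''$ --- then appropriate contractions combining pieces of $W''$ with the attachments build a $K_\ell$-minor, contradicting the hypothesis. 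This forces the surviving pattern into a small number of canonical configurations.

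The main obstacle is the classification of these canonical configurations and showing that they correspond precisely to induced walls or induced line graphs of walls. Concretely, one has to verify a dichotomy: either the cleaned sub-wall has no chords or attachments and contains an induced wall; or it carries exactly the pattern of chords and short attachments that turn it into the line graph of a (smaller) wall, which must then be recognised, e.g.\ via a Whitney-type argument on triangles. Ruling out hybrid or intermediate surviving configurations, quantifying the loss of wall size through each Ramsey step, and handling triangles carefully constitute the technical heart of the argument. The minor-exclusion hypothesis is essential here: the contractions used to extract a $K_\ell$-minor from rich attachment patterns rely on treating each branch path as a single contractible block, a feature proper to minor-type hypotheses with no analogue for induced-subgraph or topological-minor exclusion.
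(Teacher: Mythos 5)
Your high-level strategy (derive the conjecture from the induced grid theorem by showing both outcomes contain even holes) matches the paper's, but the specific argument has a parity gap. In the paper's terminology, an $(n\times m)$-wall is a \emph{subdivision} of the elementary wall, so its faces are cycles of length at least $6$ of arbitrary parity; a wall is not in general bipartite, and its faces need not be even holes. The same issue hits the line-graph case, since the line graph of a subdivided face of length $n$ is again a cycle of length $n$. The paper avoids this by observing that a $(3\times 3)$-wall always contains a \emph{theta} (two adjacent branch vertices joined by three internally-disjoint paths) and the line graph of a chordless $(3\times 3)$-wall always contains a \emph{prism}, and both a theta and a prism contain an even hole for purely parity-pigeonhole reasons (among three paths between the same pair of endpoints, two share a parity and so close up into an even hole). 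You should replace the ``hexagonal face'' argument with this theta/prism argument.

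Separately, your sketch of the induced grid theorem itself (Theorem~\ref{thm:ind-wall}) takes a genuinely different and less developed route than the paper's. You start from the Robertson--Seymour wall \emph{subgraph} and propose a bespoke Ramsey cleanup plus a $K_\ell$-minor extraction from rich attachment patterns. The paper instead invokes a result of Fomin, Golovach and Thilikos that gives an \emph{induced} subgraph containing a triangulated grid as a \emph{contraction}; it then uses a fork/semi-fork lemma applied inside each $4\times 4$ sub-block of the contracted grid to directly assemble an induced ``stone wall'' (a wall with some degree-$3$ vertices replaced by triangles), and finally a bipartite Ramsey argument to homogenize the stone wall into a genuine wall or line graph of a wall. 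The contraction-based starting point is what makes the cleanup tractable; your plan leaves the classification of the ``canonical configurations'' entirely open, which is exactly where the technical difficulty lies, and it is not clear it would go through as described.
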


\begin{conjecture}
  \label{conj:degree}
	There is a function $f\colon\mathbb N\to\mathbb N$ such that every even-hole-free graph
   of degree at most $d$ has tree-width at most $f(d)$.
\end{conjecture}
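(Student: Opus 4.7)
The plan is to show that an even-hole-free graph $G$ with bounded maximum degree $d$ and sufficiently large tree-width must contain an induced even cycle, contradicting even-hole-freeness. The strategy combines a wall-extraction argument with a parity argument on the hexagonal faces of the wall.

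First, I would invoke the Robertson--Seymour excluded grid theorem: if $\tw(G)$ is large enough with respect to $d$, then $G$ contains a large grid as a minor. Because $G$ has maximum degree at most $d$, the branch sets of such a minor have bounded size, so standard arguments allow us to reorganize it into a subdivision $W_0$ of a large wall appearing as a topological minor of $G$, with all branch vertices of degree exactly $3$. From $W_0$, I would then extract a large \emph{induced} wall subdivision $W \subseteq G$ by a cleaning step: since each vertex of $V(W_0)$ has at most $d$ neighbours in $G$, the number of chord edges in $G[V(W_0)]$ not in $E(W_0)$ is linear in $|V(W_0)|$, so deleting a controlled fraction of the cells of $W_0$ removes every such chord while still preserving an induced wall subdivision whose size tends to infinity with the grid size.

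Next, I apply a parity argument to $W$. Each face of $W$ is an induced cycle in $G$ of length $\ell_1 + \cdots + \ell_6$, where the $\ell_i$ are the lengths of the six subdivided sides of the face. If some face has even total length, it is an induced even hole and we are done. Otherwise every face of $W$ is odd. In that case some interior side of $W$ must have length $\geq 2$: if every interior side had length $1$, then every face lying in the interior of $W$ would be a $6$-cycle, hence even. Let $F_1, F_2$ be two adjacent faces sharing such a subdivided side of length $s \geq 2$, and let $C_1, C_2$ be their (odd) boundary cycles. The symmetric difference $C_1 \triangle C_2$ is a cycle of length $|C_1| + |C_2| - 2s$, which is even, and it is induced in $G$: the interior vertices of the shared subdivided side are removed, so that side no longer contributes a chord, and $W$ has no other chords. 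Hence $C_1 \triangle C_2$ is an induced even hole, again a contradiction.

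The main obstacle is the extraction of the induced wall. While the chord-cleaning heuristic is plausible in outline, making it quantitatively work---so that the final induced wall $W$ is still large enough to run the parity argument and retains at least one subdivided interior side---appears delicate in full generality. Presumably this is why the paper resolves only the subcubic case (where there is essentially no room for chords in a wall subdivision) and the degree-four pyramid-free case, rather than the conjecture as stated: carrying out the induced wall extraction uniformly for all $d$ seems to demand new structural insight into even-hole-free graphs of bounded degree, going beyond the grid-theorem techniques developed for the minor-excluded setting.
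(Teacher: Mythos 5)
The statement you are trying to prove is, in the paper, a \emph{conjecture} (Conjecture~\ref{conj:degree}) and is left open; the authors establish it only for $d=3$ (via the structure theorem of Section~\ref{sec:subcubic}) and prove a weakening for $d=4$ restricted to pyramid-free graphs (Section~\ref{sec:delta4}). So there is no proof in the paper to compare against, and your argument must stand on its own. It does not, and the gap is exactly where you flag it yourself: the extraction of a large \emph{induced} wall from a bounded-degree graph of large tree-width. The chord-cleaning step, as you state it, does not work. Counting that the number of chords in $G[V(W_0)]$ is linear in $|V(W_0)|$ does not let you kill them all by deleting a small fraction of cells: a chord may join two far-apart regions of the wall, and if you pass to a subwall on, say, $m$ of the $n$ rows and $m$ of the $n$ columns, each of the $\Theta(n^2)$ chords survives with probability roughly $(m/n)^2$, leaving $\Theta(m^2)$ expected chords --- i.e.\ still a constant number of chords per cell, not zero. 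Making this selection clever rather than random is precisely the missing structural insight. Indeed, what you need is essentially Conjecture~\ref{conj:wall-degree} of the paper, which the authors explicitly label as wide open; and your version is even \emph{stronger} than theirs, since you demand an induced wall whereas Conjecture~\ref{conj:wall-degree} allows the line graph of a wall as an alternative outcome. The line-graph alternative is not a cosmetic add-on: line graphs of walls have bounded degree, large tree-width, and plausibly do not contain large induced walls, which is exactly why the conjecture is formulated with both outcomes. Your cleaning argument silently assumes you can always steer towards the first outcome.

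For completeness, your parity argument on the induced wall is correct and is a pleasant alternative to the observation the paper actually uses (a large wall contains a theta, and a theta contains an even hole, Lemma~\ref{lem:ehf-theta-prism}): if every face of an induced subdivided wall is odd, then some interior side has length at least $2$, and the symmetric difference of the two faces flanking it is an induced cycle of even length at least $10$. But this only handles the wall outcome; had you obtained the line graph of a wall instead, you would need a separate argument (the paper uses that a line graph of a large wall contains a prism). In short, the parity step is sound, but the extraction step is the open heart of the problem, and asserting it as a plausible heuristic does not close the conjecture.
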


In this paper we prove 
Conjecture~\ref{conj:minor} (cf.\ Section~\ref{sec:minor}). Indeed, we prove
the following stronger result, which implies Conjecture~\ref{conj:minor}.

\begin{restatable}[Induced grid theorem for minor-free graphs]{thm}{indwall}
	\label{thm:ind-wall}
	For every graph $H$ there is a function $f_{H}\colon\mathbb N\to\mathbb N$ such that every $H$-minor-free graph of tree-width at least $f_{H}(k)$ contains 
	 a $(k\times k$)-wall or the line graph of a chordless $(k\times k)$-wall
	as an \emph{induced} subgraph.
\end{restatable}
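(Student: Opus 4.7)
The plan is to reduce to the classical excluded grid theorem of Robertson and Seymour and then refine a topological wall into the desired induced structure. The central difficulty is that a wall minor only yields a subdivision of a wall as a subgraph of $G$, which may carry many chords; the bulk of the proof is therefore a controlled cleanup of those chords, where $H$-minor-freeness is what forbids the ``bad'' configurations.

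First, by the excluded grid theorem of Robertson and Seymour, there is $g \colon \mathbb{N} \to \mathbb{N}$ such that $\tw(G) \geq g(k')$ forces $G$ to contain a $(k' \times k')$-wall as a minor. Walls are subcubic, so this minor is automatically a topological minor: $G$ contains a subgraph $W'$ isomorphic to a subdivision of a $(k' \times k')$-wall. We choose $k'$ sufficiently large in terms of $k$ and $|V(H)|$ to feed the later pigeonhole and minor-extraction arguments, and we may furthermore ensure (by pre-subdividing, or by choosing $W'$ with long subdivided edges) that each edge of the underlying wall is replaced by a long internally disjoint path in $W'$.

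Next, we study the chords of $W'$ in $G$, meaning edges $uv \in E(G) \setminus E(W')$ with $u,v \in V(W')$. Each chord receives a combinatorial type according to which subdivided edges or branch vertices of $W'$ contain its endpoints and, where relevant, the residues of those endpoints modulo the subdivision length. An iterated Ramsey/pigeonhole argument applied to concentric sub-walls of $W'$ allows us to pass to a still-large sub-wall $W''$ on which the surviving chord pattern is homogeneous. We then analyse the homogeneous pattern by cases:
\begin{itemize}
\item If no chords remain on $W''$, then $W''$ is an induced subdivision of a wall; contracting its maximal chord-free subpaths produces an induced $(k \times k)$-wall in $G$.
\item If chords cluster ``locally'' around each branch vertex so as to form triangles on the three incident subdivided paths in a uniform manner, we recognise in $W''$ an induced copy of the line graph of a chordless $(k \times k)$-wall.
\item Any other homogeneous chord type allows one to cut $W''$ into many vertex-disjoint wall-strips, turn each strip into a connected branch set, and use the chord pattern to ensure that these branch sets are pairwise linked, yielding a $K_t$-minor for $t$ as large as one likes. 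Choosing $t \geq |V(H)|$ then contradicts $H$-minor-freeness.
\end{itemize}

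The main obstacle is the third bullet: showing that \emph{every} non-trivial homogeneous chord pattern does produce an unboundedly large clique minor. The two-dimensional combinatorics of the wall is essential here, because it lets one route many disjoint horizontal or vertical strips across the chord configuration and bundle them into the branch sets of $K_t$; tracking residues of chord endpoints modulo the subdivision length is what distinguishes the ``wall-preserving'' local chords of the second bullet from chords that create rich minors. Once this reduction is in place, the theorem is an immediate combination of the excluded grid theorem, the pigeonhole reduction to a homogeneous sub-wall, and this case analysis.
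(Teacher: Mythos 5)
Your approach is genuinely different from the paper's, and it contains a gap that I believe is fatal as written.

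The paper does not start from a wall subgraph supplied by Robertson and Seymour. Instead it invokes a result of Fomin, Golovach and Thilikos (Corollary~\ref{cor:contractions}) that is specific to $H$-minor-free graphs: if $\tw(G)\geq c_H\cdot(k+1)^2$, then $G$ already contains an \emph{induced subgraph} $G'$ that has the triangulated $(k\times k)$-grid as a \emph{contraction}. This is a much stronger foothold than a wall subgraph, because the contraction structure hands you a tiling of $V(G')$ by connected blobs with prescribed adjacencies, and the induced-ness of $G'$ is already guaranteed, so there are no stray chords to fight. The paper then applies Lemma~\ref{lem:forks} to groups of these blobs: any triple of prescribed vertices in a seven-blob configuration are the three leaves of some induced fork or semi-fork in $G'$. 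Stitching these forks and semi-forks together along induced paths inside the blobs yields an induced stone wall (a wall in which some degree-$3$ vertices are replaced by triangles), and a bipartite Ramsey argument (Lemma~\ref{lem:tidy-stonewall}, via Theorem~\ref{thm:bipartite-ramsey}) homogenizes the stone wall into either a wall or the line graph of a chordless wall. In short, the paper never cleans up chords on a wall subgraph; it assembles the induced structure from scratch inside the contraction skeleton.

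The gap in your proposal is the third bullet. You assert that any homogeneous chord pattern other than ``chordless'' or ``triangles at branch vertices'' forces a $K_t$-minor for arbitrarily large $t$, so that $H$-minor-freeness rules it out. This claim is both unproved and, as stated, false. Take a wall subgraph and place one diagonal chord consistently across every brick (say from the top-left to the bottom-right branch vertex of the brick). This is a homogeneous chord pattern, it is neither chordless nor the triangle pattern of your second bullet, and the resulting graph is planar — hence $K_5$-minor-free, let alone $K_t$-minor-free for large $t$. The theorem is of course still true in this example (passing to a sub-wall that uses only every other row and column of branch vertices removes one endpoint of each diagonal, so the chords vanish from the induced subgraph), but that is a fourth outcome your trichotomy does not allow for, and it must be argued. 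More generally, many ``local'' homogeneous chord patterns must be handled by a refinement/re-routing argument rather than a minor argument, and that is exactly where the missing work lies. There is also a secondary issue: classifying chords by pairs of subdivided edges together with residues gives a number of chord types growing with the wall size, so the iterated Ramsey step needs a more careful organization — compare the paper's single $2$-coloring of a complete bipartite graph indexed by horizontal and vertical paths — to avoid an uncontrolled blow-up.
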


\noindent Here a \emph{wall} is a (possibly subdivided) hexagonal grid (cf.~Section~\ref{sec:minor}).

Slightly more generally, Theorem~\ref{thm:ind-wall} implies that \emph{(theta, prism)-free} graphs (to be defined in Section~2) exluding a fixed minor have bounded tree-width.
Note that (theta, prism)-free graphs form a superclass of even-hole-free graphs.

Our theorem can be seen as `induced' version on minor-free graphs classes 
of the following famous theorem.
\begin{restatable}[Robertson and Seymour~\cite{RobertsonS86}]{thm}{wall}
	\label{thm:wall}
	There is a function $f\colon \mathbb N\to \mathbb N$ such that every graph  of
	tree-width at least $f(k)$ contains a $(k\times k)$-wall as a subgraph.
\end{restatable}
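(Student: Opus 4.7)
The plan is to follow the classical route: use bramble duality to certify large tree-width, extract a grid minor from the bramble, and then upgrade this to a wall subgraph.

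First, invoke the Seymour--Thomas duality theorem: $\tw(G)\geq k-1$ if and only if $G$ admits a bramble of order at least $k$, where a bramble is a collection of pairwise touching connected subgraphs and its order is the minimum size of a set of vertices meeting all of them. So if $\tw(G)$ exceeds a threshold $f(k)$ chosen sufficiently large, $G$ carries a bramble $\mathcal{B}$ of correspondingly huge order.

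Second, extract an $(m\times m)$-grid as a minor of $G$, for some $m$ that is an increasing function of the order of $\mathcal{B}$. Informally: select a dense concentrated subfamily of the bramble, iteratively apply Menger's theorem inside the bramble to route disjoint paths realising prospective rows and columns, and finally clean the structure into a genuine grid minor. This is the original argument of Robertson and Seymour; much better polynomial bounds are now available through the work of Chekuri--Chuzhoy and Chuzhoy--Tan, but any quantitative version is enough for this qualitative statement.

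Third, convert the grid minor into a wall subgraph. The $(k\times k)$-wall is a subcubic graph and is itself a minor of the $(O(k)\times O(k))$-grid; moreover, any minor of maximum degree at most $3$ is in fact a topological minor. Hence if $G$ contains a sufficiently large grid as a minor, it contains a subdivision of a $(k\times k)$-wall as a subgraph, and a subdivision of a wall is still a wall. Composing the bounds from these three steps produces the desired function $f$.

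The hard part is clearly step~2: all the geometric content of the theorem lives there, and the routing arguments require delicate bookkeeping to ensure that paths representing different rows and columns do not collide or escape into the wrong bramble element. It is precisely the difficulty of this step that makes $f$ astronomical in the original proof of Robertson and Seymour and that motivated the long sequence of quantitative improvements leading to polynomial bounds.
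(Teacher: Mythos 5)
Your sketch is correct, and in fact the paper does not re-prove this theorem at all: it cites it to Robertson and Seymour, states the grid-minor version separately as Theorem~\ref{thm:grid}, and then disposes of the wall upgrade in one sentence by observing that walls have maximum degree three and hence can be found as subgraphs (subdivisions) rather than merely as minors. Your step~3 is exactly that observation, correctly justified by the standard fact that a minor of maximum degree at most three is automatically a topological minor; your steps~1--2 re-derive Theorem~\ref{thm:grid} via bramble duality and the original Robertson--Seymour routing argument, which the paper treats as a black box. So the logical content matches, with the only difference being that you unpack the grid-minor theorem whereas the paper simply cites it; for the purposes of this paper, the shortest correct route is: take the grid-minor theorem as given, note that a $(k\times k)$-elementary wall is a subgraph of a $(2k\times 2k)$-grid, and then invoke the degree-three-to-topological-minor step to get the wall as a subgraph.
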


Theorem~\ref{thm:wall} cannot be strengthened to finding walls as \emph{induced} subgraphs in general, 
because the complete graph $K_n$ has tree-width $n-1$ and only contains complete graphs as induced subgraphs.

Note that graphs with no induced subdivision of a $6\times 3$ wall, arbitrarily large tree-width and girth (in particular triangle and square-free) exist, as shown in~\cite{DBLP:journals/corr/abs-1906-10998}. But as for the construction of (even hole, $K_4$)-free graphs of unbounded tree-width in~\cite{DBLP:journals/corr/abs-1906-10998}, vertices of large degree and large clique minors are needed, we make the following conjecture. 
\begin{conjecture}
  \label{conj:wall-degree} For every $d\in \mathbb N$
	there is a function $f_d\colon\mathbb N\to\mathbb N$ such that every graph with 
	degree at most $d$ and tree-width at least $f_d(k)$ contains 
	a $(k\times k$)-wall or the line graph of a $(k\times k)$-wall
	as an \emph{induced} subgraph.
\end{conjecture}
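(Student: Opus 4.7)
The plan is to adapt the strategy behind Theorem~\ref{thm:ind-wall}, replacing the minor-freeness hypothesis by a bounded-degree cleaning argument. Starting from a graph $G$ of maximum degree at most $d$ and tree-width at least $f_d(k)$, I would first invoke Theorem~\ref{thm:wall} to obtain a $(K\times K)$-wall $W$ as a (not necessarily induced) subgraph of $G$, where $K$ is a sufficiently large function of $k$ and $d$ to be chosen later. The wall may suffer from \emph{chords} (edges of $G[V(W)]$ not in $W$) and \emph{external attachments} (vertices of $V(G)\setminus V(W)$ with neighbours in $V(W)$); the bounded-degree hypothesis forces each vertex of $W$ to carry at most $d$ such defects, so defects are sparse along $W$.

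Next, I would tile $W$ into a coarser grid of sub-wall ``blocks'' of some constant size $r=r(d,k)$, and assign to each block a \emph{type} recording its local defect pattern: chord configurations among block vertices and attachment patterns of external vertices whose closed neighbourhoods intersect the block. Because the degree is at most $d$, the number of types is bounded by a function of $d$ and $r$, so a Ramsey-style pigeonhole argument on the blocks would produce a large sub-wall $W'$ in which all blocks share the same type. If that type is trivial, then $W'$ is an induced $(k\times k)$-wall and we are done. Otherwise, a case analysis of the possible uniform non-trivial types should show that they either re-route to an induced wall of slightly smaller size (for instance by using a chord as a short-cut through a subdivision path) or realise exactly the adjacency pattern of the line graph of a $(k\times k)$-wall.

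The principal obstacle is precisely this case analysis. Unlike the minor-free setting, bounded degree is a purely local constraint, so intricate recurring attachment patterns are not ruled out by global arguments. In particular, one must verify that every uniform non-trivial type either yields one of the two target induced subgraphs or can be excluded altogether; patterns that locally look like repeated theta or prism extensions along $W$ are a priori compatible with bounded degree but correspond to neither outcome of the conjecture. A successful proof will likely require a carefully tuned notion of type, incorporating a short-range ``routing'' graph that tracks how external vertices connect distant portions of $W$, together with a stability lemma classifying the admissible limit patterns. The subcubic result proved in the paper, where each external vertex can contribute at most one extra edge to $W$, should serve as a testing ground and indicate which features of the type refinement are needed in the general case.
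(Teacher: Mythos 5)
You have correctly identified that this statement is an open conjecture: the paper explicitly states that ``Conjecture~\ref{conj:wall-degree} is wide open, and our results can be seen as a step in the direction of a proof.'' Your proposal is therefore not expected to be complete, and indeed you yourself flag the decisive case analysis as missing. Let me make the gap more precise than ``the case analysis is hard.''

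First, the tile-and-Ramsey step is already problematic. Bounded degree limits how many edges a vertex of $W$ can carry, but it places no constraint on the \emph{reach} of those edges: a single vertex of $V(W)$ may be adjacent in $G$ to a vertex of $W$ that lies in a distant block, and across all of $W$ there can be $\Theta(|V(W)|)$ such long-range chords. Your block types record only local defect patterns, so a uniform type tells you nothing about which distant blocks a given block is joined to; the Ramsey step would wash over an arbitrary (possibly expander-like) overlay of long chords. This is not a detail but a structural feature: the proof of Theorem~\ref{thm:ind-wall} does not start from a wall \emph{subgraph} at all. It invokes Corollary~\ref{cor:contractions}, which gives a triangulated grid as a \emph{contraction of an induced subgraph}, so that the branch sets are induced-connected in $G$ and every edge of $G$ between two branch sets is accounted for by the grid adjacency. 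That is exactly the kind of global control that lets the paper build induced forks/semi-forks via Lemma~\ref{lem:forks}. Fomin--Golovach--Thilikos is a minor-freeness theorem and has no bounded-degree analogue; replacing it is the substance of the open problem, not a routine adaptation.

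Second, even setting aside long-range defects, the case analysis you defer is precisely where all the content lies. The conjecture quantifies over \emph{all} bounded-degree graphs, not merely even-hole-free ones, so you cannot discard attachment patterns that build thetas or prisms onto $W$: those graphs are legitimate instances and must still yield one of the two induced outcomes. There is also no reason a uniform non-trivial type should collapse to ``wall with shortcuts'' or ``line graph of a wall''; the paper's construction machinery (net graph replacements, stone walls, Lemma~\ref{lem:tidy-stonewall}) only covers the two outcomes that arise from fork/semi-fork branch sets, and nothing guarantees an arbitrary local defect pattern lands in that dichotomy. Your instinct to mine the subcubic Theorem~\ref{th:main-thm-max3} for guidance is reasonable, but note that even there the argument is a bespoke decomposition via clique and proper separators, not a Ramsey-on-a-wall cleanup, which suggests the eventual proof (if any) for general $d$ may need a structural decomposition rather than a purification of a wall subgraph.
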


Conjecture~\ref{conj:wall-degree} implies
Conjecture~\ref{conj:degree}. Conjecture~\ref{conj:wall-degree} is wide open, and our
results can be seen as a step in the direction of a proof.

For
Conjecture~\ref{conj:degree} (that is trivial for $l\leq 2$), we give
a proof for $l=3$ by providing a full structural description of
subcubic even-hole-free graphs (a graph is \emph{subcubic} if it does
not contain a vertex of degree more than~3).  In fact, all these
results apply to (theta, prism)-free graphs (cf.\ Section~\ref{sec:notation} for the details). We
also prove a weakening of Conjecture~\ref{conj:degree} for $l=4$ (cf.\
Section~\ref{sec:delta4}).

\paragraph{Motivation from property testing.}
Our other source of motivation for studying even-hole-free graphs of bounded degree 
stems from the question 
whether even-hole-freeness is testable in the bounded degree graph model.
Motivated by the growing need of highly efficient algorithms, in particular when the inputs are huge, property testing aims at devising sublinear time algorithms.
Property testing algorithms (simply called \emph{testers}) solve a relaxed version of decision problems, they
are randomised, and they come with a small controllable error probability.

Since the input cannot be read even once in sublinear time, 
 testers have local access to the
input graph only. The bounded degree graph model assumes a fixed
upper bound $d$ on the degree of all graphs, and the testers 
proceed by 
sampling a constant number of vertices of the input graph and exploring their 
local (constant radius) neighborhoods. A property $P$ is \emph{testable}, if there
is an $\epsilon$-tester for $P$, for every fixed small $\epsilon>0$. 
For an input graph $G$, 
an \emph{$\epsilon$-tester} determines, with probability at least $2/3$ 
correctly, 
whether $G$ has property $P$, or $G$ is $\epsilon$\emph{-far} from having 
property $P$. Here a \emph{property} is simply an isomorphism closed
class of graphs. A graph $G$ is $\epsilon$\emph{-close} to $P$,
if there is a graph $G'\in P$ on the same number $n$ of vertices as $G$,
such that $G$ and $G'$ can be made isomorphic by at most $\epsilon dn$ 
edge modifications (deletions or insertions) in $G$ or $G'$, and otherwise,
$G$ is $\epsilon$\emph{-far} from $P$.
The number of vertices explored in the input graph is called the \emph{query complexity}
of the tester. The model requires testers to have constant query complexity.

Properties that are known to be testable in the bounded degree graph model
include subgraph-freeness (for a fixed subgraph), $k$-edge connectivity, cycle-freeness, being Eulerian, degree-regularity~\cite{GoldreichRon2002}, bounded tree-width and minor-freeness~\cite{benjamini2010every,hassidim2009local,kumar2019random}, hyperfinite properties \cite{NewmanSohler2013}, $k$-vertex connectivity~\cite{yoshida2012property,forster2019computing}, and subdivision-freeness~\cite{kawarabayashi2013testing}. 
Properties that are \emph{not} testable in the bounded
degree model include bipartiteness, $3$-colorablilty, expansion properties, and $k$-clusterability, cf.~\cite{goldreich2017introduction}.

Since the testers can only explore a constant number of constant radius 
neighborhoods in the input graph, intuitively, properties that are testable should have some form of `local' nature. 
Hence it might seem unlikely that properties like Hamiltonicity and even-hole-freeness are testable, due to the large cycles involved. Indeed, it can be shown that there is no one-sided error tester for Hamiltonicity (i.\,e.\ there is no tester that always accepts yes-instances) with constant query complexity. More precisely, every one-sided tester has query complexity at least $\Omega(n)$~\cite{AdlerKoehler2020}.

Perhaps surprisingly, our results suggest a different picture for 
even-hole-freeness. It is known that on graphs of 
bounded degree and bounded tree-width, every property that can be expressed 
in monadic second-order logic with counting (CMSO), is testable with constant query complexity and polylogarithmic 
running time~\cite{AdlerH18}. (Here \emph{polylogarithmic in} $n$ means bounded by a polynomial in $\log n$.) 
It is straightforward to see that even-hole-freeness is expressible in CMSO. This can be done by expressing that there is no set $X$ of edges that form an induced hole, where $|X|$ is even (cf.\ eg.\ \cite{CourcelleE12} for more details on CMSO expressibility).
Together with the fact that bounded tree-width is testable, this implies that 
\emph{if} Conjecture~\ref{conj:degree} is true, 
then even-hole-freeness is testable
with constant query complexity and polylogarithmic 
running time. (This is done by first testing for bounded tree-width and, if the answer is positive, testing for even-hole-freeness using CMSO testability.)
Our results in Section~\ref{sec:subcubic} 
%Even-hole-free graphs of degree at most $3$ have bounded tree-with
imply the following.
\smallskip

\begin{restatable}{thm}{subcubic-testability}
On subcubic graphs, even-hole-freeness is testable
with constant query complexity and polylogarithmic 
running time.
\end{restatable}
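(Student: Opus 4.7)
The plan is to combine the structural bound from Section~\ref{sec:subcubic} with two previously established testability results. The structural bound — that every subcubic even-hole-free graph has tree-width at most some absolute constant $w$ — reduces the problem to testing even-hole-freeness on the class of subcubic graphs of tree-width at most~$w$, since any subcubic graph $G$ with $\tw(G)>w$ automatically contains an even hole. The first auxiliary ingredient is that in the bounded-degree graph model the property ``$\tw\le w$'' is testable with constant query complexity and polylogarithmic running time~\cite{hassidim2009local,benjamini2010every}. The second is the Adler--Harwath theorem~\cite{AdlerH18}: every CMSO-definable property is testable, with constant query complexity and polylogarithmic running time, on the class of graphs of bounded degree and bounded tree-width. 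Even-hole-freeness is CMSO-expressible (cf.~\cite{CourcelleE12}): one asserts the nonexistence of an edge set $X$ of even cardinality that forms an induced cycle.

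On input $(G,\epsilon)$ with $G$ subcubic, the tester proceeds in two stages. In Stage~1 it runs the ``$\tw\le w$'' tester with proximity parameter $\epsilon/2$ and error probability $1/6$; if this rejects, the overall tester outputs ``far from even-hole-free''. Otherwise, in Stage~2 it runs the Adler--Harwath CMSO tester for even-hole-freeness with proximity parameter $\epsilon/2$ and error probability $1/6$, and returns its answer. A union bound yields overall error at most~$1/3$, and the total query complexity and running time inherit the constant and polylogarithmic bounds of the two subroutines.

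For correctness, if $G$ is even-hole-free then $\tw(G)\le w$ by the structural bound, so Stage~1 and Stage~2 each accept with probability $\ge 5/6$. Conversely, suppose $G$ is $\epsilon$-far from even-hole-freeness. Since every subcubic even-hole-free graph has tree-width at most~$w$, $G$ is also $\epsilon$-far from the intersection ``even-hole-free and $\tw\le w$''. Either $G$ is $\epsilon/2$-far from ``$\tw\le w$'', in which case Stage~1 rejects with probability $\ge 5/6$, or $G$ is $\epsilon/2$-close to some subcubic $G'$ with $\tw(G')\le w$; by the triangle inequality, $G'$ is itself $\epsilon/2$-far from even-hole-freeness, so Stage~2 rejects.

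The only real subtlety I expect is ensuring that the Adler--Harwath tester behaves correctly on inputs in Stage~2 that are merely $\epsilon/2$-close to the bounded-tree-width class, rather than strictly inside it. The standard remedy is to compose Stage~2 with a local bounded-tree-width partition of $G$ into constant-size parts, as in the hyperfinite-property framework of~\cite{NewmanSohler2013}, and then invoke Adler--Harwath on each part; since the structural theorem of Section~\ref{sec:subcubic} supplies a uniform constant~$w$, the entire pipeline retains constant query complexity and polylogarithmic running time.
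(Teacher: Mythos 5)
Your proof takes the same approach the paper sketches in the introduction: combine the tree-width bound ($\le 3$) for subcubic even-hole-free graphs from Section~\ref{sec:subcubic} with the testability of bounded tree-width~\cite{benjamini2010every,hassidim2009local} and the Adler--Harwath CMSO testability theorem~\cite{AdlerH18}, via a two-stage tester. The paper leaves the composition implicit; you fill in the correctness argument in the same spirit.

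One caution about the subtlety you flag at the end. The remedy you suggest --- ``invoke Adler--Harwath on each part'' of a local bounded-tree-width partition --- does not work as stated: even-hole-freeness is not closed under deleting the edges between parts (removing the shared edge of two triangles sharing an edge creates an induced $C_4$), so testing each piece separately does not decide the property of the whole graph. In the Newman--Sohler framework~\cite{NewmanSohler2013} the partition oracle is used to estimate the \emph{distribution of isomorphism types} of the constant-size pieces, which is then compared against the distributions realisable within the property; it is not a per-piece property test. Alternatively, argue directly from locality: the CMSO tester with parameter $\epsilon/2$ inspects only a constant number of constant-radius balls, so choosing Stage~1's proximity parameter sufficiently small relative to those tester parameters (and repeating Stage~2 a constant number of times) yields the required error bound even on inputs that are merely close to bounded tree-width. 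Either fix makes your pipeline sound; only the last sentence of your remedy needs adjusting.
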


\paragraph{Structure of the paper.} We start with fixing notation in Section~\ref{sec:notation}.
Section~\ref{sec:minor} contains the proof of the induced grid theorem for minor-free
graph classes, and the proof of Conjecture~\ref{conj:minor}.
Section~\ref{sec:subcubic} contains the structure theorem for subcubic (theta, prism)-free graphs and the proof of Conjecture~\ref{conj:degree} for $d=3$. 
Recall that (theta, prism)-free graphs are defined in the next section, and it is a superclass of even-hole-free graphs.
In Section~\ref{sec:delta4} we
provide a structure theorem for (even hole, pyramid)-free graphs of maximum degree $4$, and
we derive a bound on the tree-width of this class (pyramids will be defined in the next section). In Section~\ref{sec:delta4b} we give ideas suggesting that a structure theorem for even-hole-free graphs with maximum degree~4 might exist, and if so, should imply bounded tree-width.

\section{Notation}
\label{sec:notation}
We let $\N$ denote the set of natural numbers including $0$. 
We use $X\sqcup Y$  instead of $X\cup Y$ if $X\cap Y=\emptyset$. We use $(0,1]$ to denote the real
interval that excludes $0$ and includes $1$.
For any $n\in \N, n\geq 1$, let $[n]:=\{1,\dots,n\}$.

An \emph{(undirected) graph} is a pair $G=(V(G),E(G))$, consisting of a set $V(G)$, the set of \emph{vertices} of $G$, and a set $E(G)$ of \emph{edges} of $G$, where an edge is a two-vertex subset of $V(G)$. A graph $H$ is a \emph{subgraph} of a graph $G$, if $V(H)\subseteq V(G)$ and $E(H)\subseteq E(G)$. For a set $X\subseteq V(G)$ the subgraph \emph{induced} by
$X$ in $G$ is the subgraph $G[X]$ of $G$ with vertex set $X$, such that $e\in E(G[X])$ iff $e\in E(G)$ and $e\subseteq X$. A graph $H$ is an \emph{induced subgraph} of $G$, if $H=G[X]$ for some $X\subseteq V(G)$. 
For a set $S\subseteq V(G)$ we let $G\setminus S:=G[V(G)\setminus S]$ and if $S=\{v\}$ is a singleton set, then we write $G\setminus v$ instead of $G\setminus \{v\}$.
%The \emph{union} of two graphs $G$ and $H$ is the graph $G\cup H$ with vertex set $V(G)\cup V(H)$
%and edge set $E(G)\cup E(H)$. We write $G\sqcup H$ for the \emph{disjoint union}, i.\,e.\ if, in addition, $V(G)\cap V(H)=\emptyset.$
%
%Two graphs $G$ and $H$ are \emph{isomorphic}, if there is a bijection between $V(G)$ and $V(H)$ that preserves the edge relation both ways.

A {\em path} in $G$ is a sequence $P$ of distinct vertices
$p_1 \dots p_n$, where for $i,j \in \{1,\dots,n\}$,
$p_ip_{j} \in E(G)$ if and only if $|i-j|= 1$.  For two vertices
$p_i, p_j \in V(P)$ with $j > i$, the path $p_ip_{i+1}\dots p_j$ is
a {\em subpath of $P$} that is denoted by $p_iPp_j$. The subpath
$p_2\dots p_{n-1}$ is called the {\em interior} of $P$. 
The vertices
$p_1,p_n$ are the \emph{ends} of the path, and the vertices in the
interior of $P$ are called the \emph{internal} vertices of $P$.  
A {\em cycle} is defined similarly, with the additional properties that $n\geq 4$ and $p_1 = p_n$.  The {\em length} of a path $P$ is the number of edges of $P$.  The length of cycle is defined similarly.
Let $G$ be a graph. For vertices $u,v\in V(G)$, the \emph{distance} between $u$ and $v$ in $G$, 
denoted by $\dist_G(u,v)$, is the length of a shortest path from $u$ to $v$, if a path exists,
and $\infty$ otherwise. For two subsets $X,Y\subseteq V(G)$, the \emph{distance} between $X$ and $Y$ is $\min\{\dist_G(x,y)\mid x\in X, y\in Y\}$.
For $v\in V(G)$, we call the set $\neighbourhood{\G}{r}{v}:=\{w\in V(\G)\mid \dist_G(v,w)\leq r\}$ the \emph{$r$-neighborhood} of $v$ (in $G$). 

The \emph{degree} of a vertex $v$ in $G$ is defined as $\deg_\G(v):=|\{u\in V(G)\mid \{u,v\}\in E(G)\}|$. The \emph{degree} of $G$, $\deg(G)$, is the maximum degree over all vertices of $G$.
By $\C{d}$ we denote the class of all graphs of degree at most $d\in \N$.

The \emph{line graph} of a graph $G$ is the graph $L(G)$, with $V(L(G))=E(G)$ and two vertices of $L(G)$ are
adjacent, if their corresponding edges are incident in $G$.
An edge $e\in E(G)$ is a \emph{chord}
of cycle $C$, if the endpoints of $e$ are vertices of $C$ that are not adjacent on $C$.  
A \emph{hole} is a chordless cycle of length at least~4.
A \emph{clique} in $G$ is a set $X\subseteq V(G)$ of vertices such that $\{v,w\}\in E(G)$ for every pair $v,w\in X$ with $v\neq w$. A graph $K$ is \emph{complete}, if $V(K)$ is a clique in $K$. 
We use $K_k$ to denote the complete graph on $k$ vertices.
%A clique $K$ in $G$ is called a \emph{maximal clique} if there is no clique $K'$ in $G$ with $K\not= K'$ and $K\subseteq K'$.
For disjoint sets $A,B \subseteq V(G)$, we say that $A$ is {\em
  anticomplete to} $B$ if no edges are present between $A$ and $B$ in
$G$.

%Let $G$ and $H$ be graphs. If $H$ can be obtained from $G$ by a sequence of 
%edge contractions, then $H$ is called a \emph{contraction} of $G$.
%Alternatively, contractions can be defined via mappings as follows.
%Let $G$ and $H$ be graphs and let $\phi\colon V(G)\to V(H)$ be a surjective 
%mapping such that 
%\begin{enumerate}
%	\item for every vertex $v\in V(H)$, its preimage $\phi^{-1}(v)$ is 
%		connected in $G$, 
%	\item for every edge $uv\in E(H)$, the graph 
%		$G[\phi^{-1}(u)\cup\phi^{-1}(v)]$ is connected,
%	\item for every edge $xy\in E(G)$, either $\phi(u)=\phi(v)$ or 
%		$\phi(u)\phi(v)\in E(H)$.
%\end{enumerate}
%
%If such a mapping $\phi$ exists, we say that $H$ is a \emph{contraction of $G$ via} $\phi$ and we
%call $\phi$ a \emph{contraction mapping}. The set $\phi^{-1}(v)$ is also called a \emph{\blob} of $v$ (in $G$ via $\phi$). It is easy to see that $H$ is a contraction of $G$ iff there exists a mapping $\phi$ as above such that $H$ is a contraction of $G$ via $\phi$.
%
%

A {\em pyramid} is a graph made of three chordless paths
$P_1 = x \dots a$, $P_2 = x \dots b$, $P_3 = x \dots c$, each of
length at least~1, two of which have length at least 2, internally
vertex-disjoint, and such that $abc$ is a triangle and no edges exist
between the paths except those of the triangle and the three edges
incident to~$x$.  The vertex $x$ is called the {\em apex} of the
pyramid.

A \emph{prism} is a graph made of three vertex-disjoint chordless
paths $P_1 = a \dots a'$, $P_2 = b \dots b'$, $P_3 = c \dots c'$ of
length at least 1, such that $abc$ and $a'b'c'$ are triangles and no
edges exist between the paths except those of the two triangles.

A \emph{theta} is a graph made of three internally vertex-disjoint
chordless paths $P_1 = a \dots b$, $P_2 = a \dots b$,
$P_3 = a \dots b$ of length at least~2 and such that no edges exist
between the paths except the three edges incident to~$a$ and the three
edges incident to~$b$.

A {\em wheel} is a graph formed from a hole $H$ together with a vertex
$x$ that has at least three neighbors in the hole.  Such a hole $H$ is
called the {\em rim}, and such a vertex $x$ is called the {\em center}
of the wheel.  We denote by $(H,x)$, the wheel with rim $H$ and center
$x$.

\begin{figure}[ht]
  \begin{center}
    \includegraphics[height=2cm]{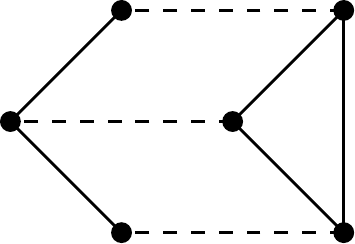} \hspace{2ex}
    \hspace{.2em}
    \includegraphics[height=2cm]{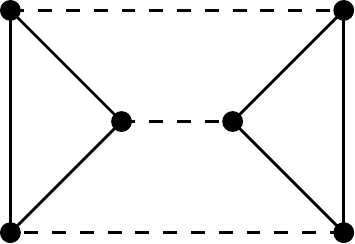} \hspace{2ex}
    \hspace{.2em}
    \includegraphics[height=2cm]{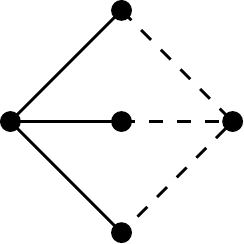} \hspace{2ex}
    \hspace{.2em}
    \includegraphics[height=2cm]{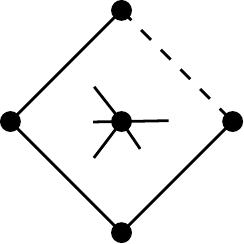}
  \end{center}
  \caption{Pyramid, prism, theta, and wheel (dashed lines represent
    paths)\label{f:tc}} 
\end{figure}

Theta and prism are relevant in this work, because of the following 
well-known lemma. The following lemma clearly implies that
(theta, prism)-free graphs form a superclass of
even-hole-free graphs.

\begin{lemma}
\label{lem:ehf-theta-prism}
  Every theta and every prism contains an even hole.
\end{lemma}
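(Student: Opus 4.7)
\medskip

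\noindent\textbf{Proof plan.} The plan is to handle both cases by the same parity trick: inside a theta or a prism, pairs of the three constituent paths yield three chordless cycles whose lengths have an even sum, so at least one of them is an even hole.

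First, consider a theta with internally disjoint chordless paths $P_1, P_2, P_3$ of lengths $\ell_1, \ell_2, \ell_3 \geq 2$ between common endpoints $a$ and $b$. For each pair $i \neq j$, the union $P_i \cup P_j$ is a cycle $C_{ij}$ of length $\ell_i + \ell_j \geq 4$. I would check that $C_{ij}$ is chordless: inside each $P_k$ there are no chords by hypothesis, and the only edges between $V(P_i)$ and $V(P_j)$ are the ones incident to $a$ and $b$, which lie on $C_{ij}$ itself. Hence each $C_{ij}$ is a hole. Now
\[
(\ell_1+\ell_2)+(\ell_1+\ell_3)+(\ell_2+\ell_3) \;=\; 2(\ell_1+\ell_2+\ell_3),
\]
which is even, so an even number of the three summands are odd; in particular at least one of them is even, giving an even hole.

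Next, consider a prism consisting of vertex-disjoint chordless paths $P_i$ from a triangle vertex to a triangle vertex, of length $\ell_i \geq 1$. For each pair $i \neq j$, concatenating $P_i$, one triangle edge, the reverse of $P_j$, and the other triangle edge produces a cycle of length $\ell_i+\ell_j+2 \geq 4$. The same check as before, using that the only inter-path edges are in the two triangles and that the edges used are the ones incident to the two vertices not on $P_i$ or $P_j$, shows there are no chords, so this cycle is a hole. Summing the three lengths gives $2(\ell_1+\ell_2+\ell_3)+6$, which is again even, so at least one of the three holes is even.

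The only nontrivial step is verifying the chordlessness of the cycles $C_{ij}$, but this follows directly from the exclusion clauses in the definitions of theta and prism (``no edges exist between the paths except\dots''), so there is no real obstacle.
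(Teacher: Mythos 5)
Your proof is correct and takes essentially the same approach as the paper: the paper's one-line argument observes that two of the three paths must have the same parity (so their union gives an even hole), while you phrase the same pigeonhole fact via "the sum of the three cycle lengths is even, so at least one is even." You also spell out the chordlessness check, which the paper leaves implicit.
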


\begin{proof}
It follows by the fact that there exist two paths
in a theta or in a prism that have same parity,
which induce an even hole.
\end{proof}

\section{Even-hole-free graphs excluding a minor}
\label{sec:minor}
In this section we prove an `induced grid theorem' for graphs excluding a fixed minor. From this we derive that even-hole-free graphs excluding a fixed minor have bounded tree-width.

We begin by defining grids and walls.
Let $n,m$ be integers with $n,m\geq 2$. An \emph{$(n\times m)$-grid} is 
%the Cartesian product of a path of length $n-1$ with a path of length $m-1$.
the graph $G_{n\times m}$ with $V(G_{n\times m})=[n]\times [m]$ and 
\[E(G_{n\times m})=\big\{\{(i_1,j_1),(i_2,j_2)\}\mid |i_1-i_2|+ |j_1-j_2|=1,i_1,i_2\in [n],j_1,j_2\in [m]\big\}.\]
Figure~\ref{fig:grid} shows $G_{5\times 5}$.

\begin{figure}[h]
\centering
	\newcommand{\Num}{4} %rows
	\newcommand{\Mum}{4} %cols
\begin{tikzpicture}[scale=.6, v/.style= {circle,draw,fill, scale=.2pt}, ]
                        foo/.style= {draw,circle,inner sep=2pt,fill}]

%vertices
\foreach \j in {0,...,4}{%
         \foreach \i in {0,...,4}{
				\node[v] (h\j;\i) at ({\j},{\i}) {};
			}
}
%horizontal edges
\foreach \j in {0,...,3}{%
         \foreach \i in {0,...,\Num}{
				\draw (h\j;\i) -- ({\j+1},{\i}) ;
			}
}

%vertical edges
\foreach \j in {0,...,\Mum}{%
         \foreach \i in {0,...,3}{
				\draw (h\j;\i) -- ({\j},{\i+1}) ;
			}
}

\end{tikzpicture}

	\caption{The $(5\times 5)$-grid $G_{5\times 5}$.}
\label{fig:grid}
\end{figure}
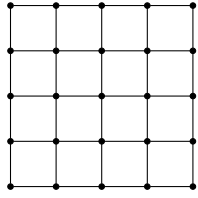

%\textbf{Walls and grids.}
Let $n,m \geq 2$ be integers. An \emph{elementary $(n\times m)$-wall} is a graph $G=(V,E)$ with vertex set
\begin{align*}V=&\big\{ (1,2j-1)\mid 1\leq j\leq m\big\} \cup 
	\big\{ (i,j)\mid 1 < i < n, 1 \leq j\leq 2m\big\}\\
	&\cup \big\{ (n,2j-1)\mid 1 \leq j\leq m, \text{ if $n$ is even}\big\} %\\
	%&
	\cup \big\{ (n,2j)\mid 1 \leq j\leq m, \text{ if $n$ is odd}\big\}
\end{align*}
and edge set
\begin{align*}E=& \big\{ (1,2j-1),(1,2j+1)\mid 1\leq j\leq m-1\big\} \cup
	\big\{ \{(i,j),(i,j+1)\}\mid 2\leq i< n,1\leq j< 2m\big\}\\
	 &\cup\big\{ \{(n,2j),(n,2j+2))\}\mid 1\leq j < m\text{ if $n$ is odd}\big\}\\ 
	 &\cup\big\{ \{(n,2j-1),(n,2j+1)\}\mid 1\leq j < m\text{ if $n$ is even}\big\}\\ 
	 &\cup\big\{ \{(i,j),(i+1,j)\}\mid 1\leq i<n, 1\leq j \leq 2m,\text{ $i,j$ odd}\big\}\\
	 &\cup\big\{\{ (i,j),(i+1,j)\}\mid 1\leq i< n, 1\leq j \leq 2m,\text{ $i,j$ even}\big\}. 
\end{align*}
Figure~\ref{fig:elementarywall} shows an elementary $(5\times 5)$-wall.

\begin{figure}
\centering
	\newcommand{\Num}{3} %rows
	\newcommand{\Mum}{8} %cols
\begin{tikzpicture}[scale=.6, v/.style= {circle,draw,fill, scale=.2pt}, ]
                        foo/.style= {draw,circle,inner sep=2pt,fill}]

\foreach \j in {0,...,\Mum}{%
         \foreach \i in {0,...,2}{
				\node[v] (h\j;\i) at ({\j},{\i}) {}; %nodes
			}
}
\foreach \j in {0,2,...,\Mum}{%
				\node[v] (h\j;\Num) at ({\j},{\Num}) {}; %nodes in first row
				\node[v] (hh\j;-1) at ({\j+1},{-1}) {}; %nodes in last  row
}

\foreach \j in {0,...,\Mum}{%
         \foreach \i in {0,...,2}{
				\draw ({\j},{\i}) -- ({\j+1},{\i}) ; %horizontal edges
			}
}
\foreach \j in {0,...,7}{%
				\draw ({\j},{\Num}) -- ({\j+1},{\Num}) ; %horizontal edges top
}

	\foreach \j in {0,...,7}{%
				\draw ({\j+2},{-1}) -- ({\j+1},{-1}) ; %horizontal edges bottom
}

\foreach \j in {0,2,...,\Mum}{% 
			\foreach \i in {0,2,...,\Num}{
				\draw ({\j},{\i}) -- ({\j},{\i+1}) ;  %vertical edges starting in odd rows
				\draw ({\j+1},{\i}) -- ({\j+1},{\i-1}) ; %vertical edges starting in even rows
			}
    }
\end{tikzpicture}

	\caption{The elementary $(5\times 5)$-wall.}
\label{fig:elementarywall}
\end{figure}
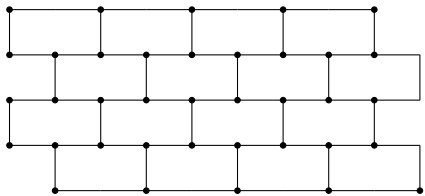

An elementary $(n\times m)$-wall has $n$ \emph{horizontal} paths, where
the first  horizontal path is induced by the vertex set 
$\big\{ (1,2j-1)\mid 1\leq j\leq 2m\big \}$,
the $i$th horizontal path is induced by the vertex set 
$\big\{ (i,j)\mid 1\leq j\leq 2m\big \}$. for $1 < i<n$, and the
$n$th horizontal path is induced by the vertex set
$\big\{ (n,2j-1)\mid 1\leq j\leq 2m\big \}$ if $n$ is odd, and by
$\big\{ (n,2j)\mid 1\leq j\leq 2m\big \}$ if $n$ is even.
An elementary $(n\times m)$-wall has $m$ \emph{vertical} paths, where
the $j$th vertical path is induced by the vertex set 
$\big\{ (i,2j-1),(i+1,2j-1)\mid 1\leq i< n, \text{ $i$ odd}\big\}\cup
\big\{ (i,2j),(i+1,2j)\mid 1\leq i< n, \text{ $i$ even}\big\}$,
for $1\leq j \leq m$. Figure~\ref{fig:vertical-paths} shows an elementary
wall with vertical paths.

\begin{figure}
\centering
	\newcommand{\Num}{3} %rows
	\newcommand{\Mum}{8} %cols

\begin{tikzpicture}[scale=.6, v/.style={circle,draw,fill,scale=.2pt}, ]
                        foo/.style={draw,circle,inner sep=2pt,fill}]%
\foreach \j in {0,...,8}{%
         \foreach \i in {0,...,2}{%
				\node[v] (h\j;\i) at ({\j},{\i}) {}; %nodes
			}%
}%
\foreach \j in {0,2,...,\Mum}{%
				\node[v] (h\j;\Num) at ({\j},{\Num}) {}; %nodes in first row
				\node[v] (hh\j;-1) at ({\j+1},{-1}) {}; %nodes in last  row
}%
\foreach \j in {0,...,\Mum}{%
         \foreach \i in {0,...,2}{%
				\draw ({\j},{\i}) -- ({\j+1},{\i}) ; %horizontal edges
			}%
}%
\foreach \j in {0,...,7}{%
				\draw ({\j},{\Num}) -- ({\j+1},{\Num}) ; %horizontal edges top
}%
	\foreach \j in {0,...,7}{%
				\draw ({\j+2},{-1}) -- ({\j+1},{-1}) ; %horizontal edges bottom
}%
\foreach \j in {0,2,...,\Mum}{% 
			\foreach \i in {0,2,...,\Num}{%
				\draw ({\j},{\i}) -- ({\j},{\i+1}) ;  %vertical edges starting in odd rows
				\draw ({\j+1},{\i}) -- ({\j+1},{\i-1}) ; %vertical edges starting in even rows
			}%
    }%
% first horizontal path
	 \draw[line width=3pt] (0,3) -- (0,2) ; 
	 \draw[line width=3pt] (0,1) -- (0,0) ; 
	 \draw[line width=3pt] (1,2) -- (1,1) ; 
	 \draw[line width=3pt] (1,0) -- (1,-1) ; 
	 \draw[line width=3pt] (0,0) -- (1,0) ; 
	 \draw[line width=3pt] (0,1) -- (1,1) ; 
	 \draw[line width=3pt] (0,2) -- (1,2) ; 
% third horizontal path
	 \draw[line width=3pt] (4,3) -- (4,2) ; 
	 \draw[line width=3pt] (4,1) -- (4,0) ; 
	 \draw[line width=3pt] (5,2) -- (5,1) ; 
	 \draw[line width=3pt] (5,0) -- (5,-1) ; 
	 \draw[line width=3pt] (4,0) -- (5,0) ; 
	 \draw[line width=3pt] (4,1) -- (5,1) ; 
	 \draw[line width=3pt] (4,2) -- (5,2) ; 
% fifth horizontal path
	 \draw[line width=3pt] (8,3) -- (8,2) ; 
	 \draw[line width=3pt] (8,1) -- (8,0) ; 
	 \draw[line width=3pt] (9,2) -- (9,1) ; 
	 \draw[line width=3pt] (9,0) -- (9,-1) ; 
	 \draw[line width=3pt] (8,0) -- (9,0) ; 
	 \draw[line width=3pt] (8,1) -- (9,1) ; 
	 \draw[line width=3pt] (8,2) -- (9,2) ; 
	 
\end{tikzpicture}
\caption{An elementary $(5\times 5)$-wall with three of the five vertical paths highlighted.}
\label{fig:vertical-paths}
\end{figure}
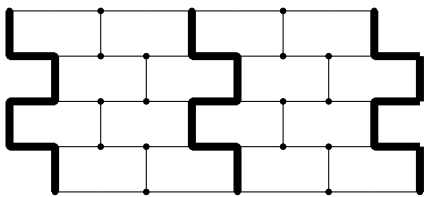

An \emph{$(n\times m)$-wall} is a subdivision of an elementary $(n\times m)$-wall.
An \emph{$(n\times m)$-wall} also has $n$ horizontal paths and $m$ vertical paths, which arise from
the paths of the underlying elementary wall including the subdivision vertices.

Robertson and Seymour's famous excluded grid theorem states that for any graph $G$, either $G$ has small tree-width or $G$ contains a large grid minor, thus identifying large grid minors as canonical obstructions to small tree-width. 

\begin{theorem}[Robertson and Seymour~\cite{RobertsonS86}]\label{thm:grid}
	There is a function $f$ such that for every $k\geq 1$ and every graph $G$, 
	if $\tw(G)\geq f(k)$ then $G$ contains the $(k\times k)$-grid as a minor.
\end{theorem}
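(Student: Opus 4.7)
The plan is to follow the classical route via \emph{brambles}, which give the cleanest handle on the tree-width hypothesis. A bramble in $G$ is a family of pairwise touching connected subgraphs, and by the Seymour--Thomas duality theorem the maximum order of a bramble (the minimum size of a hitting set) equals $\tw(G)+1$. So from $\tw(G)\ge f(k)$ one obtains a bramble $\mathcal{B}$ of order at least $f(k)+1$, i.e.\ no vertex set of size less than $f(k)+1$ meets every element of $\mathcal{B}$.

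The next step is to convert the bramble into a rich system of paths that can be reorganized into a grid. Pick a long sequence $B_1,\dots,B_N\in\mathcal{B}$ of distinct bramble elements. Between any $B_i$ and $B_j$, Menger's theorem and the order of $\mathcal{B}$ guarantee many internally disjoint paths in $G$. I would use these to build inductively two large families of internally disjoint paths: a "horizontal" family $H_1,\dots,H_k$, each meeting many bramble elements in a common order, and a "vertical" family $V_1,\dots,V_k$, each meeting bramble elements in a compatible (transverse) order. The existence of such families from an $N$-long bramble sequence, with $N$ a suitable function of $k$, is the technical core: it is obtained by repeated rerouting through bramble elements together with a pruning step that discards paths whose behaviour is inconsistent with the emerging grid pattern.

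Finally, one has to clean the intersection pattern of the $H_i$ with the $V_j$ to extract an honest grid minor. Even after the above step the horizontal and vertical paths may cross each other in a wild order, so a Ramsey-flavoured argument is applied to select a $(k\times k)$ subfamily whose pairwise crossings occur in the expected linear order. Once the crossings are well-ordered, contracting the path segments between consecutive crossings (and the excess pieces inside the bramble elements) yields the $(k\times k)$-grid as a minor.

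The main obstacle, and the source of the blow-up in $f(k)$, is precisely the cleaning step: extracting an ordered crossing pattern from an arbitrary one forces a multidimensional Ramsey-type reduction, which is what pushes $f(k)$ to a tower of exponentials in the original Robertson--Seymour argument. I would not attempt the polynomial bound of Chekuri and Chuzhoy here, since that requires a substantially different well-linked-set and routing framework; the qualitative statement above suffices for the applications in this paper.
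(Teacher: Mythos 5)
The paper does not actually prove this statement: Theorem~\ref{thm:grid} is Robertson and Seymour's excluded grid theorem, and it is invoked as an external, known result with a citation to~\cite{RobertsonS86}. There is no proof in the paper to compare against.

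As a standalone attempt, your sketch outlines the modern bramble-based route to the theorem (Seymour--Thomas duality turning large tree-width into a large-order bramble, then Menger plus a Ramsey-type cleaning of the crossing pattern to extract a grid minor), which is a legitimate expository approach---it is essentially the presentation found in Diestel's textbook rather than Robertson and Seymour's original argument in Graph Minors~V. However, your text is an outline, not a proof. The step you yourself identify as ``the technical core''---constructing, from a long sequence of bramble elements, two large families of internally disjoint paths that meet the brambles in consistent transverse orders, via ``repeated rerouting\dots together with a pruning step''---is precisely where all of the difficulty lives, and it is described rather than carried out. Nothing in the sketch explains why the rerouting terminates, why the pruning leaves enough paths, or how touching connected subgraphs are actually turned into Menger-linked vertex sets of the right size. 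Since the paper treats the theorem as a black box, no proof is required at this point in the manuscript; but judged as a proof attempt, the proposal has a genuine gap at its central step, and the concluding remark about tower-type bounds versus the Chekuri--Chuzhoy polynomial bound, while correct as background, does nothing to close it.
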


The theorem continues to hold if we replace \emph{grid} by \emph{wall}, and the latter have the advantage of having a maximum degree of three, which allows finding walls as \emph{subgraphs}.
Recall Theorem~\ref{thm:wall}.
	\wall*

Obviously, this cannot be strengthened to finding walls as \emph{induced} subgraphs, 
because the complete graph $K_n$ has tree-width $n-1$ and only contains complete graphs as induced subgraphs.
A graph $G$ is called \emph{chordless}, if no cycle of length at least $4$ in $G$ 
has a chord in $G$.
%The function $f$ was improved to $20^{2k^5}$ in~\cite{RobertsonST94}, and down to a polynomial 
%in~\cite{ChekuriC16}.
For graphs excluding a fixed minor, we now prove an `induced grid theorem'.
Recall Theorem~\ref{thm:ind-wall}.
	\indwall*

Observe that the line-graph of a chordless wall has degree at most $3$.
For the proof of Theorem~\ref{thm:ind-wall} we need some lemmas and notation.
A \emph{fork} is a tree with exactly three leaves.  
A \emph{semi-fork} is a graph obtained from a triangle by appending disjoint 
paths of length at least $1$ at each vertex of the triangle. 
Note that both a fork and a semi-fork have precisely three degree-one vertices.

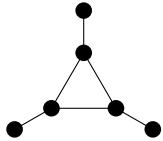
\begin{figure}[h]
\begin{center}	
\begin{tikzpicture}
\begin{scope}[scale=.7]
	\vertexnodes 
	\coordinate(top) at (90:0.7cm);
	\coordinate(left) at (210:0.7cm);
	\coordinate(right) at (-30:0.7cm);
	\coordinate(extop) at (90:1.5cm);
	\coordinate(exleft) at (210:1.5cm);
	\coordinate(exright) at (-30:1.5cm);
	\draw  (top) node{} -- (left) node{} -- (right) node{} -- cycle;
	\draw (top) -- (extop) node{};
	\draw (left) -- (exleft) node{};
	\draw (right) -- (exright) node{};
\end{scope}
\end{tikzpicture}
\end{center}
\caption{The net graph.}\label{fig:net-graph}
\end{figure}
A semi-fork obtained from a triangle by  appending disjoint 
paths of length \emph{exactly} $1$ at each vertex of the triangle is called a \emph{net graph} (cf.\ Figure~\ref{fig:net-graph}).
Let $G$ be a graph and let $v\in V(G)$ be a vertex of degree $3$ in $G$ with neighbors $a,b,c$. 
The graph obtained by a \emph{net graph replacement at} $v$ is the graph $H$, with
$V(H)=(V(G)\setminus \{v\})\cup \{x,y,z\}$, where $x,y,z$ are new vertices, i.\,e.\ 
$x,y,z\notin V(G)$, and $E(H)=E(G\setminus v)\cup\{xy,yz,xz\}\cup \{xa,yb,zc\}$.

The `grid-like' configurations we aim for are walls and line graphs of walls. On the way, we also encounter 
a slightly untidier `mix' of both, that we call stone wall. An \emph{$(n\times m)$-stone wall}
a graph obtained from a wall $W$ by picking a (possibly empty) subset $X$ of the degree-$3$-vertices of 
$W$ and performing net graph replacements at each vertex in $X$.
Note that if we perform net graph replacements at all degree-$3$-vertices of the wall $W$, the resulting
graph is the line graph of a wall (namely of the wall $W'$ obtained from $W$ by adding an additional
subdivision vertex on each path that connects two degree-$3$-vertices). Observe that
stone walls have maximum degree $3$.
A stone wall is \emph{homogeneous}, if it is either a wall or the line graph of a wall. Note that because stone walls have degree at most $3$, 
if a homogeneous stone 
wall is the line graph of a wall $W$, $W$ must be chordless.

A \emph{triangulated $(n\times m)$-grid} is the graph $G^{\triangle}_{n\times m}$ with $V(G^{\triangle}_{n\times m})=V(G_{n\times m})$ and
$E(G^{\triangle}_{n\times m})=E(G_{n\times m}) \cup \big\{ \{(i,j),(i-1,j+1)\}\mid 1<i\leq m, 1\leq j<m \big\}$, see Figure~\ref{fig:trigrid}.
The vertices $(1,1)$, $(1,m)$, $(n,1)$ and $(n,m)$ are called the \emph{corners} of the (triangulated) grid.

%The graph $\Gamma_k$ is defined as follows. \isolde{add definition and figure}.
%The graph $\Pi_k$ is the graph obtained from $\Gamma_k$ by adding a new (universal) vertex with edges to all vertices in $\Gamma_k$.
%\isolde{add figure}

\begin{figure}
\centering
	\newcommand{\Num}{4} %rows
	\newcommand{\Mum}{4} %cols
\begin{tikzpicture}[scale=.6, v/.style= {circle,draw,fill, scale=.2pt}, ]
                        foo/.style= {draw,circle,inner sep=2pt,fill}]

%vertices
\foreach \j in {0,...,\Mum}{%
         \foreach \i in {0,...,\Num}{
				\node[v] (h\j;\i) at ({\j},{\i}) {};
			}
}
%horizontal edges
\foreach \j in {0,...,3}{%
         \foreach \i in {0,...,\Num}{
				\draw (h\j;\i) -- ({\j+1},{\i}) ;
			}
}

%vertical edges
\foreach \j in {0,...,\Mum}{%
         \foreach \i in {0,...,3}{
				\draw (h\j;\i) -- ({\j},{\i+1}) ;
			}
}
%diagonal edges
%horizontal edges
\foreach \j in {0,...,3}{%
         \foreach \i in {0,...,3}{
				\draw (h\j;\i) -- ({\j+1},{\i+1}) ;
			}
}

\end{tikzpicture}
	\caption{The triangulated $(5\times 5)$-grid $\trigrid{5}$.}
	\label{fig:trigrid}
\end{figure}

The following lemma shows that a large wall contains many smaller induced subwalls.

\begin{lemma}\label{lem:subwalls}
	For $n,m\in \mathbb N$, let $W$ be an $(n\times m)$-wall.
	Let $X$ be a set of pairwise non-adjacent rows of $W$ and let $Y$ be a set of 
	pairwise non-adjacent columns of $W$.  
	Then $W$ contains an induced $(|X|\times |Y|)$-wall $W'$, 
	with $V(W')\subseteq \bigcup_{P\in X}V(P) \cup \bigcup_{Q\in X}V(Q)$.
\end{lemma}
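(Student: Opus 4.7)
The plan is to explicitly construct the induced subwall $W'$ and verify its structure. Enumerate the rows of $X$ as $P_{i_1},\dots,P_{i_k}$ with $i_1<\dots<i_k$ and the columns of $Y$ as $Q_{j_1},\dots,Q_{j_\ell}$ with $j_1<\dots<j_\ell$, where $k=|X|$ and $\ell=|Y|$. The non-adjacency assumption translates into $i_{s+1}\geq i_s+2$ and $j_{t+1}\geq j_t+2$ for all appropriate $s$ and $t$, so between two consecutive selected rows (resp.\ columns) there is always an unselected row (resp.\ column). For each $s\in[k]$, let $P'_s$ be the minimal subpath of $P_{i_s}$ that contains every intersection vertex of $P_{i_s}$ with $Q_{j_1}$ or $Q_{j_\ell}$; symmetrically, let $Q'_t$ be the minimal subpath of $Q_{j_t}$ that contains every intersection vertex of $Q_{j_t}$ with $P_{i_1}$ or $P_{i_k}$. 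Set $V' := \bigcup_s V(P'_s) \cup \bigcup_t V(Q'_t)$ and $W' := W[V']$. By construction $V(W') \subseteq \bigcup_{P\in X}V(P) \cup \bigcup_{Q\in Y}V(Q)$.

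The next step is to show that $W'$ is a subdivision of the elementary $(k\times \ell)$-wall. The idea is that the $P'_s$ play the role of its $k$ horizontal paths and the $Q'_t$ the role of its $\ell$ vertical paths. The portion of $Q'_t$ strictly between its intersections with consecutive selected rows $P_{i_s}$ and $P_{i_{s+1}}$ (and the analogous portions of $P'_s$ between consecutive selected columns) realize the subdivisions of the ``unit'' edges of the elementary $(k\times\ell)$-wall.

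The key technical check is that $W[V']$ contains no edges beyond those of this prescribed subdivision. There are two sorts of potentially spurious edges to exclude. An edge of $W$ between vertices of two distinct rows in $X$ is impossible by the non-adjacency of $X$, and symmetrically for two distinct columns in $Y$. A vertex $v\in V(Q'_t)$ lying in an unselected row of $W$ has its wall-horizontal neighbours in columns of $W$ adjacent to $Q_{j_t}$; by the non-adjacency of $Y$ these columns lie outside of $Y$, and as the row is unselected these neighbours lie outside $V'$. A symmetric argument handles horizontal edges from $P'_s$ in unselected columns. It follows that $W[V']$ is exactly the desired subdivision, and hence a $(k\times\ell)$-wall.

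The main obstacle I anticipate is the bookkeeping around how a horizontal and a vertical path of $W$ intersect: depending on the parity defining the elementary wall, the intersection can consist of one vertex (when it is at a boundary row or column of $W$) or of two consecutive vertices (in the interior); these cases must be tracked carefully to match the intersection pattern of the target elementary $(k\times\ell)$-wall. This is routine but tedious and is the only real piece of care the argument requires.
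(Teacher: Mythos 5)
Your proposal is correct and takes essentially the same approach as the paper's proof: the paper's argument simply forms the induced subgraph on the union of the selected rows and columns and repeatedly prunes degree-$1$ vertices, which produces exactly the union of minimal subpaths $P'_s$, $Q'_t$ you construct; your verification that no spurious edges appear is precisely the (unstated) content justifying that this pruned subgraph is a wall. One small point: the paper explicitly assumes $|X|,|Y|\geq 2$, which is needed for ``$(|X|\times|Y|)$-wall'' to be well-defined, and your claim that non-adjacency forces $i_{s+1}\geq i_s+2$ holds for elementary walls but not necessarily when the vertical connector paths are subdivided (though the rest of the argument does not actually depend on this index-gap reformulation).
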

\begin{proof}
	Assume that $|X|\geq 2$ and $|Y|\geq 2$. We obtain $W'$ by taking the subgraph of $W$ induced by the set $\bigcup_{P\in X}V(P) \cup \bigcup_{Q\in X}V(Q)$
	and repeatedly deleting degree-$1$-vertices until all vertices have degree at least $2$.
\end{proof}

%%%%%%%%%%%%%%%%%%%%%%%%%%%%%%%%%%%%%%%%%%%%%%%%%%%%
%\section{Induced configurations in minor-free graphs of large treewidth}
%\label{sec:induced-grid}

For the proof of Theorem~\ref{thm:ind-wall}, we use a corollary of the main result of \cite{FominGT11}, and we need the notion of \emph{contraction}.
Let $G$ and $H$ be graphs. If $H$ can be obtained from $G$ by a
sequence of edge contractions, then $H$ is called a \emph{contraction}
of $G$.  Alternatively, contractions can be defined via mappings as
follows.  Let $G$ and $H$ be graphs and let $\phi\colon V(G)\to V(H)$
be a surjective mapping such that
\begin{enumerate}
\item for every vertex $v\in V(H)$, its pre-image $\phi^{-1}(v)$ is
  connected in $G$,
\item for every edge $uv\in E(H)$, the graph
  $G[\phi^{-1}(u)\cup\phi^{-1}(v)]$ is connected,
\item for every edge $xy\in E(G)$, either $\phi(u)=\phi(v)$ or
  $\phi(u)\phi(v)\in E(H)$.
\end{enumerate}

\begin{corollary}[Fomin, Golovach and Thilikos \cite{FominGT11}]\label{cor:contractions}
Let $H$ be a graph and let $G$ be a graph excluding $H$ as a minor.
There exists a constant $c_H$ such that if $\tw(G) \geq c_H \cdot
	(k+1)^2$, then $G$ contains an induced subgraph %$G'$
	that contains $\trigrid{k}$ as a contraction.
\end{corollary}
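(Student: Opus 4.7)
The plan is to derive this corollary from the main result of Fomin, Golovach, and Thilikos~\cite{FominGT11}. Roughly speaking, their theorem asserts that for every graph $H$, there is a constant $c_H$ such that every $H$-minor-free graph $G$ of tree-width at least $c_H(k+1)^2$ admits a subgraph $G' \subseteq G$ together with a contraction mapping $\phi \colon V(G') \to V(\trigrid{k})$. My goal is to upgrade this to an \emph{induced} subgraph, i.e.\ to exhibit an induced subgraph $G'' \subseteq G$ that contracts exactly to $\trigrid{k}$.

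First I would apply FGT's theorem with a parameter $k'$ slightly larger than $k$, producing branch sets $B_v := \phi^{-1}(v)$ for $v \in V(\trigrid{k'})$. Setting $V' := \bigcup_v B_v$ and taking $G'' := G[V']$, every branch set $B_v$ remains connected in $G''$ (adding more edges only helps), and every grid edge of $\trigrid{k'}$ is still witnessed by a cross-branch edge. The only obstruction preventing $G''$ from contracting cleanly to $\trigrid{k'}$ is the possible presence of \emph{stray edges}: edges of $G$ between branch sets $B_u$ and $B_v$ whose images $u,v$ are non-adjacent in $\trigrid{k'}$. Such edges would violate condition~(3) of the contraction mapping.

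The key step is to eliminate stray edges by passing to a sub-triangulated-grid. Because $G$ is $H$-minor-free, it has bounded average degree depending only on $H$, which bounds the total number of stray edges linearly in $|V'|$. Using this density bound together with the rigid structure of $\trigrid{k'}$, I would extract a $\trigrid{k}$ inside $\trigrid{k'}$ whose associated branch sets have no stray edges between them. Concretely, one can iteratively discard rows and columns of $\trigrid{k'}$ whose branch sets participate in too many stray edges, or apply a Ramsey-style selection to rows and columns, preserving the triangulated-grid structure on the surviving indices. The induced subgraph of $G$ on the union of these surviving branch sets then satisfies all three contraction conditions with respect to $\trigrid{k}$.

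The main obstacle will be controlling how much larger $k'$ must be than $k$ so that the cleanup leaves a surviving grid of side $k$, while keeping the final tree-width bound quadratic in $k$. Since the sparsity guarantee coming from $H$-minor-freeness costs only a constant factor depending on $H$, I expect to reach the claimed bound of the form $c_H(k+1)^2$ by inflating the constant $c_H$ relative to the original FGT constant, without changing the quadratic dependence in $k$.
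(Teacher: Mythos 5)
Your proposal rests on a misreading of what Fomin--Golovach--Thilikos actually prove. Their theorem does not produce a \emph{subgraph} of $G$ equipped with a minor model (which could leave stray edges when you pass to the induced subgraph); it asserts that a connected $H$-minor-free graph of large tree-width has the triangulated-grid-like graph $\Gamma_k$ as a \emph{contraction of $G$ itself}. Recall that a contraction mapping $\phi\colon V(G)\to V(\Gamma_k)$ is surjective, defined on all of $V(G)$, and satisfies condition~(3) in the paper's definition: for every edge $xy\in E(G)$, either $\phi(x)=\phi(y)$ or $\phi(x)\phi(y)\in E(\Gamma_k)$. This condition already forbids the ``stray edges'' you are worried about. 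The entire apparatus you set up --- bounding stray edges by $H$-minor-freeness, discarding rows and columns, Ramsey-style selection --- is solving a problem that does not arise when the source theorem is quoted correctly.

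The derivation that the authors intend, and that the word ``corollary'' signals, is much shorter. First, if $G$ is disconnected, replace it by a connected component of maximum tree-width, which is an induced subgraph; this is the only point where connectivity matters. Second, apply the FGT theorem to obtain $\Gamma_{k'}$ (for a suitable $k'$ of the same order as $k$) as a contraction of this component. Third, $\Gamma_{k'}$ differs from the plain triangulated grid $\trigrid{k}$ only by some extra vertices/edges on its boundary, so $\trigrid{k}$ is obtained from $\Gamma_{k'}$ by deleting vertices, i.e.\ $\trigrid{k}$ is an induced subgraph of $\Gamma_{k'}$. Deleting a vertex $v$ of the contracted graph corresponds to deleting its entire preimage $\phi^{-1}(v)$ in $G$, which is an induced-subgraph operation on $G$; the remaining branch sets still form a contraction onto $\trigrid{k}$ because condition~(3) held globally. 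Composing these steps gives precisely the statement: an induced subgraph of $G$ that contracts to $\trigrid{k}$.

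Even taken on its own terms, your cleanup step is not established. The number of stray edges is bounded only linearly in $|V'|=\sum_v|B_v|$, which can be arbitrarily large compared to $(k')^2$, so discarding a row or column per stray edge does not terminate with a grid of the required size. A single large branch set could carry many stray edges, and there is no argument given that a Ramsey-style selection on rows and columns can simultaneously kill all stray edges while preserving a $k\times k$ triangulated sub-grid. This kind of ``densify a minor model into an induced minor'' step is genuinely delicate and is in fact close to the technical core of results like Theorem~\ref{thm:ind-wall}; it cannot be dispatched with the sketch you give.
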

%Corollary~\ref{cor:contractions} is an immediate consequence of the following theorem.
%%Let $\Gamma_k$ be as in their paper.  Let $\Pi_k$ be the graph obtained 
%%from $\Gamma_k$ by adding a vertex adjacent to all vertices of $\Gamma_k$.
%\isolde{give details, define $\Gamma_k$, $\Pi_k$}
%
%\begin{theorem}[Fomin, Golovach and Thilikos \cite{FominGT11}]\label{thm:contractions}
%Let $H$ be a graph and let $G$ be a connected graph excluding $H$ as a minor.
%There exists a constant $c_H$ such that if $\tw(G) \geq c_H \cdot
%k^2$, then $G$ contains either $\Gamma_k$ or $\Pi_k$ as a contraction.
%\end{theorem} 
%
%

The following lemma will help us to find a large stone wall in a graph containing a 
large triangulated grid as a contraction.

\begin{lemma}\label{lem:forks}
Let $G$ be a connected graph whose vertex set is
  partitioned into connected sets $A$, $A'$, $B$, $B'$, $C$, $C'$ and
  $S$.  Suppose that every edge of $G$ has either both ends in one of
  the sets, or is from $A'$ to $A$, from $B'$ to $B$, from $C'$ to $C$,
  or from $S$ to $A\cup B \cup C$.

    If $a\in A'$, $b\in B'$ and $c\in C'$, then $a$, $b$ and $c$ are
    the degree one vertices of some induced fork or semi-fork of $G$.
\end{lemma}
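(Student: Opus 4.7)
The plan is to argue by minimality. First I would exhibit a candidate induced subgraph $F_0$ of $G$ in which $a,b,c$ have degree one. Let $\pi_a,\pi_b,\pi_c$ be shortest paths in $G$ from $a,b,c$ to $S$, ending at $s_a,s_b,s_c \in S$ respectively; by inspection of the allowed edge types, $\pi_a$ stays inside $A\cup A'\cup \{s_a\}$ (and similarly for $\pi_b,\pi_c$), since a shortest path from $A'$ to $S$ cannot pass through $B\cup B'\cup C\cup C'$ and cannot re-enter $S$ after first reaching it. Let $T_S$ be any connected subgraph of $G[S]$ containing $s_a,s_b,s_c$ (which exists because $S$ is connected) and set $F_0 := \pi_a\cup\pi_b\cup\pi_c\cup T_S$. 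Because $N_G(a)\subseteq A\cup A'$ with $V(F_0)\cap(A\cup A')=V(\pi_a)\setminus\{s_a\}$, and since $\pi_a$ is an induced path, $a$ has exactly one neighbor in $V(F_0)$; the analogous statements hold for $b,c$. So $F_0$ is valid.

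Now let $F$ be a connected induced subgraph of $G$ containing $a,b,c$ such that $\deg_F(a)=\deg_F(b)=\deg_F(c)=1$ and $|V(F)|$ is minimum. Denote by $a_1,b_1,c_1$ the unique $F$-neighbors of $a,b,c$. By minimality, every vertex $v\in V(F)\setminus\{a,b,c,a_1,b_1,c_1\}$ must be a cut vertex of $F$, for otherwise $F\setminus v$ would be a strictly smaller valid subgraph. Moreover $a_1,b_1,c_1$ are themselves cut vertices of $F$ (each is shared between its leaf block and at least one other block of $F$). Consequently, in any $2$-connected block $B$ of $F$ with $|V(B)|\geq 3$, every vertex of $B$ is a cut vertex of $F$: indeed a non-cut-vertex of $F$ in $B$ would have to lie in $\{a_1,b_1,c_1\}$, but those are cut vertices. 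Thus the degree of $B$ in the block-cut tree of $F$ equals $|V(B)|$.

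The block-cut tree of $F$ is a tree whose leaves are exactly the three single-edge pendant blocks $\{a,a_1\}, \{b,b_1\}, \{c,c_1\}$. A tree with three leaves has precisely one internal node of degree $3$ (the "branching") and all other internal nodes of degree $2$. The preceding paragraph then forces every non-branching block to be a single edge, and the branching block (if the branching is a block rather than a cut vertex) to have exactly three vertices, i.e., to be a triangle. Therefore $F$ is either: a tree consisting of three paths meeting at a single cut-vertex branching, giving an induced fork with leaves $a,b,c$; or a triangle together with three pendant paths of length at least one ending at $a,b,c$, giving an induced semi-fork.

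The principal technical obstacle is the block-cut-tree analysis, particularly the claim that a $2$-connected block of $F$ with at least three vertices must have degree in the block-cut tree equal to its number of vertices, so that the branching block (if present) is forced to be a triangle rather than a larger $2$-connected subgraph. This relies crucially on $a_1, b_1, c_1$ being cut vertices of $F$, which prevents them from serving as interior non-cut-vertices of any larger $2$-connected block; combined with the minimality-based exclusion of any other interior vertex, the interior of such a block is empty.
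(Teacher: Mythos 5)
Your proof is correct, and it takes a genuinely different route from the paper's. The paper's proof is a direct construction: it takes a shortest $b$--$c$ path $P$ in $G[B'\cup B\cup S\cup C\cup C']$, then a shortest path $Q$ from $a$ in $G[A'\cup A\cup S]$ to a vertex $w$ with a neighbor on $P$, and does a three-way case analysis on how $w$ attaches to $P$ (one neighbor: fork; an edge of $P$: semi-fork; two non-adjacent neighbors: a different fork using only part of $P$). Your argument instead establishes an abstract extremal principle: once you exhibit \emph{any} connected induced subgraph $F_0$ in which $a,b,c$ have degree one, a vertex-minimal such subgraph $F$ is forced to be a fork or semi-fork by a block--cut-tree count. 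The key estimate---that a $2$-connected block on $\geq 3$ vertices consists entirely of cut vertices of $F$ (since the only candidate non-cut vertices $a_1,b_1,c_1$ were shown to be cut vertices, and $a,b,c$ have degree one so cannot appear in such a block), hence has degree $|V(B)|$ in the block--cut tree, which is a tree with exactly three leaves---is sound. Your use of the partition hypothesis is confined to producing the initial $F_0$; after that the argument is purely structural and would yield the more general statement that whenever $a,b,c$ are simultaneously degree-one vertices of some connected induced subgraph, they are the leaves of an induced fork or semi-fork. The trade-off is that your proof is longer and nonconstructive (it produces $F$ by minimization rather than by an explicit recipe), whereas the paper's proof hands you the fork or semi-fork directly as a union of named shortest paths, which keeps the argument short and self-contained for this one application.
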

\begin{proof}
	Let $P$ be a shortest path from $b$ to $c$ in
    $G[B'\cup B \cup S \cup C \cup C']$.  Note that $P$ must go
    through $S$.  Let $Q = a \dots w$ in $G[A'\cup A\cup S]$ be a
    shortest path such that $w$ has some neighbors in $P$.  Let $u$
    (resp.\ $v$) be the neighbor of $w$ in $P$ closest to $b$ (resp.\
    to $c$) along $P$.  Note that $u\neq b$ and $v\neq c$.  If $u=v$,
    then $P$ and $Q$ form a fork. If $u$ is adjacent to $v$, then $P$
    and $Q$ form a semi-fork. If $u\neq v$ and $uv\notin E(G)$, then
    $aPw$, $bQu$ and $cQv$ form a fork.  In all cases, $a$, $b$ and
    $c$ are the three vertices of degree 1 of the fork or semi-fork. 
\end{proof}

For tidying up stone walls, we make use of a natural variant of Ramsey's Theorem for bipartite graphs, first introduced by Beineke and Schwenk in 1975.

\begin{theorem}[Beineke and Schwenk~\cite{BeinekeS75}]\label{thm:bipartite-ramsey}
	For every integer $r\geq 1$ there exists a smallest positive integer $n=n(r)$, such that any 2-edge-coloring of the complete 
	bipartite graph $K_{n,n}$ contains a monochromatic $K_{r,r}$.
\end{theorem}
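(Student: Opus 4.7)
The plan is to prove existence of $n(r)$ by a standard averaging argument, essentially the reasoning underlying the K\H{o}v\'{a}ri--S\'{o}s--Tur\'{a}n theorem. Fix $r\geq 1$. Consider $K_{n,n}$ with bipartition $(A,B)$, where $n$ will be chosen as a function of $r$, and fix an arbitrary $2$-edge-coloring. For each $a\in A$ let $R(a)\subseteq B$ be its red neighborhood; since $|R(a)|+|B\setminus R(a)|=n$, at least one of these two neighborhoods has size at least $\lceil n/2\rceil$. Pigeonhole on the two color options then yields a subset $A'\subseteq A$ with $|A'|\geq \lceil n/2\rceil$ all of whose vertices share the same majority color; without loss of generality this color is red, so $|R(a)|\geq \lceil n/2\rceil$ for every $a\in A'$.

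Next I would double-count the pairs $(a,S)$ with $a\in A'$ and $S$ an $r$-element subset of $R(a)$. Summing over $a\in A'$ gives a total of at least $|A'|\binom{\lceil n/2\rceil}{r}$ such pairs. Since there are only $\binom{n}{r}$ possible $r$-subsets of $B$, averaging produces a specific $S^{\star}\in\binom{B}{r}$ with
\[\left|\{a\in A' : S^{\star}\subseteq R(a)\}\right| \;\geq\; \frac{|A'|\binom{\lceil n/2\rceil}{r}}{\binom{n}{r}}.\]
If this quantity is at least $r$, then any $r$ such witnesses $a_1,\dots,a_r$ together with $S^{\star}$ span a monochromatic (red) $K_{r,r}$, which is exactly what we need.

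It therefore suffices to choose $n$ large enough for the right-hand side above to reach $r$. Since $\binom{\lceil n/2\rceil}{r}/\binom{n}{r}$ is bounded below by roughly $2^{-r}$ for $n\gg r$, a choice of $n$ on the order of $r\cdot 2^{r+1}$ comfortably suffices, and the smallest admissible $n$ is then $n(r)$ by definition. I do not anticipate any serious obstacle here: the argument is classical, and the only mild care needed lies in tracking the parameters in the final estimate. If a tighter bound on $n(r)$ were desired, one could optimize by working with the two colors symmetrically (counting both red and blue contributions) rather than throwing away half the vertices at the pigeonhole step.
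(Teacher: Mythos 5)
The paper does not give a proof of this statement; it is cited as a known theorem of Beineke and Schwenk, with the bound $n(r)\leq 2^r(r-1)+1$ attributed to Thomason~\cite{Thomason82}. There is therefore nothing in the paper to compare against, but your averaging argument is a correct, self-contained proof of existence. It is exactly the K\H{o}v\'{a}ri--S\'{o}s--Tur\'{a}n double count applied to the majority color class, and it gives a bound of the same order $r\cdot 2^{\Theta(r)}$ as the one the paper cites. Two small remarks: the step from ``the average is at least $r$'' to ``some $S^\star$ is contained in at least $r$ red neighborhoods'' uses that the relevant counts are integers and $r$ is an integer, which is fine but worth stating; and your closing estimate that $\binom{\lceil n/2\rceil}{r}/\binom{n}{r}$ is ``roughly $2^{-r}$'' needs $n$ to be somewhat larger than $r$ (say $n\geq r^2$) before the factors $\frac{n/2-i}{n-i}$ are all close to $1/2$, so the final constant in front of $2^{r}$ is a bit larger than your $2r$ if you want to be careful. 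The ``smallest $n$'' phrasing in the statement then follows by well-ordering once any admissible $n$ is exhibited, which you have done.
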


In~\cite{Thomason82} it was shown that $n(r)\leq 2^r(r-1)+1$. 

The next lemma shows that any sufficiently large stone wall also contains a large homogeneous
stone wall as induced subgraph.

\begin{lemma}\label{lem:tidy-stonewall}
	For every integer $r\geq 2$ there exists an integer $n=n(r)$ such that every $(n\times n)$-stone wall contains a homogeneous $(r\times r)$-stone wall as induced subgraph.
\end{lemma}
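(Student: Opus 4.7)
The plan is to apply a bipartite Ramsey argument, based on Theorem~\ref{thm:bipartite-ramsey}, to extract from the given $(n\times n)$-stone wall $S$ a large substructure on which the net replacements occur uniformly — either all or none of the relevant degree-$3$ vertices are replaced — so that the resulting induced subgraph is a wall or the line graph of a chordless wall. Let $W$ be the underlying wall of $S$ and let $X\subseteq V(W)$ be the set of degree-$3$ vertices of $W$ at which net graph replacements were performed. Denote the horizontal paths of $W$ by $H_1,\ldots,H_n$ and the vertical paths by $V_1,\ldots,V_n$; each degree-$3$ vertex of $W$ lies on a unique row and a unique column.

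I would $2$-color each degree-$3$ vertex of $W$ by its membership in $X$, and then assign to each row-column pair $(H_i,V_j)$ a \emph{pattern color} that records the colors of the (at most two) degree-$3$ vertices lying in both $H_i$ and $V_j$. Because the pattern palette is of constant size, iterating Theorem~\ref{thm:bipartite-ramsey} (boosting bipartite two-color Ramsey to finitely many colors in the standard way) gives, for $n$ sufficiently large in terms of $r$, a set of $R$ rows and $R$ columns on which every row-column pair has the same pattern. Taking every other row and every other column from this monochromatic set yields $r$ pairwise non-adjacent rows and $r$ pairwise non-adjacent columns (provided $R\geq 2r+1$), and Lemma~\ref{lem:subwalls} then extracts an induced $(r\times r)$-wall $W'$ of $W$.

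To finish I would case-analyze the common pattern color. If the pattern records ``not in $X$'' for all intersection vertices, then $V(W')\cap X=\emptyset$, so no vertex of $V(W')$ is net-replaced in $S$ and the induced subgraph $S[V(W')]$ equals $W'$, a homogeneous wall. If the pattern records ``in $X$'' for all intersection vertices, then every degree-$3$ vertex of $W'$ is net-replaced, and the induced subgraph of $S$ on $(V(W')\setminus X)$ together with the triangle vertices introduced by the net replacements is, by the observation already made in the paper, the line graph of the chordless wall obtained from $W'$ by adding one extra subdivision vertex on each length-$1$ path between two degree-$3$ vertices, and therefore homogeneous.

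The main obstacle I anticipate is the possibility that the monochromatic pattern is \emph{mixed} within a single pair (one intersection vertex in $X$ and the other not), which would break the clean case analysis above. I expect to handle this by a further refinement, for instance by taking every other row to reduce the number of degree-$3$ vertices in each pair to exactly one, and then reapplying Theorem~\ref{thm:bipartite-ramsey} on the resulting simpler coloring; this eliminates mixed patterns at the cost of only a constant-factor inflation of $n(r)$. Combined with the quantitative Beineke--Schwenk bound, this yields an explicit, if exponential, upper bound on $n(r)$.
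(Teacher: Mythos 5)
Your outline matches the paper's approach for the uniform cases, but the way you propose to resolve the mixed case has a genuine gap, and the mixed case is precisely the nontrivial part of this lemma.

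Like the paper, you $2$-color the degree-$3$ vertices of the underlying wall (in $X$ or not), build a complete bipartite graph on rows and columns, color each row--column pair by the pattern of its two intersection vertices, and apply bipartite Ramsey (Theorem~\ref{thm:bipartite-ramsey}) to get a monochromatic block. The ``both replaced'' and ``both unreplaced'' cases then go through exactly as you describe, and correctly produce (after applying Lemma~\ref{lem:subwalls}) a line graph of a chordless wall or a wall, respectively.

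The problem is your treatment of the mixed colors. You propose to ``take every other row to reduce the number of degree-$3$ vertices in each pair to exactly one, and then reapply Ramsey.'' This does not do what you want. In a wall, every horizontal path and every vertical path share exactly \emph{two} degree-$3$ vertices, and this is forced by the brick structure; it is not a consequence of which rows and columns you keep. When you pass from $W$ to a subwall $W'$ by choosing alternate rows and columns via Lemma~\ref{lem:subwalls}, the degree-$3$ vertices of $W'$ at the intersection of a chosen row with a chosen column are the \emph{same two} vertices that row and column shared in $W$, with the same colors. Thinning rows therefore leaves the mixed pattern intact, and reapplying Ramsey just reproduces the same monochromatic mixed block. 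So the iteration never terminates with a uniform block, and your argument as written does not produce a homogeneous stone wall in the white/black case.

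What the paper does in this case is qualitatively different: after Ramsey it extracts a subwall in which every intersection contains one vertex of each color, and then performs a local \emph{rerouting} of the horizontal and vertical paths around each intersection (Figure~\ref{fig:case-a}) so that, after rerouting, both degree-$3$ vertices at the intersection are of the same color (say green). This works because adjacent degree-$3$ vertices alternate in color in the monochromatic-white block, so one can trade a red corner for a nearby green one by bending the paths. The spacing requirement (pairwise distance $10$ between chosen rows and between chosen columns) is there to make these reroutings pairwise disjoint. That rerouting idea is the missing ingredient in your proposal; without it, or some equivalent device for shifting which degree-$3$ vertices the subwall uses, the mixed case remains open.
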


\begin{proof}
 
	Given $r$, let $n=n(r)$ be large enough. Given an $(n\times n)$-stone wall $W$, we define
	an auxiliary wall $W'$, which is obtained from $W$ by contracting every triangle. Each vertex
	in $W'$ that is the result of contracting a triangle is colored red (red encodes `semi-fork'), 
	and all other degree-$3$-vertices of $W'$ are colored green (green encodes `fork').

   Define a complete bipartite graph $H$ with $V(H)=A \cup B$ as follows. 
	The elements of $A$ are horizontal paths of $W'$, and the elements of $B$ are
	the vertical paths in $W'$.
	Note that each vertical path has two colored vertices in common with each horizontal path.
 
 We fix an orientation of the horizontal paths `from left to right'.
	Now we color the edges of $H$ with four colors.
	Let $P\in A$ be a horizontal path and let $Q\in B$ be a vertical path.\\
	1. If $V(P)\cap V(Q)\subseteq V(W')$ contains two green vertices, 
	 we color the edge $PQ$ green.\\
	2. If $V(P)\cap V(Q)\subseteq V(W')$ contains two red vertices, 
	 we color the  edge $PQ$ red.\\
	3. If $V(P)\cap V(Q)\subseteq V(W')$ contains a green and a red vertex,
	and the green vertex appears before the red vertex when traversing $P$
	from left to right, then
	 we color the edge $PQ$ white.\\
	4. If $V(P)\cap V(Q)\subseteq V(W')$ contains a green and a red vertex,
	and the red vertex appears before the green vertex when traversing $P$
	from left to right, then
	 we color the edge $PQ$ black.\\

	By applying Theorem~\ref{thm:bipartite-ramsey} (multiple times, if necessary), we find that 
	$H$ contains a large monochromatic 
	complete bipartite subgraph $H'$.
 
   If $H'$ is green (or red, respectively), 
	we find a large subwall in $W'$ where all vertices of degree $3$ 
	are green (red, respectively) as follows. We take the horizontal and vertical 
	paths in $W'$ that correspond to $V(H')$, 
	leaving out every second path to make sure that the horizontal paths we 
	keep are pairwise non-adjacent, and that the vertical paths we 
	keep are pairwise non-adjacent.
	Then we apply Lemma~\ref{lem:subwalls}.
	Undoing the contractions of triangles in the case that $H'$ is red, we thus obtain a large induced 
	homogeneous stone wall in~$W$.
 
	In the case that $H'$ is white or black, we find a large subwall $W''$ in $W'$ where both red and green appear at each intersection of a horizontal and a vertical path. W.\,l.\,o.\,g.\ assume that $H'$ is white (otherwise flip the wall exchanging left and right). We will now explain how to  
	find a large induced subwall of~$W$. 
	%(and hence an induced homogeneous stone wall in $W$) 
	
	Let $X$ be a maximal subset of horizontal paths of $W''$ of pairwise distance $10$, and let 
	$Y$ be a maximal subset of vertical paths of $W''$ of pairwise distance $10$.
	%We take every $10$th vertical path, and every $10$th horizontal path in $W''$. 
	Whenever a path $P\in X$ and a path $Q\in Y$ intersect, we reroute the two paths locally around their 
	intersection to avoid red vertices of degree $3$ as follows.

	Let $u$, $v$, $w$, $x$ be consecutive degree-$3$-vertices
	on $Q$ with  $u,v$ on $P$.
	Assume $u$, $v$, $w$, $x$ appear in this order when walking along $Q$ from top to bottom, and w\,l.\,o.\,g.\ assume $u$ is red (otherwise walk along $Q$ from bottom to top).
	%on $Q$ with  $u,v$ on $P$, w\,l.\,o.\,g.\ $u$ is red.
	%We distinguish two cases. 
	%(See Figure~\ref{fig:case-a} and  Figure~\ref{fig:case-b}. 
	%Assume $u$, $v$, $w$, $x$ appear in this order when walking along $Q$ from top to bottom.
	Since $H'$ is white, $w$ is green $x$ is red.
	Now we reroute $P$ and $Q$ locally, such that after rerouting, both degree-3-vertices at the intersection of $P$ and $Q$ 
are green. The rerouting is 
	shown in Figure~\ref{fig:case-a}.
	%\textbf{Case (a).} $w$ is green (and hence $x$ is red).
	%The rerouting is shown in Figure~\ref{fig:case-a}.\\
	%\textbf{Case (b).} $w$ is red (and hence $x$ is green).
%The rerouting is shown in Figure~\ref{fig:case-b}.
	Note that there is enough space around the intersection, because we 
	only use paths in $X \cup Y$.

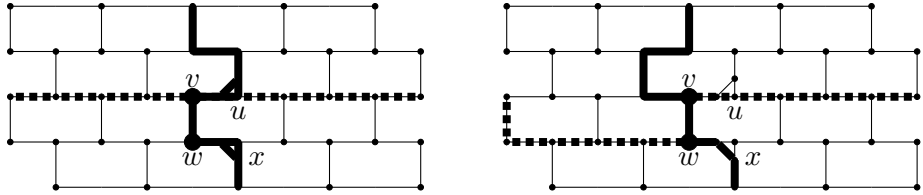
\begin{figure}[h]
\centering
	\newcommand{\Num}{3} %rows
	\newcommand{\Mum}{8} %cols

\begin{tikzpicture}[scale=.6, v/.style= {circle,draw,fill, scale=.2pt}, ]
                        foo/.style= {draw,circle,inner sep=2pt,fill}]

\foreach \j in {0,...,9}{%
         \foreach \i in {0,...,2}{
				\node[v] (h\j;\i) at ({\j},{\i}) {}; %nodes
			}
}
\foreach \j in {0,2,...,\Mum}{%
				\node[v] (h\j;\Num) at ({\j},{\Num}) {}; %nodes in first row
				\node[v] (hh\j;-1) at ({\j+1},{-1}) {}; %nodes in last  row
}

\foreach \j in {0,...,\Mum}{%
         \foreach \i in {0,...,2}{
				\draw ({\j},{\i}) -- ({\j+1},{\i}) ; %horizontal edges
			}
}
\foreach \j in {0,...,7}{%
				\draw ({\j},{\Num}) -- ({\j+1},{\Num}) ; %horizontal edges top
}

	\foreach \j in {0,...,7}{%
				\draw ({\j+2},{-1}) -- ({\j+1},{-1}) ; %horizontal edges bottom
}

\foreach \j in {0,2,...,\Mum}{% 
			\foreach \i in {0,2,...,\Num}{
				\draw ({\j},{\i}) -- ({\j},{\i+1}) ;  %vertical edges starting in odd rows
				\draw ({\j+1},{\i}) -- ({\j+1},{\i-1}) ; %vertical edges starting in even rows
			}
    }
% horizontal path
\foreach \j in {0,...,\Mum}{%
				\draw[line width=3pt, dash pattern={on 0 off 2 on 3 off 2 on 3 off 2 on 3 off 20pt}] ({\j},{1}) -- ({\j+1},{1}) ; %horizontal path
	}

% third vertical path
	 \draw[line width=3pt] (4,3) -- (4,2) ; 
	 \draw[line width=3pt] (4,1) -- (4,0) ; 
	 \draw[line width=3pt] (5,2) -- (5,1) ; 
	 \draw[line width=3pt] (5,0) -- (5,-1) ; 
	 \draw[line width=3pt] (4,0) -- (5,0) ; 
	 \draw[line width=3pt] (4,1) -- (5,1) ; 
	 \draw[line width=3pt] (4,2) -- (5,2) ; 

\node[v] (rot1) at (4.6,0) {}; 
\node[v] (rot2) at (5,-0.4) {}; 
	\draw[line width=3pt] (rot1) -- (rot2) ; 
\node[v] (rot1) at (4.6,1) {}; 
\node[v] (rot2) at (5,1.4) {}; 
	\draw[line width=3pt] (rot1) -- (rot2) ;

\node[v,scale=3pt] (gruen1) at (4,0) {}; 
\node[v,scale=3pt] (gruen2) at (4,1) {}; 

\draw (5,1) {} node[below]{$u$}; 
\draw (4,1) {} node[above]{$v$}; 
\draw (4,0) {} node[below]{$w$}; 
\draw (5,0) {} node[below right]{$x$}; 
\end{tikzpicture}
%second picture
$\quad\quad$
\begin{tikzpicture}[scale=.6, v/.style= {circle,draw,fill, scale=.2pt}, ]
                        foo/.style= {draw,circle,inner sep=2pt,fill}]

\foreach \j in {0,...,9}{%
         \foreach \i in {0,...,2}{
				\node[v] (h\j;\i) at ({\j},{\i}) {}; %nodes
			}
}
\foreach \j in {0,2,...,\Mum}{%
				\node[v] (h\j;\Num) at ({\j},{\Num}) {}; %nodes in first row
				\node[v] (hh\j;-1) at ({\j+1},{-1}) {}; %nodes in last  row
}

\foreach \j in {0,...,\Mum}{%
         \foreach \i in {0,...,2}{
				\draw ({\j},{\i}) -- ({\j+1},{\i}) ; %horizontal edges
			}
}
\foreach \j in {0,...,7}{%
				\draw ({\j},{\Num}) -- ({\j+1},{\Num}) ; %horizontal edges top
}

	\foreach \j in {0,...,7}{%
				\draw ({\j+2},{-1}) -- ({\j+1},{-1}) ; %horizontal edges bottom
}

\foreach \j in {0,2,...,\Mum}{% 
			\foreach \i in {0,2,...,\Num}{
				\draw ({\j},{\i}) -- ({\j},{\i+1}) ;  %vertical edges starting in odd rows
				\draw ({\j+1},{\i}) -- ({\j+1},{\i-1}) ; %vertical edges starting in even rows
			}
    }
% horizontal path
\foreach \j in {4,...,\Mum}{%
				\draw[line width=3pt, dash pattern={on 0 off 2 on 3 off 2 on 3 off 2 on 3 off 20pt}] ({\j},{1}) -- ({\j+1},{1}) ; %horizontal path
	}
% horizontal path
\foreach \j in {0,...,3}{%
				\draw[line width=3pt, dash pattern={on 0 off 2 on 3 off 2 on 3 off 2 on 3 off 20pt}] ({\j},{0}) -- ({\j+1},{0}) ; %horizontal path
	}
	
\draw[line width=3pt,  dash pattern={on 0 off 2 on 3 off 2 on 3 off 2 on 3 off 20pt}] (0,0) -- (0,1) ; 

% third vertical path
	 \draw[line width=3pt] (3,1) -- (4,1) ; 
	 \draw[line width=3pt] (4,3) -- (4,2) ; 
	 \draw[line width=3pt] (4,1) -- (4,0) ; 
	 \draw[line width=3pt] (3,2) -- (3,1) ; 
	 \draw[line width=3pt] (5,-0.4) -- (5,-1) ; 
	 \draw[line width=3pt] (4,0) -- (4.6,0) ; 
	 \draw[line width=3pt] (4,2) -- (3,2) ; 
\node[v] (rot1) at (4.6,0) {}; 
\node[v] (rot2) at (5,-0.4) {}; 
	\draw[line width=3pt] (rot1) -- (rot2) ; 
\node[v] (rot1) at (4.6,1) {}; 
\node[v] (rot2) at (5,1.4) {}; 
	\draw (rot1) -- (rot2) ; 
	
\node[v,scale=3pt] (gruen1) at (4,0) {}; 
\node[v,scale=3pt] (gruen2) at (4,1) {}; 

\draw (5,1) {} node[below]{$u$}; 
\draw (4,1) {} node[above]{$v$}; 
\draw (4,0) {} node[below]{$w$}; 
\draw (5,0) {} node[below right]{$x$}; 
	 
\end{tikzpicture}
\caption{Rerouting in the proof of Lemma~\ref{lem:tidy-stonewall}.}
%\caption{Rerouting in case (a) of the proof of Lemma~\ref{lem:tidy-stonewall}.}
\label{fig:case-a}
\end{figure}

	Rerouting in this manner for every pair of paths in $X$ and $Y$, 
	we end up with a large subwall of $W'$ that is green, 
	which is also an induced homogeneous stone wall in $W$.
\end{proof}

\begin{proof}[Proof of Theorem~\ref{thm:ind-wall}]
Let $H$ be a graph and let $G$ be a graph
	excluding $H$ as a minor. We may assume $G$ is connected. Let $c_H$ be as in 
	Corollary~\ref{cor:contractions} and let $h\in \mathbb N$ be sufficiently large, and let
	$k=8h$.
	Assume $\tw(G)\geq c_H\cdot (k+1)^2$. Then $G$ contains an induced subgraph $G'$, 
	such that $G'$ contains $\trigrid{k}$ as a contraction, witnessed by a contraction mapping $\phi\colon G'\to \trigrid{k}$. The graph $\trigrid{k}$ contains
	$(2h)^2$ graphs $\trigrid{4}$. We pick every second row of graphs $\trigrid{4}$, and every second
	graph $\trigrid{4}$ of the row allows us to find an induced fork or an induced semi-fork in $G'$ as follows. Assume the vertices of $\trigrid{4}$ are $(1,1),\ldots, (4,4)$.
	Let $A':=\phi^{-1}((1,1)), A:=\phi^{-1}((2,1)), B':=\phi^{-1}((1,4)), B:=\phi^{-1}((1,3)), C':=\phi^{-1}((4,1)), C:=\phi^{-1}((3,2)),$ and $S:=\phi^{-1}((2,2))$.
	 Lemma~\ref{lem:forks} yields a fork or a semi-fork in 
	$G'[A\cup A'\cup B\cup B'\cup C\cup C'\cup S]$ and hence in $G'$ (cf.\ Figure~\ref{fig:fork-in-grid}).

	\begin{figure}
\centering
%\includesvg{2fork-in-grid.svg}
{
\input{svg-inkscape/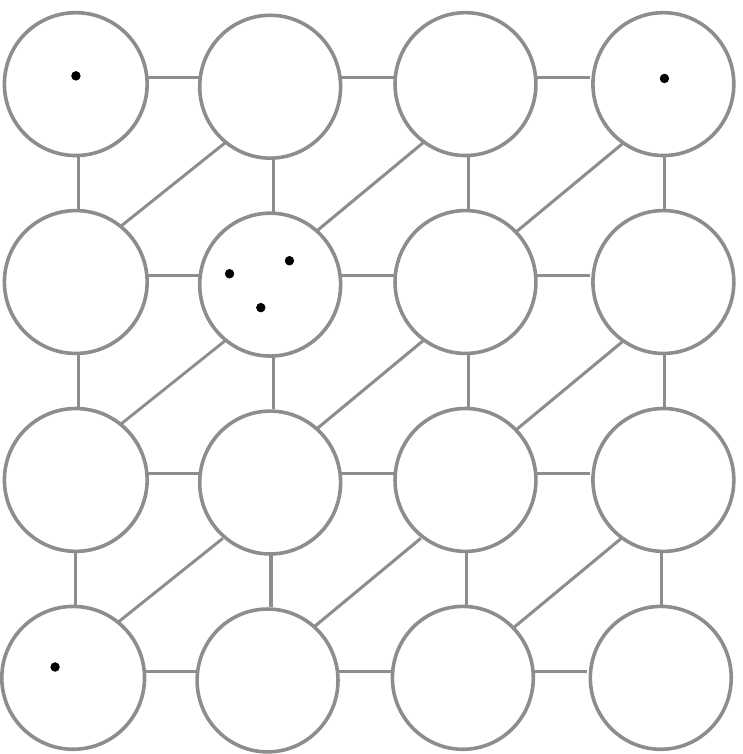_tex}
}

	\caption{Proof of Theorem~\ref{thm:ind-wall}:  Using Lemma~\ref{lem:forks} to find an induced fork or semi-fork in $G'$.}
	\label{fig:fork-in-grid}
\end{figure}
	%\isolde{keep Figure~\ref{fig:fork-in-grid}?}

	These forks can be combined into a large stone wall by adding induced paths to connect the forks or semi-forks
	appropriately, cf.\ Figure~\ref{fig:fromG44toStonewall}. 
	\begin{figure}[h]
\centering
		%\includesvg[scale=.6]{fromG44toStonewall.svg}
		%\includegraphics[scale=.6]{svg-inkscape/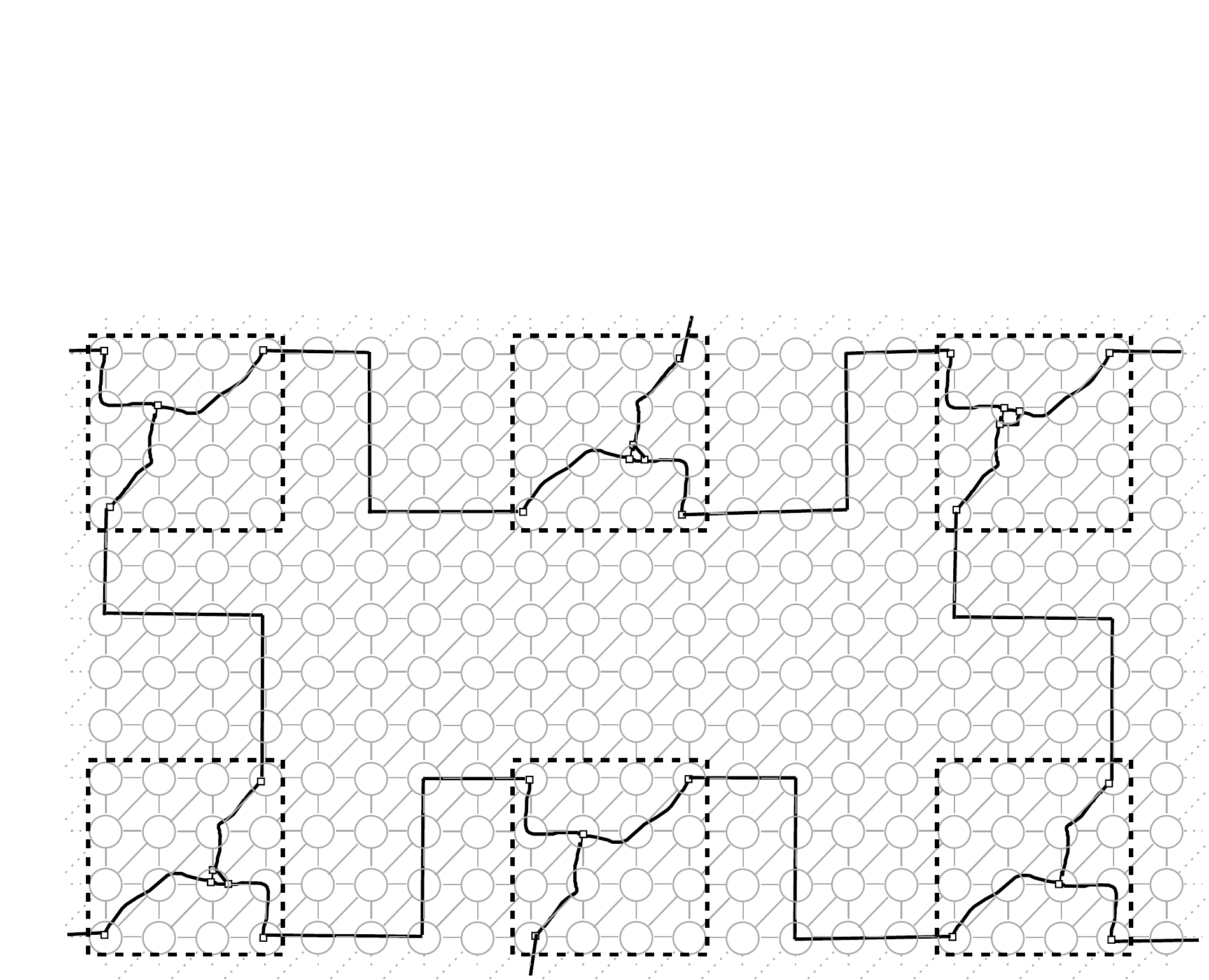}
   \def\svgwidth{10.5cm}
		\input{svg-inkscape/fromG44toStonewall_svg-tex.pdf_tex}

		\caption{Proof of Theorem~\ref{thm:ind-wall}: Finding a stone wall in $\trigrid{k}$.}
	\label{fig:fromG44toStonewall}
\end{figure}

  Hence $G$ contains a large stone wall as induced subgraph, and we can use Lemma~\ref{lem:tidy-stonewall} to complete the proof.
 \end{proof}

Let us remark that the function $f_H$ in Theorem~\ref{thm:ind-wall} is computable.

\begin{corollary}\label{cor:odd-signable-ehf}
	For every fixed graph $H$, the class of (theta, prism)-free
    graphs that do not contain $H$ as a minor has bounded
    tree-width. In particular, even-hole-free graphs
    that do not contain $H$ as a minor have bounded tree-width.
\end{corollary}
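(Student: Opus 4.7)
The plan is to deduce the corollary directly from Theorem~\ref{thm:ind-wall}. For the second (even-hole-free) statement, Lemma~\ref{lem:ehf-theta-prism} already gives that every even-hole-free graph is (theta, prism)-free, so it suffices to handle the (theta, prism)-free case. Fix a graph $H$, and let $k_0$ be a constant, to be chosen large enough so that (i) every $(k_0\times k_0)$-wall contains an induced theta, and (ii) the line graph of every chordless $(k_0\times k_0)$-wall contains an induced prism. Given any $H$-minor-free (theta, prism)-free graph $G$ with $\tw(G)\geq f_H(k_0)$, Theorem~\ref{thm:ind-wall} produces an induced $(k_0\times k_0)$-wall or an induced copy of the line graph of a chordless $(k_0\times k_0)$-wall in $G$, which by (i) and (ii) yields an induced theta or prism in $G$, contradicting its (theta, prism)-freeness. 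Therefore $\tw(G)<f_H(k_0)$.

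It remains to verify (i) and (ii). For (i), let $W$ be a $(k_0\times k_0)$-wall with $k_0$ sufficiently large. Choose two degree-$3$-vertices $v_1,v_2$ of $W$ that are far apart (for instance on opposite sides of the wall), and route three internally vertex-disjoint induced paths $Q_1,Q_2,Q_3$ from $v_1$ to $v_2$, each starting and ending at a distinct degree-$3$-neighbor of $v_1$ and $v_2$. Such a routing is possible when $k_0$ is large: one can take one path along the top boundary, one through a well-separated interior row, and one along the bottom boundary of $W$, using vertical paths to connect them to $v_1$ and $v_2$. Because $W$ has maximum degree $3$ and the paths pass through sufficiently separated regions of the wall, no edges of $W$ connect distinct $Q_i$'s except at $v_1$ and $v_2$, so $Q_1\cup Q_2\cup Q_3$ is an induced theta in $W$.

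For (ii), let $W$ be a chordless $(k_0\times k_0)$-wall and let $T$ be an induced theta in $W$ with branch vertices $v_1,v_2$ and paths $Q_1,Q_2,Q_3$ provided by (i). In $L(W)$, the three edges of $W$ incident to $v_1$ form a triangle, and likewise for $v_2$; each $L(Q_i)$ is an induced path in $L(W)$ joining these two triangles. Since $Q_1,Q_2,Q_3$ are pairwise internally vertex-disjoint in $W$ and there are no edges of $W$ between them apart from at $v_1,v_2$ (and $W$ is chordless), the only adjacencies in $L(W)$ between the path-images $L(Q_i)$ occur inside the two triangles. Thus $L(T)$ is an induced prism in $L(W)$, hence in $G$.

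The main obstacle is the careful routing argument behind (i): the three paths must use distinct degree-$3$-neighbors of $v_1$ and $v_2$, be pairwise internally vertex-disjoint and induced, and have no cross-edges in $W$. This is a finite combinatorial check on the wall structure (and can be done for some explicit small $k_0$), after which the line graph translation in (ii) is essentially a direct consequence of how triangles in $L(W)$ correspond to degree-$3$-vertices of $W$.
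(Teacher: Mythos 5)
Your proof is correct and follows exactly the same approach as the paper, which simply notes that a large wall contains a theta and the line graph of a large chordless wall contains a prism, and applies Theorem~\ref{thm:ind-wall} together with Lemma~\ref{lem:ehf-theta-prism}. You spell out the theta-routing and line-graph translation in more detail than the paper (which takes $k_0=3$ and refers to Figure~\ref{fig:theta-grid-in-wall-subdwall} for the explicit configurations), but the argument is the same.
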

\begin{proof}
	A large wall contains a theta, and the line graph 
	of a large wall contains a prism.
\end{proof}

The following corollary reproves a theorem from~\cite{SilvaSS10}.

\begin{corollary}
	Planar even-hole-free graphs have bounded 
	tree-width.
\end{corollary}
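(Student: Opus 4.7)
The plan is to derive this immediately from Corollary~\ref{cor:odd-signable-ehf}. The key observation is that planarity is itself a minor-closed property with a fixed forbidden minor: by Kuratowski--Wagner, every planar graph excludes $K_5$ (equivalently $K_{3,3}$) as a minor. So the class of planar graphs is contained in the class of $K_5$-minor-free graphs.

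Consequently, every planar even-hole-free graph is an even-hole-free graph that does not contain $K_5$ as a minor. Applying Corollary~\ref{cor:odd-signable-ehf} with $H=K_5$ then yields an absolute constant bounding the tree-width of all such graphs. No further structural argument is needed; the work has already been done in Theorem~\ref{thm:ind-wall} and Corollary~\ref{cor:odd-signable-ehf}.

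There is essentially no obstacle: the proof is a one-line reduction. One could equally invoke $K_{3,3}$ in place of $K_5$, or note the stronger fact that planar graphs are even $K_5$-\emph{topological-minor}-free, but the cleanest route is the $K_5$-minor exclusion applied directly to Corollary~\ref{cor:odd-signable-ehf}. The only mild point to mention is that the resulting bound on tree-width depends on the implicit function $f_{K_5}$ from Theorem~\ref{thm:ind-wall}, so it is an absolute constant rather than one depending on the input graph.
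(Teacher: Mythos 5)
Your proof matches the paper's own argument exactly: planar graphs exclude $K_5$ as a minor (by Kuratowski--Wagner), so Corollary~\ref{cor:odd-signable-ehf} applied with $H=K_5$ gives the bound immediately. The additional remarks about $K_{3,3}$ and topological minors are correct but not needed.
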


\begin{proof}
	This follows from Corollary~\ref{cor:odd-signable-ehf} because
	planar graphs exclude $K_5$ as a minor.
\end{proof}

%\section{Ideas}
%\begin{itemize}
%	\item Can we weaken our Theorem~\ref{thm:induced-grid} by excluding a clique as a \emph{topological}
%		subgraph?\\
%		Try ideas from~\cite{GroheKMW11}.
%\end{itemize}

\section{Subcubic (theta, prism)-free graphs}
\label{sec:subcubic}
%Subcubic

In this Section, we prove that even-hole subcubic graphs can be
described by a structure theorem, that implies tree-width at
most~3. In fact our result is for a more general class: (theta,
prism)-free subcubic graphs.

A wheel that is not a pyramid is a \emph{proper wheel}.  
A {\em sector} of a wheel $(H,x)$ is a subpath of
$H$ whose endnodes are adjacent to~$x$, and whose internal vertices
are not.

An {\em extended prism} is a graph made of five vertex-disjoint chordless
paths of length at least 1 $A = a \dots x$, $A' = x \dots a'$,
$B = b \dots y$, $B' = y \dots b'$, $C = c \dots c'$ such that $abc$
is a triangle, $a'b'c'$ is a triangle, $xy$ is an edge and no edges
exist between the paths except $xy$ and those of the two triangles
(see Figure~\ref{fig:extended-prism}).

  \begin{figure}[ht]
  	\centering \includegraphics[scale=0.6]{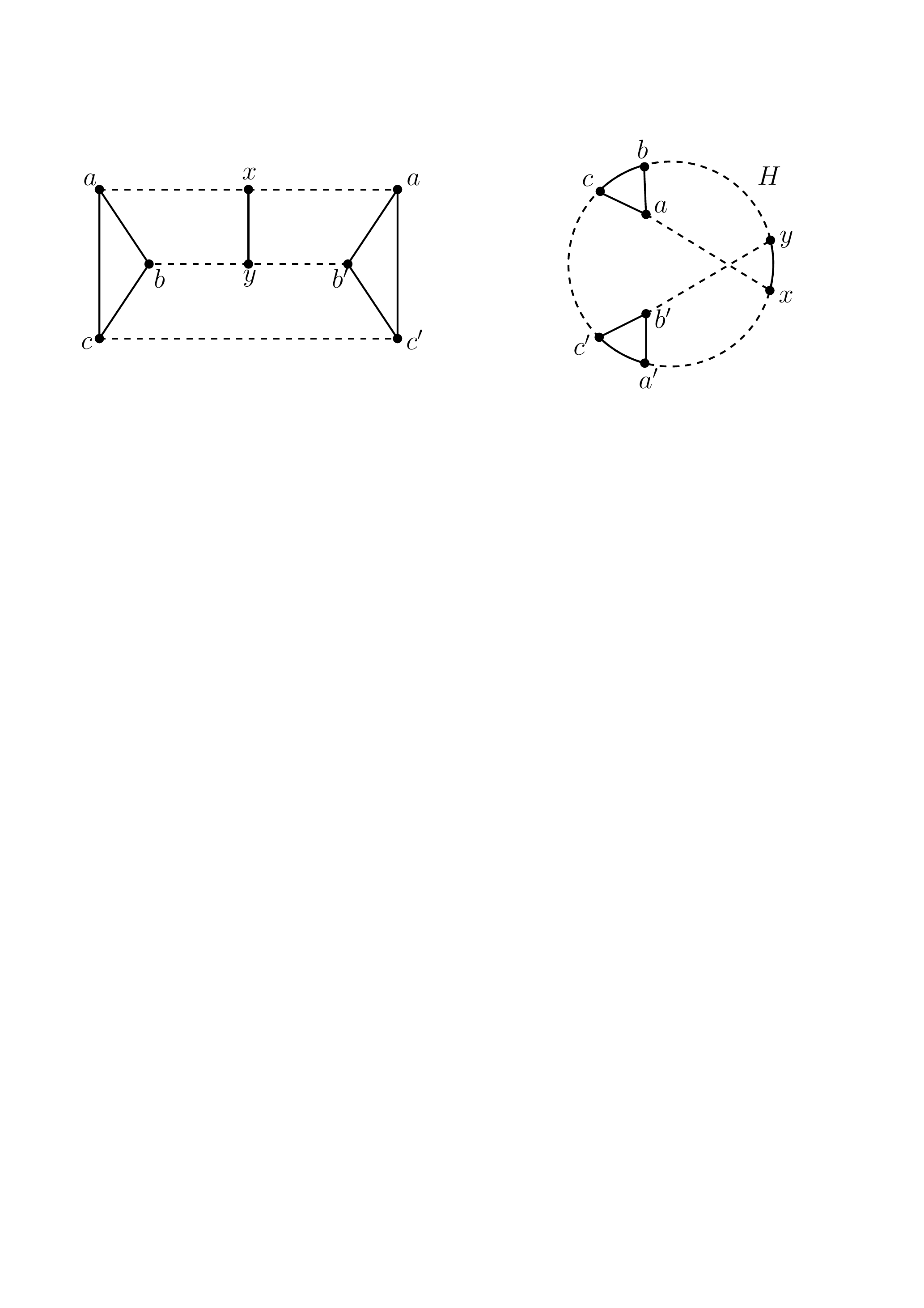}
  	\caption{Two different drawings of an extended prism}
  	\label{fig:extended-prism}
  \end{figure}

A subset (possibly empty) of
vertices $S \subseteq V(G)$ is a {\em separator} of~$G$ if $G \sm S$
contains at least two connected components.  A {\em clique separator}
is a separator $S$ that is a clique. 
 
  A {\em proper separation} in a graph $G$ is a triple
  $(\{a,b\}, X, Y)$ satisfying the following.
  \begin{enumerate}[label = (\roman*)]
    \item $\{a,b\}$, $X$, $Y$ are disjoint, non-empty and $V(G)
      =\{a,b\} \cup X \cup Y$. 
    \item There are no edges from $X$ to~$Y$.
      \item $a$ and $b$ are non-adjacent.
  \item $a$ and $b$ have exactly two neighbors in~$X$.
  \item $a$ and $b$ have exactly one neighbor in~$Y$.
  \item There exists a path from $a$ to $b$ with interior in~$X$, and there
	  exists a path from $a$ to $b$ with interior in~$Y$.
  \item $G[Y\cup \{a, b\}]$ is not a
	  chordless path from $a$ to~$b$.
\end{enumerate}

A {\em proper separator} of~$G$ is a pair $\{a, b\}\subseteq V(G)$
such that there exists a proper separation $(\{a,b\}, X, Y)$.

Let $\mathcal{C}$ be the class of (theta, prism)-free subcubic graphs.
The \emph{cube} is the graph made of a hole $v_1v_2\dots v_6v_1$ and
two non-adjacent vertices $x$ and $y$ such that
$N_H(x) = \{v_1, v_3, v_5\}$ and $N_H(y) = \{v_2, v_4, v_6\}$.  Call a
graph in~$\mathcal{C}$ {\em basic} if it is isomorphic to a chordless
cycle, a clique of size at most 4, the cube, a proper wheel, a
pyramid, or an extended prism.  An example of graph in $\mathcal C$
that is not basic is provided in Figure~\ref{fig:ruggball}.

\begin{figure}[ht]
  \centering \includegraphics[scale=0.6]{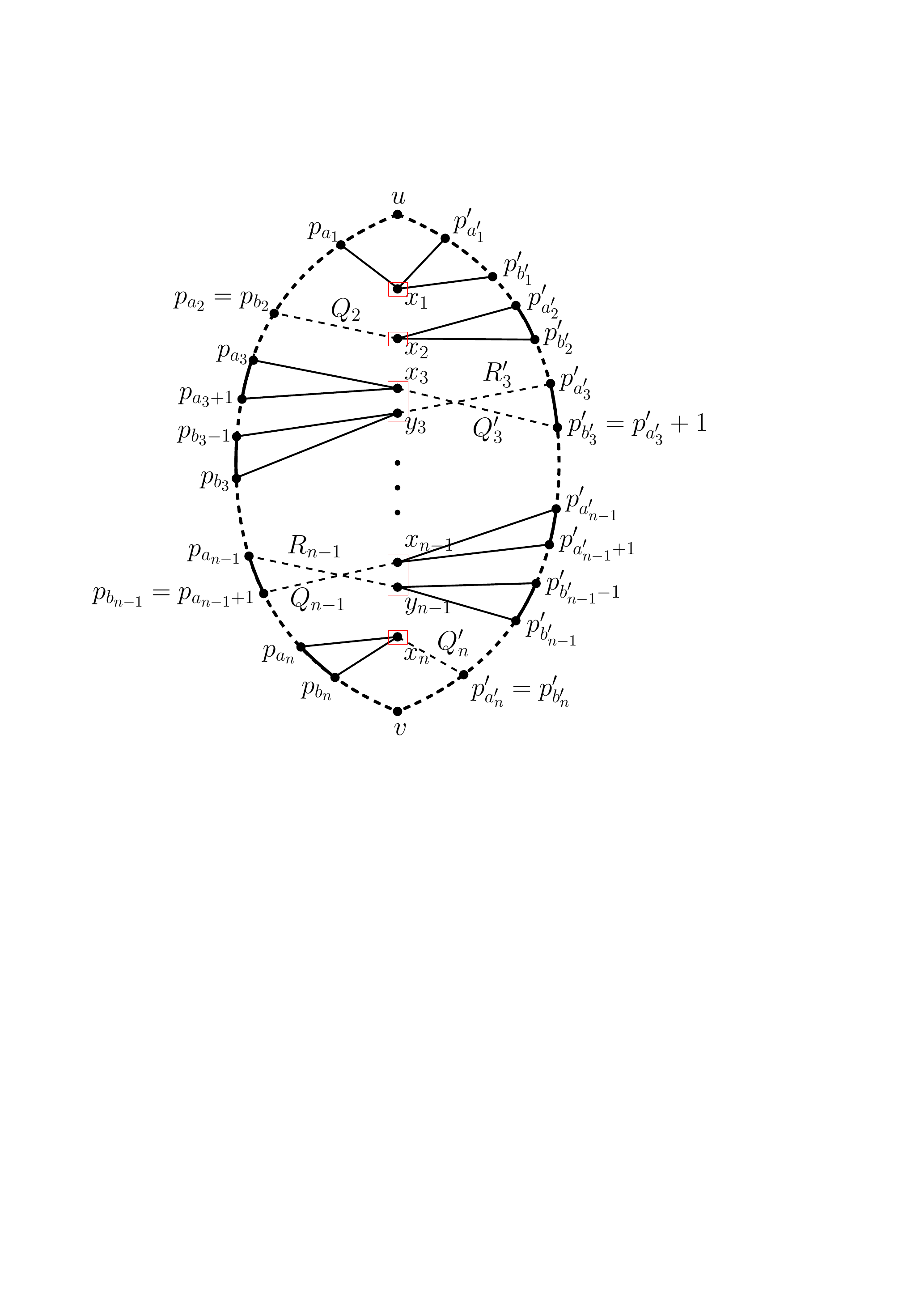}
  \caption{An example of non-basic graph in~$\cal C$}
  \label{fig:ruggball}
\end{figure}

We need the following lemma.

\begin{lemma}
\label{lem:hole-attachment}
Let $G$ be a theta-free subcubic graph, let $H$ be a hole in~$G$, and
$v \in G \sm H$. Then $v$ has at most three neighbors in~$H$, and if
$v$ has exactly two neighbors in $H$, then they are adjacent.
\end{lemma}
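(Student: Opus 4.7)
The plan is to handle the two assertions separately, with the first being immediate and the second following from a direct construction of a theta.

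For the first assertion, since $G$ is subcubic, every vertex has degree at most $3$, so in particular $v$ has at most three neighbors in $H$.

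For the second assertion, I would argue by contradiction. Suppose $v$ has exactly two neighbors $a, b$ in $H$ and that $a$ and $b$ are not adjacent on $H$. Then $H$ decomposes into two internally vertex-disjoint chordless paths $P_1$ and $P_2$ from $a$ to $b$, each of length at least $2$ (the length is at least $2$ precisely because $a$ and $b$ are non-adjacent on the hole). Let $P_3 := a\ph v \ph b$, which is a chordless path of length exactly $2$ whose unique internal vertex is $v \notin V(H)$. Thus $P_1, P_2, P_3$ are three internally vertex-disjoint chordless paths between $a$ and $b$, all of length at least~$2$.

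The main thing to verify is that $P_1, P_2, P_3$ together induce a theta, i.e.\ that there are no edges between the paths other than those incident to $a$ or $b$. There are no edges between the interiors of $P_1$ and $P_2$ because $H$ is a hole (and hence chordless). There are no edges between $v$ and the interiors of $P_1, P_2$ because by hypothesis the only neighbors of $v$ in $H = P_1 \cup P_2$ are $a$ and $b$. Hence $P_1 \cup P_2 \cup P_3$ is a theta in $G$, contradicting that $G$ is theta-free.

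The argument is short, and I do not anticipate a serious obstacle; the key observation is simply that the non-adjacency of $a$ and $b$ on $H$ guarantees length at least $2$ for both arcs of $H$, which is exactly what is needed to invoke the definition of theta.
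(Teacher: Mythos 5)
Your proof is correct and follows essentially the same approach as the paper: the first assertion is immediate from the degree bound, and the second produces a theta from the two arcs of $H$ together with the path $a\ph v\ph b$. You simply spell out the verification that the three paths induce a theta, which the paper leaves implicit.
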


\begin{proof}
  Let $v \in G \sm H$.  Since $G$ is subcubic, $d_H(v) \leq 3$.  If
  $v$ has exactly two neighbors in~$H$, but they are non-adjacent then
  $G[H \cup \{v\}]$ would induce a theta, a contradiction.
\end{proof}

The main theorem of this section is the following. 

\begin{theorem}
\label{th:main-thm-max3}
Let $G$ be a (theta, prism)-free subcubic graph. Then one of the
following holds:
\begin{itemize}
\item $G$ is a basic graph;
\item $G$ has a clique separator of size at most~2;
\item $G$ has a proper separator.
\end{itemize}
\end{theorem}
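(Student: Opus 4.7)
My plan is to prove the contrapositive: assuming that $G$ is (theta, prism)-free subcubic with no clique separator of size at most~$2$ and no proper separator, I aim to show that $G$ must be basic. I would first dispose of the chordal case: if $G$ has no hole, then $G$ is chordal, and since a chordal graph without a clique separator is itself a clique, subcubicity yields $|V(G)|\le 4$, so $G$ is basic. Hence we may assume $G$ contains a hole $H$. If $V(G)=V(H)$, then $G$ is a chordless cycle. Otherwise, by Lemma~\ref{lem:hole-attachment}, every $v\in V(G)\setminus V(H)$ has at most three neighbors on $H$, and any vertex with exactly two neighbors on $H$ must pick two consecutive vertices of $H$, producing a triangle with a hole-edge.

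The core of the proof is a case analysis on the maximum attachment of an outside vertex to $H$. In the \emph{wheel case}, some $v$ has three neighbors $x_1,x_2,x_3$ on $H$, cutting $H$ into subpaths of lengths $\ell_1,\ell_2,\ell_3$; a direct check shows that whenever at least two of the $\ell_i$ exceed~$1$, the three internally disjoint paths between two of the $x_i$'s (two along $H$, one through $v$) produce a theta in $G$, so the only surviving sub-cases are the pyramid (exactly one $\ell_i\ge 2$) and the proper wheel; any further vertex attached to such a core is then shown, again using Lemma~\ref{lem:hole-attachment} and subcubicity, to either create a theta or prism or to expose a clique separator of size $\le 2$ or a proper separator, contradicting our hypotheses, so $G$ coincides with the pyramid or proper wheel. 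In the \emph{triangle case}, every outside vertex has at most two neighbors on $H$, and for each triangle $T=\{v,a,b\}$ with $ab\in E(H)$, subcubicity leaves $v$ with at most one further neighbor $v'$. Following the path from $v'$ back to $H$, I would extract a second triangle $T'=\{v',a',b'\}$ attached to $H$ in the same way; assembling $T$, $T'$, the connecting path, and the two $H$-subpaths between $ab$ and $a'b'$ yields a structure that is either a prism (forbidden) or, when the two triangles are joined through an intermediate path realising the $xy$-edge of the definition, an extended prism (basic). Any additional vertex attached to this extended-prism core is again shown to create a theta or prism or to expose a clique or proper separator.

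The main obstacle is twofold. First, I must verify that the cube and the extended prism themselves admit no proper separator (and no clique separator of size $\le 2$), so that they genuinely belong to the basic list: this is done by checking, for each non-adjacent pair $\{a,b\}$ of degree-$3$ vertices in these graphs, that at least one of clauses (iv)--(vii) of the definition of a proper separation fails -- clause (vii) being the typical obstruction whenever the ``small side'' $Y$ collapses to a chordless $ab$-path (as happens, e.g., for the pair $\{c,c'\}$ in an extended prism). Second, the case analysis for wheels and triangles must be performed with care, because subcubicity leaves very few degrees of freedom and each configuration has to be followed all the way back to $H$ to guarantee no theta or prism is missed; it is precisely this tightness of subcubicity that makes the finite enumeration tractable and the conclusion sharp.
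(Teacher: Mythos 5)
Your high-level plan (prove the contrapositive, dispose of the chordal case, then analyze how vertices outside a hole attach to it) is in the same spirit as the paper's proof, which also reduces to assuming no clique separator and then works through a chain of exclusion claims. However, there are concrete errors in the core of your wheel-case analysis, and the hardest parts of the argument are asserted rather than carried out.

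The decisive error is the claim that ``whenever at least two of the $\ell_i$ exceed~$1$, the three internally disjoint paths between two of the $x_i$'s (two along $H$, one through $v$) produce a theta.'' This is false, and the basic list itself shows it: a proper wheel with three spokes whose sectors all have length at least~$2$ is (theta, prism)-free (which is why it must be kept as a basic graph), yet it has all three $\ell_i\ge 2$. The reason the construction fails is that the ``long arc'' of $H$ between the two chosen $x_i$'s passes through the third neighbor of $v$, which is adjacent to the middle vertex of the path through $v$; so the three paths are not pairwise anticomplete and no theta arises. For the same reason your parenthetical characterization is off: a pyramid in this setting has \emph{exactly one} $\ell_i=1$ and the \emph{other two} $\ge 2$, not ``exactly one $\ell_i\ge 2$.'' As written your surviving sub-cases are inconsistent with each other and with the definitions, so the enumeration that is supposed to drive the rest of the proof does not hold.

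Beyond that, the proposal leaves the genuinely delicate work unproved. Phrases like ``any further vertex attached to such a core is then shown \dots to either create a theta or prism or to expose a separator'' are precisely the steps that occupy the bulk of the paper's argument: it needs a careful minimal-path analysis of an outside component's attachment to a wheel, pyramid, or extended prism, and the order of the exclusion claims matters (the paper excludes the cube and $K_4$ first, then proper wheels, then extended prisms, then pyramids, each relying on the previous). Your sketch also does not explain how the cube arises; in the paper it appears specifically when a vertex has a neighbor in each of the three sectors of a proper wheel. Until the sector-length characterizations are corrected and the attachment case analysis is actually carried out (including the cube configuration and the triangle/no-hole endgame), the proof is not complete.
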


\begin{proof}	
  Let $G$ be a (theta, prism)-free subcubic graph. We may assume that
  $G$ has no clique separator (and is in particular connected for
  otherwise the empty set is a clique separator).
	
  \begin{claim}
    \label{c:cubek4}
    We may assume that $G$ is ($K_4$, cube)-free.
  \end{claim}

  \bpc If $G$ contains $K_4$, then since $G$ is a subcubic connected
  graph, $G = K_4$, so $G$ is basic.  The proof is similar when $G$
  contains the cube.  \epc

  \begin{claim}
    \label{cl:wheel}
    We may assume that $G$ does not contain a proper wheel.
  \end{claim}
	
  \bpc Let $W=(H, x)$ be a proper wheel in~$G$. Let $a, b, c$, be the
  three neighbors of~$x$.  We call
  $A$ ($B$, $C$, resp.) the path of~$H$ from $b$ to~$c$ (from
  $a$ to~$c$, from $a$ to~$b$, resp.) that does not contain~$a$ ($b$,
  $c$, resp.).

  Suppose that some vertex $y$ of~$G\sm W$ has neighbors in the three
  sectors of~$W$, say $a'$ in~$A$, $b'$ in~$B$, and $c'$
  in~$C$. Hence, $a$, $c'$, $b$, $a'$, $c$, and $b'$ appear in this
	order along $H$. If $ac' \notin E(G)$, then 
	$xaBb'$, $xcBb'$, and
  $xbCc'yb'$ induce a theta, so $ac' \in E(G)$. Symmetrically, $c'b$,
  $ba'$, $a'c$, $b'c$, and $b'a$ are all in~$E(G)$, so $H$, $x$, and
  $y$ induce a cube, a contradiction to~(\ref{c:cubek4}). It follows
  that every vertex has neighbors in at most two sectors of~$W$.

  If $G=W$, then $G$ is basic, so let $L$ be a component of $G\sm
  W$. Note that $N(L)$ contains at least two vertices since $G$ has no
  clique separator. If $N(L)$ is included in a sector of~$W$, then the
  ends of this sector form a proper separator.  We may therefore
  assume that $N(L)$ intersects at least two sectors of~$W$.

  Since $L$ is connected, it contains a path $P = u\dots v$ such that
  $u$ has neighbors in a sector of~$W$ (say $C$ up to symmetry), and
  $v$ has neighbors in another sector of~$W$ (say $A$ up to
  symmetry). Suppose that $P$ is minimal with respect to this
  property. Then either $u=v$ and by the
  second paragraph of this proof, $u$ has no neighbor in~$B$; or
  $u \neq v$ and, by minimality of $P$, $u$ has neighbors only in $C$, 
	$v$ has neighbors only
	in~$A$, and the interior of $P$ is anticomplete to~$W$. In each
  case, we let $u'$ be the neighbor of~$u$ in~$C$ closest to~$a$ along
  $C$ and we let $v'$ be the neighbor of~$v$ in~$A$ closest to~$c$
  along $A$. Note that because~$u'$ and $v'$ exist, $ab\notin E(G)$ and
  $bc\notin E(G)$. So, $ac\notin E(G)$ for otherwise, $(W, x)$ would
  form a pyramid and be a non-proper wheel. Now, the three paths
  $axc$, $B$, and $aCu'uPvv'Ac$ form a theta, a contradiction.  \epc
  \medskip

  \begin{claim}
    \label{cl:ext-prism}
    We may assume that $G$ does not contain an extended prism.
  \end{claim}
	
  \bpc Let $W$ be an extended prism in~$G$, with notation as in the
  definition.  Suppose that some vertex $z$ of~$G\sm W$ has neighbors
  in three distinct paths among $A, A', B, B'$, and $C$, and call $Q, R, S$
  these three paths (so $\{Q, R, S\} \subseteq \{A, A', B, B', C\}$).
  It is easy to check that some hole $H$ of~$W$ contains $Q$ and
  $R$. By Lemma~\ref{lem:hole-attachment}, $z$ must have three
  neighbors in~$H$, so $H$ and $z$ form a proper wheel, a
  contradiction to~(\ref{cl:wheel}).
        
  If $G=W$, then $G$ is basic, so let $L$ be a component of $G\sm
  W$. Note that $N(L)$ contains at least two vertices since $G$ has no
  clique separator.  If $N(L)$ is included in one of~$V(A)$, $V(A')$,
  $V(B)$, $V(B')$, or $V(C)$, then the ends of this path form a proper
  separator. We may therefore assume that $N(L)$ intersects at least
  two paths in~$\{A, A', B, B', C\}$.

  Since $L$ is connected, it contains a path $P = u\dots v$ such that
  $u$ has neighbors in a path $Q\in \{A, A', B, B', C\}$ and $v$ has
  neighbors in another path $R\in \{A, A', B, B', C\}$. Suppose that
  $P$ is minimal with respect to this property. So by the minimality
  of~$P$, either $u=v$ and by the second paragraph of this proof,
  $u=v$ has no neighbor in $\{A, A', B, B', C\}\sm \{Q, R\}$; or
  $u \neq v$ and $u$ has neighbors only in~$Q$, $v$ has neighbor only
  in~$R$ and the interior of~$P$ is anticomplete to~$W$.

  Note that each of~$N_{Q}(u)$ and $N_{R}(v)$ is a vertex or an edge.
  For otherwise, suppose that $u$ has two non-adjacent neighbors
  in~$Q$ (resp.\ in $R$).  Since $G$ is subcubic and $Q$ (resp.\ $R$)
  can be completed to a hole $J$ of~$W$, by
  Lemma~\ref{lem:hole-attachment}, $u$ has three pairwise non-adjacent
  neighbors in~$J$, so $G$ contains a proper wheel, a contradiction
  to~(\ref{cl:wheel}).  We may now break into four cases.

  \medskip
  \noindent{Case 1:} $\{Q, R\} = \{A, A'\}$ or $\{Q, R\} = \{B, B'\}$.
  Up to symmetry, we suppose $Q=A$ and $R=A'$.  Then, $P$ can be used
  to find a path from $a$ to~$a'$ that does not contain~$x$, and that
  together with $B$, $B'$ and $C$ form a prism, a contradiction.
        
  \medskip
  \noindent{Case 2:} $\{Q, R\} = \{A, B\}$ or $\{Q, R\} = \{A', B'\}$.
  Up to symmetry, we suppose $Q=A$ and $R=B$.  If $u$ has two adjacent
  neighbors in~$A$, then $A$, $A'$, $C$, a subpath of~$B$, and $P$
  form a prism. So, $u$ has exactly one neighbor in~$A$, and
  symmetrically, $v$ has exactly one neighbor in~$B$.  So, $A$, $B$,
  and $P$ form a theta.
        
  \medskip
  \noindent{Case 3:} $\{Q, R\} = \{A, B'\}$ or $\{Q, R\} = \{B, A'\}$.
  Up to symmetry, we suppose $Q=A$ and $R=B'$.  If $u$ has two
  adjacent neighbors in~$A$, then $A$, $A'$, $C$, a subpath of~$B'$,
  and $P$ form a prism. So, $u$ has exactly one neighbor in~$A$, and
  symmetrically, $v$ has exactly one neighbor in~$B'$.  So, $A$, $B'$,
  $C$, and $P$ form a theta.

  \medskip
  \noindent{Case 4:} $\{Q, R\}$ is one of~$\{A, C\}$, $\{A', C\}$,
  $\{B, C\}$ or $\{B', C\}$.  Up to symmetry, we suppose $Q=A$ and
  $R=C$.  If $v$ has two adjacent neighbors in~$C$, then $C$, $B$,
  $B'$, a subpath of~$A$ and $P$ form a prism. So, $v$ has exactly one
  neighbor in~$C$. So, $C$, $B$, $A'$, a subpath of~$A$, and $P$ form
  a theta.  \medskip
        
  \epc \medskip

  \begin{claim}
    \label{cl:pyramid}
    We may assume that $G$ does not contain a pyramid.
  \end{claim}
	
  \bpc Let $W$ be a pyramid with notation as in the definition.  First
  note that a vertex $v\in V(G\sm W)$ cannot have neighbors in the
  three paths $P_1, P_2$, and $P_3$, for otherwise there exists a
  theta from $v$ to~$x$.

  If $G=W$, then $G$ is basic, so let $L$ be a component of $G\sm
  W$. Note that $N(L)$ contains at least two vertices since $G$ has no
  clique separator.  If $N(L)$ is included in one of~$P_1$, $P_2$, or
  $P_3$, then the ends of this path form a proper separator, or a
  clique separator when this path has length~$1$.  We may therefore
  assume that $N(L)$ intersects at least two paths.

  So, since $L$ is connected, it contains a path $P = u\dots v$ such
  that $u$ has neighbors in a path $P_i$ (say $P_1$ up to symmetry),
  and $v$ has neighbors in another path $P_j$ (say $P_2$ up to
  symmetry). Suppose that $P$ is minimal with respect to this
  property. So by minimality, either $u=v$ and by the first paragraph
  of this proof, $u=v$ has no neighbor in $P_3$; or $u \neq v$ and $u$
  has neighbors only in~$P_1$, $v$ has neighbor only in~$P_2$, and
  the interior of $P$ is anticomplete to~$W$.

  Note that each of~$N_{P_1}(u)$ and $N_{P_2}(v)$ is a vertex or an
  edge.  If $u=v$, this is because $G$ contains no proper wheel
  by~(\ref{cl:wheel}).  If $u\neq v$, this is because $u$ and $v$ have
  degree at most~3 and we apply Lemma~\ref{lem:hole-attachment}.

  If $N_{P_1}(u)$ and $N_{P_2}(v)$ are both edges, then $u\neq v$
  (because $G$ is subcubic), so $P_1$, $P_2$, and $P$ form a prism.
  If each of~$N_{P_1}(u)$ and $N_{P_2}(v)$ is a vertex, then $P_1$,
  $P_2$, and $P$ form a theta. So, up to symmetry, $N_{P_1}(u)$ is a
  vertex $u'$, $N_{P_2}(v)$ is an edge $yz$ (where $x, y, z, a_2$
  appear in this order along $P_2$).  If $u'\neq x_1$, then
  $V(P)\cup V(W) \sm V(zP_2a_2)$ induces a theta from $u'$ to~$x$, so
  $u'=x_1$.  Hence, $W$ and $P$ form an extended prism, a
  contradiction to~(\ref{cl:ext-prism}) \epc

  \begin{claim}
    \label{cl:hole}
    We may assume that $G$ does not contain a hole.
  \end{claim}
	
  \bpc Let $W$ be a hole in~$G$. First note that a vertex
  $v\in V(G\sm W)$ cannot have three neighbors in~$W$, for otherwise
  $v$ and $W$ would form a proper wheel or a pyramid.  So, by
  Lemma~\ref{lem:hole-attachment}, every vertex of $G\sm W$ has at
  most one neighbor in~$W$, or exactly two neighbors in~$W$ that are
  adjacent.

  If $G=W$, then $G$ is basic, so suppose that $L$ is a component
  of~$G\sm W$.  If $N(L)$ is included in some edge of $W$, then $G$
  has a clique separator, so suppose that there exist $a, b\in V(W)$
  that are non-adjacent and that both have neighbors in~$L$.  Since
  $L$ is connected, there exists a path $P=u \dots v$, such that $u$
  is adjacent to~$a$ and $v$ is adjacent to~$b$.  We suppose that
  $a, b, u, v$ and $P$ are chosen subject to the minimality of~$P$.  Note that
  $u\neq v$ since a vertex in~$G\sm W$ cannot have two non-adjacent
  neighbors in~$W$.

  Suppose that some internal vertex of~$P$ has a neighbor $x$ in~$W$.
  So $x$ must be adjacent to $a$, for otherwise a subpath of~$P$ from
  $u$ to a neighbor of $x$ in $P$ contradicts the minimality of $P$.
  Similarly, $x$ is adjacent to $b$. If $a$ and $b$ have two common
  neighbors in~$H$, say $x$ and $y$ (so $W=axbya$), and $x$ and $y$
  both have neighbors in the interior of~$P$, then the vertices $x$
  and $y$ together with a subpath of~$P$ contradict the minimality
  of~$P$. Hence, $x$ is the unique vertex of~$W$ with neighbors in the
  interior of~$P$.  If $u$ and $v$ have exactly two adjacent neighbors
  in~$W$, then $W$ and $P$ form an extended prism, a contradiction
  to~(\ref{cl:ext-prism}).  If exactly one of $u$ or $v$ has exactly
  two neighbors in~$W$, then $W$ and a subpath of~$P$ form a pyramid,
  a contradiction to~(\ref{cl:pyramid}).  So, $u$ and $v$ both have a
  unique neighbor in~$W$.  Now, $P$ and $H$ form a proper wheel, a
  contradiction to~(\ref{cl:wheel}).

  So, the interior of $P$ is anticomplete to~$W$. Hence, $P$ and $W$
  form a theta, a prism or a pyramid, in every case a contradiction to
  $G\in \cal C$, or to~(\ref{cl:pyramid}).  \epc

  \begin{claim}
    \label{cl:triangle}
    We may assume that $G$ does not contain a triangle.
  \end{claim}
        
  \bpc Let $W=abc$ be a triangle in~$G$.  If $G=W$, then $G$ is basic,
  so suppose that $L$ is a component of~$G\sm W$.  If $|N(L)|\leq 2$,
  then $G$ has a clique separator of size at most 2, so suppose that
  $N(L) = \{a, b, c\}$.

  Let $P= u\dots v$ be a path in~$L$ such that $u$ is adjacent to~$a$,
  $v$ is adjacent to~$b$, and suppose $P$ is minimal.  If $u\neq v$,
  then $P$, $a$, and $b$ form a hole, a contradiction
  to~(\ref{cl:hole}), so $u=v$.  By~(\ref{c:cubek4}), $u$ is
  non-adjacent to~$c$.  Hence, a path in~$L$ from $u$ to a neighbor
  of~$c$, together with $a$, would form a hole, a contradiction
  to~(\ref{cl:hole}).  \epc

  \medskip Now, by~(\ref{cl:hole}) and~(\ref{cl:triangle}), $G$ has no
  cycle.  So, $G$ is a tree. It is therefore a complete graph on at
  most two vertices (that is basic) or it a has clique separator of
  size~1.
\end{proof}

Let us point out that Theorem~\ref{th:main-thm-max3} is a full
structural description of the class of subcubic (theta, prism)-free
graphs, in the sense that every graph in the class can be obtained
from basic graphs by repeatedly applying some operations: gluing along
a (possibly empty) clique, and an operation called proper gluing that
we describe now.

Consider two graphs $G_1$ and $G_2$.  Suppose that $G_1$ contains two
non-adjacent vertices $a_1$ and $b_1$ of degree~3, and such that a
path $P_1$ from $a_1$ to $b_1$ with internal vertices all of degree~2
exists in $G_1$.  Suppose that $G_2$ contains two non-adjacent
vertices $a_2$ and $b_2$ of degree~2, and such that a path $P_2$ from
$a_2$ to $b_2$ with internal vertices all of degree~2 exists in $G_2$.
Let $G$ be the graph obtained from the disjoint union of $G_1$ and
$G_2$ by removing the internal vertices of $P_1$ and $P_2$, by
identifying $a_1$ and $a_2$, and by identifying $b_1$ and $b_2$.  We
say that $G$ is obtained from $G_1$ and $G_2$ by a \emph{proper
  gluing}.

We omit the details of the proof and just sketch it. We apply
Theorem~\ref{th:main-thm-max3}.  If $G$ is basic, there is nothing to
prove. If $G$  has a clique
separator, it is obtained by two smaller graphs by gluing along a clique. If
$G$ has a proper separation, then it is obtained from smaller graphs
by a proper gluing.

\begin{figure}
  \begin{center}
    \includegraphics[height=3cm]{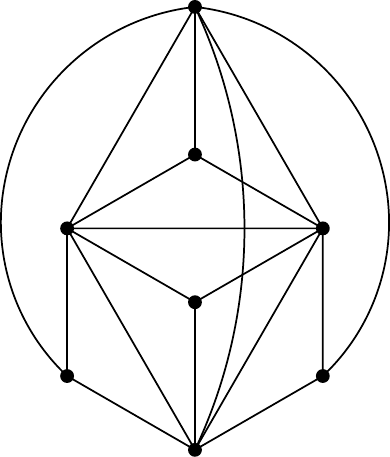}\hspace{1cm}
    \includegraphics[height=3cm]{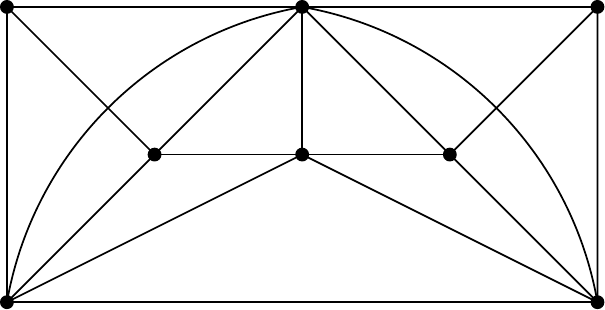}
  \end{center}
  \caption{Two chordal graphs with clique number~4\label{f:chordal}}
\end{figure}

\begin{corollary}
  Every subcubic (theta, prism)-free graph (and therefore every
  even-hole-free subcubic graph) has tree-width at most~3. 
\end{corollary}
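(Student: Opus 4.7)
The plan is to induct on $|V(G)|$ and apply Theorem~\ref{th:main-thm-max3}. The base cases (small $G$) are immediate.

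If $G$ is basic, I would verify by direct inspection that each of the basic graphs---a chordless cycle, $K_n$ for $n \le 4$, the cube, a subcubic proper wheel, a subcubic pyramid, and a subcubic extended prism---admits a tree decomposition of width at most $3$. This is routine: for the proper wheel, pyramid, and extended prism, the ``core'' (the center together with its three neighbors, or a triangle) gives a natural bag of size~$4$, and one then extends along the paths of degree-$2$ vertices using rolling bags of size at most~$4$.

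If $G$ has a clique separator $K$ with $|K|\le 2$, I would write $G = G_1 \cup G_2$ with $V(G_1) \cap V(G_2) = K$, where both $G_i$ are proper induced subgraphs and hence in $\mathcal{C}$. By induction $\tw(G_i) \le 3$, and since $K$ is a clique each tree decomposition of $G_i$ contains a bag $B_i \supseteq K$; gluing at $B_1$ and $B_2$ yields a tree decomposition of $G$ of width at most~$3$.

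The delicate case is a proper separator $\{a,b\}$ witnessed by $(\{a,b\}, X, Y)$, where $\{a,b\}$ is not a clique. My plan here is to add the edge $ab$ and work with $G_1^+ := G[X \cup \{a,b\}] + ab$ and $G_2^+ := G[Y \cup \{a,b\}] + ab$. Conditions (iv) and (v) of a proper separation guarantee that both $G_i^+$ are subcubic, and the nonemptiness of $X$ and $Y$ gives $|V(G_i^+)| < |V(G)|$. The main obstacle is to verify that $G_1^+$ and $G_2^+$ remain (theta, prism)-free, so that induction applies. The strategy is that any theta or prism in $G_i^+$ must use the new edge $ab$ (else it is already present in the induced subgraph $G[X\cup\{a,b\}]$ or $G[Y\cup\{a,b\}]$ of $G$). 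By condition (vi) there is a chordless $a$-$b$ path $P_Y$ with interior in $Y$ (respectively $P_X$ with interior in $X$ for $G_2^+$), and by condition (ii) there are no $X$-$Y$ edges; substituting $P_Y$ for the edge $ab$ inside a theta or prism of $G_1^+$ therefore produces an induced theta or prism in $G$, contradicting $G \in \mathcal{C}$. The only subcase requiring separate attention is when $ab$ lies inside a triangle $abc$ of a prism of $G_1^+$: the substitution destroys that triangle, but the resulting configuration is an induced theta in $G$ with endpoints $a, b$ and three internally disjoint paths $a$-$c$-$b$, $P_Y$, and a detour through the opposite triangle of the prism using the remaining two rungs---again contradicting $G \in \mathcal{C}$. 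For $G_2^+$ the argument is symmetric and slightly simpler, since $a, b$ have degree $2$ in $G_2^+$ and therefore cannot serve as endpoints of a theta or as triangle vertices of a prism. Once the claim is established, induction gives $\tw(G_i^+) \le 3$; since $\{a,b\}$ is now a clique in both $G_i^+$, clique-sum yields $\tw(G + ab) \le 3$, and hence $\tw(G) \le 3$.
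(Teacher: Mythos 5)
Your proof is correct and takes essentially the same route as the paper: induct, handle basic graphs directly, and treat both separator types as clique-sums. The paper packages the proper-separator case as a ``proper gluing'' operation (defined just before the corollary) and simply asserts that the decomposition of Theorem~\ref{th:main-thm-max3} reverses into a build-up from basic graphs via clique-sums; your version makes one small cosmetic change (adding the edge $ab$ rather than a short path, so that $\{a,b\}$ becomes a clique explicitly) and, more substantively, supplies the substitution argument showing that the two pieces $G_1^+$ and $G_2^+$ actually remain (theta, prism)-free and subcubic. That verification is tacit in the paper, so your write-up fills a genuine (if minor) gap; the check that the three new paths stay pairwise anticomplete follows from condition~(ii) of a proper separation exactly as you say, including the slightly awkward subcase where $ab$ lies in a triangle of a prism. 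One small caution on the basic-graph step: the ``contract an edge incident to a degree-two vertex'' trick the paper uses reduces cycles, wheels and pyramids to $K_4$ but does not directly dispose of the cube or of extended prisms (they have no degree-two vertices); the paper handles these two by exhibiting a chordal supergraph with clique number~4, and your ``direct inspection'' should do the same rather than rely on the rolling-bag heuristic, which only applies to the path-laden cases.
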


\begin{proof}
  The proof is by induction. Let us first prove that all basic graphs
  have tree-width at most~$3$.  First observe that
  contracting an edge with one vertex of degree~$2$ preserves the
  tree-width.  It follows that all basic graphs, except the cube and the
  extended prisms, have tree-width at most the tree-width of $K_4$,
  that is~$3$. In Figure~\ref{f:chordal}, we show a chordal graph $J$
  with $\omega(J) = 4$ that contains the cube or the smallest extended
  prism as a subgraph, showing that here again the tree-width is at
  most~$3$.

  Also, it is easy to check that the two operations gluing along a
  clique and proper gluing do not increase the tree-width (this can be
  proved by observing that the operation are particular cases of what
  is called clique-sum in the theory of tree-width, or by a direct
  proof using the definition of tree-width given at the beginning of
  the paper).
\end{proof}

Note also that all graphs in~$\cal C$ can be proved to be planar by an
easy induction.

\section{(Even hole, pyramid)-free graphs with maximum degree at most~4}
\label{sec:delta4}
Our goal in this section is to prove that (even hole, pyramid)-free
graphs with maximum degree at most~4 have bounded tree-width. We rely on two
known theorems that we now explain.

Let $H$ be a hole in a graph and let $u$ be a vertex not in~$H$.  We
say that $u$ is \emph{major} w.r.t.\ $H$ if $N_H(u)$ is not included
in a 3-vertex path of~$H$.  We omit ``w.r.t.\ $H$'' when $H$ is clear
from the context.  

\begin{lemma}
  \label{l:majH}
  If $G$ be an (even hole, pyramid)-free graph with maximum degree at
  most~4, $H$ is a hole of $G$ and $v$ is a vertex that is major
  w.r.t.~$G$, then $v$ has exactly three neighbors that are pairwise
  non-adjacent. 
\end{lemma}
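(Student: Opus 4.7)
The plan is a case analysis on $|N_H(v)|$, using that the max degree is $\leq 4$ and the fact that $H$ has odd length (as $G$ is even-hole-free). The hypothesis that $v$ is major says $N_H(v)$ is not contained in any 3-vertex subpath of $H$, which immediately rules out $|N_H(v)|\leq 1$ and, in each remaining case, constrains the positions of the neighbors on $H$.

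First I would rule out $|N_H(v)|=2$. Writing $N_H(v)=\{a,b\}$, being major forces $a$ and $b$ to be at distance $\geq 3$ on $H$, i.e.\ both arcs of $H$ between them have length at least $3$. Together with the length-$2$ path $a\ph v\ph b$, this gives three internally disjoint induced paths of length $\geq 2$ between $a$ and $b$ — an induced theta. By Lemma~\ref{lem:ehf-theta-prism} this contains an even hole, contradicting even-hole-freeness. Next I would rule out $|N_H(v)|=4$: labeling the neighbors $a_1,a_2,a_3,a_4$ cyclically around $H$ and letting $p_i$ be the length of the arc from $a_i$ to $a_{i+1}$, the arc together with the edges $va_i,va_{i+1}$ is an induced cycle of length $p_i+2$ (induced because $v$ has no neighbors in the interior of the arc). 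Even-hole-freeness forces each $p_i$ to be odd, but then $|H|=p_1+p_2+p_3+p_4$ would be even, contradicting that $|H|$ is odd.

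The main case is $|N_H(v)|=3$, say $N_H(v)=\{a,b,c\}$. Since $H$ is chordless of length $\geq 4$, any edge of $G[\{a,b,c\}]$ must be an edge of $H$ between consecutive hole vertices. Three such edges would give a triangle in $H$, impossible; two such edges would place $a,b,c$ on a 3-vertex subpath of $H$, contradicting majority. So the only subcase left to rule out is \emph{one} edge, say $ab\in E(H)$ while $ac,bc\notin E$. Here I would exhibit an induced pyramid with apex $c$, triangle $\{v,a,b\}$, and three paths from $c$: the edge $cv$, the arc of $H$ from $c$ to $a$ avoiding $b$, and the arc of $H$ from $c$ to $b$ avoiding $a$. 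Each arc has length $\geq 2$ since $ca,cb\notin E(H)$, so two of the three paths have length $\geq 2$ as required, and they are chordless internally disjoint subpaths of the induced cycle $H$. What remains — namely $a,b,c$ pairwise non-adjacent — is the desired conclusion.

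The main obstacle is checking that the configuration in the single-edge subcase is truly an induced pyramid, i.e.\ verifying that there are no extra edges between the three paths beyond the triangle edges $va,vb,ab$ and the apex edges at $c$. This reduces to showing: no edge from $v$ to the interior of either arc (which holds because $N_H(v)=\{a,b,c\}$), no edge from $a$ or $b$ to the interior of the opposite arc (which holds because their neighbors on $H$ are determined by the cycle), and no edge between interiors of the two arcs (which holds because $H$ is induced). Once these are verified, pyramid-freeness yields the contradiction and the proof is complete.
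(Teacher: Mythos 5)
Your proof is correct and takes essentially the same approach as the paper's: a case analysis on $|N_H(v)|$ using theta-freeness for two neighbors, the even-wheel/parity obstruction for four, and pyramid-freeness (plus majority) in the three-neighbor case. The paper's proof is terser, but the underlying arguments coincide; your write-up simply supplies the verification that the configurations are induced, which the paper leaves implicit.
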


\begin{proof}
  Since $v$ is major, it has at least two neighbors in $H$.  If $v$
  has exactly two neighbors in $H$, since $v$ is major these two
  neighbors are non-adjacent. Therefore, $H$ and $v$ form a theta, a
  contradiction.  If $G$ has exactly three neighbors in $H$, then they
  are pairwise non-adjacent because $v$ is major and $G$ has no
  pyramid.  If $v$ has~4 neighbors in $H$, then $H$ and $v$ form an
  even wheel, a contradiction.
\end{proof}

When $H$ is a hole in some graph and $u$ is a vertex not in~$H$ with
at least two neighbors in~$H$, we call \emph{$u$-sector} of~$H$ any
path of~$H$ of length at least~1, whose ends are adjacent to~$u$ and
whose internal vertices are not. Observe that $H$ is edgewise
partitioned into its $u$-sectors. 

Note that by Lemma~\ref{l:majH}, when $v$ is major w.r.t.~$H$,
$(H, v)$ is a wheel, so the notion of $v$-sector in $H$ is equivalent to 
the notion of a sector the wheel $(H,v)$.
The following appeared in~\cite{DBLP:journals/corr/abs-1912-11246}.

\begin{theorem}
  \label{th:ehfPyrFree}
  Let $G$ be a graph with no even hole and no pyramid, $H$ a hole in
  $G$ and $v$ a major vertex w.r.t.\ $H$. If $C$ is a connected
  component of~$G\sm N[v]$, then there exists a $v$-sector
  $P=x\dots y$ of~$H$ such that
  $N(C) \subseteq \{x, y\} \cup (N(v) \sm V(H))$.
\end{theorem}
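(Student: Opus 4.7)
The plan is to reduce the statement to showing that $N(C)\cap V(H)$ lies inside the endpoint pair $\{u_l,u_{l+1}\}$ of some single $v$-sector, because $N(C)\subseteq N(v)$ is automatic: $C$ is a component of $G\setminus N[v]$, so $v\notin N(C)$ and any vertex of $N(C)$ must lie in $N[v]\setminus\{v\}=N(v)$; hence $N(C)\cap V(H)\subseteq N_H(v)$. First I would establish that $|N_H(v)|\ge 3$: if $v$ had only two neighbors in $H$, majority would force them non-adjacent on $H$, and $v$ together with the two $H$-arcs between them would induce a theta, contradicting even-hole-freeness via Lemma~\ref{lem:ehf-theta-prism}. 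Write $N_H(v)=\{u_1,\dots,u_k\}$ in cyclic order on $H$, with $v$-sectors $S_1,\dots,S_k$. Each $S_l$ has odd length, since $v u_l S_l u_{l+1}v$ is a hole whose length $|S_l|+2$ must be odd.

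Next, assume for contradiction that $N(C)\cap V(H)$ is not contained in any such endpoint pair. Then there exist $u_a,u_b\in N(C)\cap V(H)$ that are not consecutive in the cyclic order on $N_H(v)$ (that is, not endpoints of a common $v$-sector). Among all such ``bad'' pairs, choose $(u_a,u_b)$ minimizing the number of elements of $N_H(v)$ strictly between $u_a$ and $u_b$ on the shorter $H$-arc $Q_1$; let $Q_2$ be the complementary $H$-arc. Using $u_a,u_b\in N(C)$ and connectedness of $C$, construct an induced path $P=u_ac_1\cdots c_mu_b$ with $c_1,\dots,c_m\in C$: take a shortest path in $G[C]$ from $N(u_a)\cap C$ to $N(u_b)\cap C$ and replace $c_1,c_m$ by the neighbors of $u_a,u_b$ closest to the opposite endpoint along that path, so that no chords at $u_a,u_b$ remain. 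Since $H$ is chordless and $u_a,u_b$ are non-consecutive on $H$, $u_au_b\notin E(G)$ and $|P|\ge 2$; a preliminary step (see below) shows we may further assume the interior of $P$ is disjoint from $V(H)$.

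The crux is then a parity argument. The two cycles $D_1=Q_1\cup P$ and $D_2=Q_2\cup P$ satisfy $|D_1|+|D_2|=|H|+2|P|$, which is odd (since $|H|$ is odd). So exactly one of them has even length; call that even cycle $D$. Because $G$ is even-hole-free, $D$ is not a hole, hence has a chord. The chord cannot lie inside the $H$-arc of $D$ (as $H$ is chordless) or inside $P$ (as $P$ is induced), and any chord incident to $u_a$ or $u_b$ would itself be a chord of $H$ or $P$. So the chord joins an interior vertex $w$ of the $H$-arc of $D$ to an interior vertex $c$ of $P$; in particular $w\in V(H)\setminus\{u_a,u_b\}$ and $c\in C$, giving $w\in N(C)\cap V(H)$.

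Finally I would exploit this chord to contradict minimality. If $w=u_l$ for some $l\notin\{a,b\}$, a case check on the position of $u_l$ along the arcs shows that at least one of $\{u_a,u_l\}$, $\{u_l,u_b\}$ is a non-consecutive bad pair whose shorter arc contains strictly fewer elements of $N_H(v)$ than that of $(u_a,u_b)$, contradicting minimality. If instead $w$ lies in the interior of some $v$-sector $S_p$, then $w\in V(H)\setminus N[v]$ is adjacent to $c\in C$, so $w\in C$ (as $C$ is a component of $G\setminus N[v]$); by $H$-connectedness of the interior of $S_p$ we get $I(S_p)\subseteq C$ and hence $u_p,u_{p+1}\in N(C)\cap V(H)$, from which the same case analysis produces a closer bad pair. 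The same machinery (with a rerouting through $I(S_p)\subseteq C$) justifies the preliminary step that $P$ may be taken with interior disjoint from $V(H)$. The main obstacle is the book-keeping in this last step, especially the degenerate case where $(u_a,u_b)$ has exactly one internal $N_H(v)$-element on its shorter arc (so $u_b=u_{a+2}$ in cyclic order): here the closer-pair reduction alone fails, and one must combine the three internally disjoint paths $Q_1,Q_2,P$ from $u_a$ to $u_b$ with the edges $vu_a,vu_b$ to locate either a theta directly or a pyramid with apex $v$, invoking the pyramid-freeness hypothesis to finish.
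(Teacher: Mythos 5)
The paper does not actually prove Theorem~\ref{th:ehfPyrFree}; it cites it from another reference, so I can only assess your proposal on its own terms. Your global strategy---reduce to showing $N(C)\cap N_H(v)$ lies in a consecutive pair, build an induced path $P$ through $C$, use the parity of $|H|$ to force one of $Q_1\cup P$, $Q_2\cup P$ to be an even cycle, then analyse a chord of that cycle---is sensible and the preliminary observations ($N(C)\subseteq N(v)$, $|N_H(v)|\geq 3$, $|H|$ and all $v$-sectors odd, and hence $k=|N_H(v)|$ odd) are all correct. However, there are genuine gaps in the inductive step.

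\textbf{The descent does not descend.} You minimize $\mu(u_a,u_b)$, the number of $N_H(v)$-elements strictly between $u_a,u_b$ on the shorter arc, and you claim the chord's endpoint $w$ yields a pair with strictly smaller $\mu$. This is true when the even cycle $D$ contains the shorter arc $Q_1$, but the parity of $|Q_1|+|P|$ versus $|Q_2|+|P|$ is outside your control, and $D$ may well contain the longer arc $Q_2$. Then $w$ can sit deep in the middle of $Q_2$ and \emph{neither} $\{u_a,w\}$ nor $\{w,u_b\}$ has smaller measure. Concretely, with $k=13$ and $\mu(u_a,u_b)=2$, if $w$ is the element of $N_H(v)$ on the longer arc that is fifth from $u_a$ (out of $k-2-\mu=9$), one computes $\mu(u_a,w)=\min(4,6)=4$ and $\mu(w,u_b)=\min(4,6)=4$, both strictly larger than $2$. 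Since $\mu=2$ is \emph{not} the degenerate case you set aside, the induction breaks here. The same problem infects the sub-case where $w$ lies in a sector interior: you get $u_p,u_{p+1}\in N(C)\cap V(H)$ on the longer arc and then rely on ``the same case analysis.''

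\textbf{The argument never starts when $k=3$.} If $|N_H(v)|=3$ and all three of $u_1,u_2,u_3$ lie in $N(C)$, then \emph{every} pair is consecutive (each $\{u_i,u_j\}$ is the endpoint set of a $v$-sector), so there is no ``bad pair'' at all; yet $N(C)\cap V(H)$ is clearly not contained in the two endpoints of a single $v$-sector, so the conclusion still needs to be contradicted. Your proof begins by extracting a non-consecutive pair, which does not exist in this configuration, so this case is simply not addressed. (Ruling it out is not trivial: you would have three induced $C$-paths between $u_1,u_2,u_3$ and would need to show these create a theta or pyramid in the presence of possible $C$--$H$ edges.)

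\textbf{The two steps you flag are not just book-keeping.} The ``preliminary step'' that $P$ can be chosen with interior disjoint from $V(H)$ is used as a black box; the suggested ``rerouting through $I(S_p)\subseteq C$'' does not describe how to produce a $V(H)$-avoiding induced $u_a$--$u_b$ path from the information that $u_p,u_{p+1}\in N(C)$. And in the degenerate case $\mu=1$ (so $u_b=u_{a+2}$), adding $v$ to $Q_1\cup P$ gives a wheel with three spokes $u_a,u_{a+1},u_b$ that are in general pairwise non-adjacent, i.e.\ a \emph{proper} wheel rather than a pyramid; proper wheels can occur in (even hole, pyramid)-free graphs, so pyramid-freeness does not finish it off, and $Q_1\cup P$ need not be a hole either because of possible chords from $P$ to $Q_1$. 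Something more is needed here (this is precisely where Theorem~\ref{th:rings}-style structural information about how attachments to wheels behave is typically invoked in the even-hole-free literature).

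In short, the parity-and-chord skeleton is the right sort of idea, but the well-ordering you chose does not control the chord's position, the $k=3$ boundary case is missed entirely, and both flagged obstacles are substantive rather than cosmetic.
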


A graph $G$ is a \emph{ring} if its vertex-set can be partitioned into
$k\geq 3$ sets $X_1, \dots, X_k$ such that (the subscript are taken
modulo $k$):
\begin{enumerate}
\item $X_1$, \dots, $X_k$ are cliques;
\item for all $i\in \{1, \dots, k\}$, $X_i$ is anticomplete to
  $V(G) \sm (X_{i-1} \cup X_i \cup X_{i+1})$;
\item for all $i\in \{1, \dots, k\}$, some vertex of $X_i$ is complete
  to $X_{i-1} \cup X_{i+1}$;
\item for all $i\in \{1, \dots, k\}$ and all $x, x' \in X_i$, either
  $N[x] \subseteq N[x']$ or $N[x'] \subseteq N[x]$.
\end{enumerate}

A graph $G$ is a \emph{7-hyperantihole} if its vertex-set can be
partitioned into $7$ sets $X_1, \dots, X_7$ such that (the subscript
are taken modulo $k$):
\begin{enumerate}
\item $X_1$, \dots, $X_k$ are cliques;
\item for all $i\in \{1, \dots, k\}$, $X_i$ is complete to
  $V(G) \sm (X_{i-1} \cup X_i \cup X_{i+1})$;
\item for all $i\in \{1, \dots, k\}$, $X_i$ is anticomplete to
  $X_{i-1} \cup X_{i+1}$.
\end{enumerate}

The following is a rephrasing of Theorem~1.8
in~\cite{DBLP:journals/jgt/BoncompagniPV19}.  Note that
in~\cite{DBLP:journals/jgt/BoncompagniPV19}, the definition of rings
is slightly more restricted (at least 4 sets are required). We need
rings with 3 sets for later use in inductions, and slightly extending
the notion of ring cannot turn Theorem~1.8
in~\cite{DBLP:journals/jgt/BoncompagniPV19} into a false statement.

\begin{theorem}
  \label{th:rings}
  If $G$ is (theta, prism, pyramid)-free and for every hole $H$ of
  $G$, no vertex of $G$ is major w.r.t.~$H$, then $G$ is a complete
  graph, or $G$ is a ring, or $G$ is a 7-hyperantihole, or $G$ has a
  clique separator.
\end{theorem}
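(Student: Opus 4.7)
\textbf{Proof sketch for Theorem~\ref{th:rings}.} The plan is to reduce to the case where $G$ has at least one hole, and in that case to build the ring (or $7$-hyperantihole) decomposition by analysing how every other vertex attaches to a carefully chosen hole.

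First I would dispose of the hole-free case. If $G$ contains no hole then $G$ is chordal, and it is a classical fact that every connected chordal graph is either complete or has a clique separator, so we are done. From now on assume $G$ has a hole. I would pick a hole $H=v_1v_2\cdots v_k v_1$ of maximum length and study, for each vertex $u\in V(G)\setminus V(H)$, the set $N(u)\cap V(H)$. By the hypothesis that no vertex is major, $N(u)\cap V(H)$ is contained in some $3$-vertex subpath of $H$, and by Lemma~\ref{l:majH} (used implicitly through the theta/pyramid-freeness for a putative major vertex) this means $u$ has at most three neighbours in $H$, all consecutive on $H$. Using theta-freeness, if $|N(u)\cap V(H)|=2$ then the two neighbours must be adjacent on $H$ (else $H$ and $u$ form a theta). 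Using pyramid-freeness and prism-freeness, one rules out further obstructions when combining attachments of two different vertices $u,u'\notin V(H)$; in particular, two vertices attached to $H$ at disjoint triangles connected by a path inside $G\setminus V(H)$ would produce a prism.

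Next I would define the candidate cliques $X_i$ for the ring decomposition. For each $i\in [k]$ set $X_i$ to be $\{v_i\}$ together with all vertices in $V(G)\setminus V(H)$ whose attachment to $H$ lies in $\{v_{i-1},v_i,v_{i+1}\}$ and contains $v_i$, and handle the boundary cases (attachments equal to $\{v_{i-1},v_i\}$ or $\{v_i,v_{i+1}\}$) by a consistent tie-breaking convention. Then I would prove, in order: (a) $X_i$ is a clique --- otherwise two non-adjacent vertices of $X_i$ together with a subpath of $H$ yield a hole on which $v_{i-1}$ or $v_{i+1}$ is major, contradicting the hypothesis; (b) $X_i$ is anticomplete to $V(G)\setminus(X_{i-1}\cup X_i\cup X_{i+1})$, using prism- and theta-freeness together with the attachment restriction; (c) there is a vertex of $X_i$ complete to $X_{i-1}\cup X_{i+1}$ (namely $v_i$ itself, or a suitable replacement); (d) the nested-neighbourhood condition on each $X_i$, which is typically forced by the absence of even holes in a `reroute' argument combined with theta-freeness; and (e) that every vertex of $G$ lies in some $X_i$, since otherwise a component $C$ of $G\setminus\bigcup_i X_i$ would have a neighbourhood confined to a single $X_i\cup X_{i+1}$, producing a clique separator.

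The main obstacle is step (d), the nested-neighbourhood condition, and the distinction between the ring outcome and the $7$-hyperantihole outcome. For the latter, I expect the $7$-hyperantihole to appear only when no hole of $G$ can be chosen to play the role of $H$ in the ring construction --- equivalently, when the natural partition produces a cyclic anti-structure of length $7$, the unique small case where complementation of a $7$-hole stays within the class (theta, prism, pyramid)-free without violating the no-major-vertex hypothesis. Concretely, I would argue that if the ring construction above fails, then a specific extremal hole $H$ together with its external vertices must, after checking a small number of local configurations ruled out by forbidden subgraphs, assemble into a $7$-hyperantihole; any other configuration produces either an even hole, a theta, a prism, a pyramid, a major vertex, or a clique separator, completing the case analysis.
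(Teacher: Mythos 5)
The paper does not actually prove Theorem~\ref{th:rings}: it is stated as a rephrasing of Theorem~1.8 of Boncompagni, Penev and Vu\v{s}kovi\'c (\emph{Journal of Graph Theory}, 2019), with only a short remark explaining why enlarging the definition of ring to allow three parts cannot make the cited statement false. So there is no in-paper proof to compare against; the theorem is a black box imported from the literature, where its proof occupies a substantial part of a long paper.

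Taken on its own, your sketch captures a reasonable top-level plan (dispose of the chordal case, pick an extremal hole, try to assemble the $X_i$'s around it), but it has several genuine gaps. The most serious is the $7$-hyperantihole case: you say only that you ``expect'' it to arise when the ring construction fails and that you ``would argue'' this by checking a small number of configurations. That is where essentially all the difficulty of the cited theorem lives, and no argument is given. Second, step~(d) --- the nested-neighbourhood condition --- is acknowledged as ``the main obstacle'' and then waved away with an appeal to ``absence of even holes,'' but the theorem's hypotheses are (theta, prism, pyramid)-freeness plus no major vertex, not even-hole-freeness; you cannot use even-hole-freeness here. Third, your definition of $X_i$ only covers vertices with at least one neighbour on $H$; vertices far from $H$ need to be routed into the partition or shown to force a clique separator, and you do not address this. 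Fourth, in step~(e) you claim that a leftover component whose neighbourhood lies inside $X_i\cup X_{i+1}$ gives a clique separator, but $X_i\cup X_{i+1}$ is in general a union of two cliques, not a clique, so that conclusion does not follow without further work. Finally, claim~(a) (that a non-edge inside $X_i$ produces a hole on which $v_{i-1}$ or $v_{i+1}$ is major) is plausible but not obviously correct as stated, since the two vertices need not have the same attachment and the resulting cycle may have chords. In short, the scaffolding is sensible, but the hard steps --- precisely the ones that make the cited theorem nontrivial --- are left unproved.
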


\begin{lemma}
  \label{l:basicNoK6}
  A complete graph, a ring, or a 7-antihole of maximum degree at
  most~4 does not contain $K_6$ as a minor. 
\end{lemma}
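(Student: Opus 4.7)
The plan is to handle the three basic graph classes in the disjunction separately, using throughout that tree-width is minor-monotone and that $\operatorname{tw}(K_6)=5$.

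\emph{Complete graphs.} If $G=K_n$ has maximum degree at most $4$, then $n-1\le 4$, so $|V(G)|\le 5$ and $G$ does not even contain $K_6$ as a subgraph.

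\emph{$7$-hyperantiholes.} The key observation is that for any $v\in X_i$, the definition forces $v$ to be adjacent to every vertex of $X_{i-3}\cup X_{i-2}\cup X_{i+2}\cup X_{i+3}$ and to the rest of $X_i$. Since each $X_j$ is non-empty, $\deg(v)\ge (|X_i|-1)+4\ge 4$. The max-degree hypothesis therefore forces equality, so $|X_j|=1$ for every $j$ and $G\cong\overline{C_7}$. This graph has $7$ vertices and $14$ edges, so any $K_6$ minor must arise by contracting a single edge. Contracting an edge $uv$ in a simple graph decreases the edge count by $1+|N(u)\cap N(v)|$, and in $\overline{C_7}$ every adjacent pair shares at least one common neighbour. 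So contraction leaves at most $12<15=|E(K_6)|$ edges, ruling out a $K_6$ minor.

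\emph{Rings.} If $k=3$, applying the max-degree constraint to any single part gives $|V(G)|=|X_1|+|X_2|+|X_3|\le 5$, so no $K_6$ minor. Assume $k\ge 4$. The constraint $|X_{i-1}|+|X_i|+|X_{i+1}|\le 5$ forces some $|X_j|=1$ (otherwise every triple sums to at least $6$), so after rotating the indexing we may assume $|X_1|=1$. I claim the bags $B_i:=X_1\cup X_i\cup X_{i+1}$ for $i=1,\dots,k-1$, arranged in a path, form a valid tree decomposition. Indeed $X_1$ lies in every bag, each $X_j$ with $j\ge 2$ lies in the consecutive bags $B_{j-1},B_j$ (or only $B_{k-1}$ when $j=k$), and every edge is captured because edges occur only between $X_i$ and $X_{i-1}\cup X_i\cup X_{i+1}$; in particular the $X_k$--$X_1$ edges are covered by $B_{k-1}\supseteq X_k\cup X_1$. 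Each bag has size at most $|X_1|+|X_i|+|X_{i+1}|\le 1+4=5$, using $|X_i|+|X_{i+1}|\le 5-|X_{i-1}|\le 4$. Hence $\operatorname{tw}(G)\le 4<5=\operatorname{tw}(K_6)$, so $G$ has no $K_6$ minor by minor-monotonicity.

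The only subtlety is the ring case: one must be careful that the ordering of bags along the path respects the cyclic adjacency pattern, which is exactly why the singleton $X_1$ has to be duplicated in every bag to bridge the wrap-around $X_k$--$X_1$ edges. The other two cases reduce quickly to counting vertices or edges.
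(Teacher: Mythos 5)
Your proof is correct, and your treatment of the ring case takes a genuinely different route from the paper. The paper argues by induction on the number $k$ of parts: it establishes that one of every three consecutive parts is a singleton, handles $k\in\{3,4\}$ by a vertex count, and for $k\ge 5$ finds two anticomplete singleton parts that together form a 2-cutset, splits $G$ along it into two smaller rings (this is precisely why the paper relaxes the usual definition of ring to allow $k=3$), and recurses. You instead exhibit an explicit path decomposition with bags $X_1\cup X_i\cup X_{i+1}$ (for a singleton $X_1$), and your bag-size bound $1+|X_i|+|X_{i+1}|\le 5$ follows from the same degree constraint. Your approach avoids the induction and the need to re-verify the ring axioms on the pieces, and it gives the slightly stronger conclusion $\tw(G)\le 4$; the paper's approach is more in the spirit of the decomposition machinery used elsewhere in the article. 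For the $7$-hyperantihole your argument agrees with the paper's up to the reduction to $\overline{C_7}$, but you then supply an edge-count proof of the step that the paper dismisses as ``obvious'' (that $\overline{C_7}$ has no $K_6$ minor), which is a welcome amount of rigour. One small presentational nit: the bag $B_1=X_1\cup X_1\cup X_2$ degenerates to $X_1\cup X_2$; it is contained in $B_2$ and could simply be dropped.
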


\begin{proof}
  For complete graphs, this is obvious since $K_5$ is the biggest
  complete graph of maximum degree at most~4.  For 7-hyperantiholes,
  the proof is also easy because each of the cliques in the definition
  must be on a single vertex, so that $|V(G)| =7$, and a $K_6$ minor
  obviously does not exists.

  So, suppose that $G$ is a ring of maximum degree at most~4 (we use
  for $G$ the notation as in the definition of rings).

  \begin{claim}
    \label{c:no3cliques}
    For all $i\in \{1, \dots, k\}$, one of $X_{i-1}$, $X_i$ or
    $X_{i+1}$ contains only one vertex. Moreover, if $|X_i|=3$, then
    $X_{i-1}$, $X_{i+1}$ and $X_{i+2}$ all contain only one vertex.
  \end{claim}
  
  \bpc Otherwise, some vertex in $X_i$ or $X_{i+1}$ has degree~5, a
  contradiction to our assumption.  \epc

  We now prove by induction on $k$ (the number of sets in the ring)
  that $G$ does not contain $K_6$ as a minor.  If $k$ equals 3 or 4,
  then by~(\ref{c:no3cliques}), we see that $|V(G)|\leq 6$, so $G$
  does not contain $K_6$ as a minor since $G$ is not $K_6$.  If $k\geq 5$,
  then by~(\ref{c:no3cliques}), there exist two distinct sets of the
  ring $X_i$, $X_j$ that are anticomplete to each other and such that
  $X_i= \{x\}$ and $|X_j|=\{x'\}$.  By the definition of rings,
  $G\sm \{x, x'\}$ has two connected components $C$ and $C'$, and it is
  straightforward to check the two graphs $G_C$ and $G_{C'}$ obtained
  from $G[C\cup \{x, x'\}]$ and $G[C'\cup \{x, x'\}]$ respectively by
  adding an edge between $x$ and $x'$ are rings (this is the place
  where we need a ring on three sets).  Also, it is straightforward to
  check that a $K_6$ minor in $G$ yields a $K_6$ minor in one of $G_C$
  or $G_{C'}$, a contradiction to the induction hypothesis.
\end{proof}

We can now prove the main theorem of this section. 

\begin{theorem}
  \label{th:noK6}
  If a graph $G$ is (even hole, pyramid)-free with maximum degree~4,
  then $G$ contains no $K_6$ as a minor.
\end{theorem}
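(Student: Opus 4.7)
The plan is to proceed by induction on $|V(G)|$, with the trivial base case $|V(G)|\leq 5$. Since $G$ is even-hole-free, Lemma~\ref{lem:ehf-theta-prism} gives that $G$ is also (theta, prism)-free, so $G$ is (theta, prism, pyramid)-free, and we may appeal to the structural dichotomy of Theorem~\ref{th:rings} whenever no hole of $G$ carries a major vertex.

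If no hole has a major vertex, Theorem~\ref{th:rings} yields that $G$ is a complete graph, a ring, a $7$-hyperantihole, or has a clique separator. The first three possibilities are handled by Lemma~\ref{l:basicNoK6}. If $G$ has a clique separator $S$, then $|S|\leq\omega(G)\leq\deg(G)+1\leq 5<6$, so we split $G$ along $S$ into two proper induced subgraphs, each (even hole, pyramid)-free with maximum degree at most $4$. By induction neither has a $K_6$ minor, and we invoke the standard fact that $K_t$-minor-freeness is preserved under clique-sums over cliques of size strictly less than $t$ (with $t=6$ here): a hypothetical $K_6$ minor must, by pigeonhole on $S$, have a branch set disjoint from $S$ and hence confined to one side, and one transfers the whole minor to that side using the clique structure on $S$ to restore connectivity.

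The main case is when some hole $H$ has a major vertex $v$. Lemma~\ref{l:majH} produces exactly three pairwise non-adjacent neighbors $a_1,a_2,a_3\in V(H)$ of $v$, with at most one further neighbor $w\notin V(H)$ since $\deg(v)\leq 4$. Set $T := N[v]$, so $|T|\leq 5$. Theorem~\ref{th:ehfPyrFree} ensures that every component $C$ of $G\setminus N[v]$ satisfies $N_G(C)\subseteq\{x,y\}\cup(N(v)\setminus V(H))$ for the endpoints $x,y$ of some $v$-sector, and hence $|N_G(C)|\leq 3$. Suppose for contradiction that $G$ has a $K_6$ minor with branch sets $B_1,\ldots,B_6$. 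Since $|T|\leq 5<6$, pigeonhole yields a branch set $B$ disjoint from $T$, contained in some component $C$. Every other branch set must contain a vertex of $N_G(B)\subseteq(C\setminus B)\cup N_G(C)$; a short connectivity argument (any branch set with a vertex in $C$ and a vertex outside $C$ must pass through $N_G(C)$) shows that each such branch set either lies entirely in $C$, or meets $N_G(C)$. Writing $q$ for the number of the latter, disjointness forces $q\leq|N_G(C)|\leq 3$, so at least $3$ of the branch sets lie entirely inside $C$.

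To finish, I would extract a $K_6$ minor inside the induced subgraph $G':=G[C\cup N_G(C)]$, which lies in our class and has at most $|V(G)|-2$ vertices (since both $v$ and some $a_\ell$ with $\ell\notin\{j,j+1\}$ lie outside $V(G')$), and hence has no $K_6$ minor by the inductive hypothesis. The branch sets entirely in $C$ transfer to $G'$ unchanged, while each of the $q$ branch sets $E_j$ meeting $N_G(C)$ is replaced by its restriction $E_j\cap(C\cup N_G(C))$ or a suitably chosen representative. The main obstacle I expect is that such a restriction may be disconnected in $G'$ and that adjacencies between two restricted branch sets could rely on edges outside $V(G')$; the plan is to exploit the very restricted structure of $N_G(C)\subseteq\{a_j,a_{j+1},w\}$ (at most three vertices with the forced non-edge $a_ja_{j+1}$) together with the freedom to route inside each $E_j$ through $C$, and to perform case analysis on the partition of $N_G(C)$ among the $E_j$ (with $q\in\{1,2,3\}$) to restore connectivity and adjacencies, thereby producing the required $K_6$ minor in $G'$ and the desired contradiction.
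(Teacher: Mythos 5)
Your overall architecture matches the paper's: both proofs proceed by induction (equivalently, minimum counterexample), split into the case with no major vertex (handled via Theorem~\ref{th:rings} and Lemma~\ref{l:basicNoK6}, plus the clique-sum argument) and the case with a hole $H$ and a major vertex $v$ (handled via Lemma~\ref{l:majH} and Theorem~\ref{th:ehfPyrFree}). In the second case you also correctly use the pigeonhole on $N[v]$ to locate branch sets inside a component $C$ with $N(C)\subseteq\{a,b,w\}$.

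However, there is a genuine gap in your finishing move, and it is precisely the obstacle you flag at the end. You want to extract the $K_6$ minor inside the \emph{induced} subgraph $G':=G[C\cup N_G(C)]$, but the restricted branch sets can fail to be a model of $K_6$ there, and there is no way to repair this in general by ``routing through $C$''. Concretely, suppose the three branch sets meeting $N(C)$ each contain exactly one vertex of $\{a,b,w\}$ and nothing else of $C\cup N(C)$; their pairwise adjacencies in $G$ can be realized entirely through $v$, $c$, or the rest of $H$ on the $D$-side, none of which survives in $G[C\cup N(C)]$. Since $a$ and $b$ are non-adjacent (consecutive neighbors of $v$ on $H$), the restricted branch sets may simply become three pairwise non-adjacent singletons, and no amount of rerouting inside $C$ fixes this because those branch sets have nothing to route through. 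Thus the induced subgraph $G[C\cup N(C)]$ need not contain a $K_6$ minor at all, so the contradiction does not follow.

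The paper's resolution, which is the one nontrivial idea missing from your proposal, is not to pass to an induced subgraph but to a strictly smaller graph $G'$ on vertex set $C'\cup\{a,b,c,v,w\}$ obtained by adding two (or three) new edges $ca$, $cb$ (and $cw$ when $w$ has a neighbor in the interior of $S_a\cup S_b$). These shortcut edges let $c$ and $v$ serve as connectors replacing the $D$-side portions of the branch sets: one sets $B'_i=(B_i\setminus D)\cup\{c\}$ or $(B_i\setminus D)\cup\{v\}$ as appropriate and checks connectivity and pairwise adjacency by case analysis on how $\{a,b,w\}$ is distributed among the $B_i$. This also creates an extra verification obligation that your approach would not need: one must prove (as in Claim~(\ref{cl:GpEHF})) that the edge-augmented $G'$ is still (even hole, pyramid)-free and of maximum degree $4$, which requires a careful argument tracing any offending structure through $c$ and using $S_a$, $S_b$. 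Without this construction, the inductive step does not close, so your proof as written is incomplete.
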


\begin{proof}
  Suppose that $G$ is a counter-example with a minimum number of
  vertices.  So, $G$ contains $K_6$ as a minor.  By the minimality of
  $G$ and the definition of minors, it follows that $V(G)$ can be
  partitioned into six non-empty sets $B_1$, \dots, $B_6$ such that for all
  $i, j\in \{1, \dots, 6\}$, $G[B_i]$ is connected and there is at
  least one edge between $B_i$ and $B_j$.

  \medskip
  \noindent{\bf Case 1:}  $G$ contains no hole with a major vertex.

  By Theorem~\ref{th:rings} and Lemma~\ref{l:basicNoK6}, $G$ has
  a clique separator $K$. It is straightforward to check that for one
  component $C$ of $G \sm K$, the graph $G[K \cup C]$ contains $K_6$
  as a minor, a contradiction to the minimality of $G$. 

  \medskip
  \noindent{\bf Case 2:} $G$ contains a hole $H$ and a vertex $v$
  that is major w.r.t.\ $H$.

  By Lemma~\ref{l:majH}, $v$ has exactly three neighbors in $H$ that
  are pairwise non-adjacent.  Possibly, $v$ has a neighbor $w\notin H$
  (if $v$ has degree three, we set $v=w$).  Let $a, b, c$ be the three
  neighbors of~$v$ in $H$.  

  Up to symmetry, $B_5$ and $B_6$ do not contain~$a, b, c, w$.  So,
  some connected component $C$ of~$G\sm\{v, w, a, b, c\}$ contains
  $B_5\cup B_6$.  By Theorem~\ref{th:ehfPyrFree}, there exists a
  $v$-sector, say $P = a\dots b$ up to symmetry, of~$H$ such that
  $N(C) \subseteq \{a, b, w\}$.  Note that if $v=w$, then $\{a, b\}$
  is a separator of $G$, in which case the proof is easier. So in what follow,
  a reader may assume for simplicity that $v\neq w$, though what is
  written is correct even if $v=w$.

  Let $C'$ be the union of all components $X$ of
  $G\sm\{v, w, a, b, c\}$ such that $N(X) \subseteq \{a, b, w\}$.  Let
  $D$ be $V(G) \sm (C' \cup \{a, b, w\})$.  Note that $C\subseteq C'$
  and $v, c\in D$.  Note that $B_5\cup B_6 \subseteq C'$, and since
  $\{a, b, w\}$ separates $C'$ from $D$, we may assume that
  $B_4 \subseteq C'$.

  Let $S_a$ (resp.\ $S_b)$ be the $v$-sector of $H$ from $a$
  (resp.\ $b$) to $c$. Let $G'$ be the graph obtained from
  $G[C' \cup \{a, b, c, v, w\}]$ by adding the edges $ca$ and $cb$.
  Also, the edge $cw$ is added to $G'$ if and only if $w$ has a neighbor in the
  interior of the path formed by $S_a$ and $S_b$.  

  \begin{claim}
    \label{cl:GpEHF}
    $G'$ is (even hole, pyramid)-free and has maximum degree at most~4.
  \end{claim}

  \bpc Clearly $G$ has maximum degree at most~4.  Since $G'\sm c$ is
  an induced subgraph of $G$, every even hole or pyramid of $G'$ goes
  through $c$.

  Suppose that $J$ is an even hole of $G'$. Since it goes through $c$,
  up to symmetry, we may assume that $J$ goes through $cb$. If $J$
  contains $a$, then $G$ is formed of $a$, $c$, $b$ and a path $P$ of
  even length from $a$ to $b$.  So, $P$, $S_a$ and $S_b$ form an even
  hole of $G$, unless $w\in V(J)$ and $w$ has a neighbor in the
  interior of the path induced by $S_a\cup S_b$.  But this case leads
  to a contradiction, since by the definition of $G'$, we would have
  $cw\in E(G')$, so $J$ would not be a hole of $G'$.

  So, $J$ does not go through $a$. It follows that $J$ is formed by
  $bc$ and a path $Q$ of odd length from $w$ to $b$.  Note that in
  this case, $cw\in E(G)$ and $v\notin V(J)$, so $v\neq w$.  It
  follows that $Q$ and $v$ form an even hole of $G'$, a contradiction.

  Suppose that $G'$ contains a pyramid $\Pi$.  Since $\Pi$ contains
  $c$, it does not contain $v$ because $v$ dominates $c$ , in $G'$, ie
  $N_{G'}[c]\subseteq N_{G'}[v]$ (and in a pyramid, no vertex
  dominates another vertex).  If we replace $c$ by $v$ in $\Pi$, then
  we obtain an induced subgraph of $G$ that is not a pyramid since $G$
  is pyramid-free.  This implies that $cw\notin E(G')$.  So, $c$ has
  degree~2 in $\Pi$ and $w$ has no neighbor in the interior of the
  path induced by $S_a\cup S_b$.  Hence, replacing $acb$ by $S_a$ and
  $S_b$ in $\Pi$ yields a pyramid of $G$, a contradiction. \epc

  \begin{claim}
    \label{cl:GpK6}
    $G'$ contains $K_6$ as a minor.
  \end{claim}

  \bpc Suppose that $a\in B_1$, $w\in B_2$ and $b\in B_3$.  We then
  set $B'_1 = (B_1 \sm D) \cup \{c\}$, $B'_2 = (B_2 \sm D)\cup \{v\}$
  and $B'_3 = B_3 \sm D$.  We observe that there are edges from $B'_1$
  to $B'_2$, from $B'_1$ to $B'_3$ and from $B'_2$ to $B'_3$.  Also,
  each of these sets is connected in $G'$, and together with $B_4$, $B_5$ and
  $B_6$ they form a $K_6$ minor of $G'$.  We may therefore assume that
  $B_3 \cap \{a, b, w\} = \emptyset$, so that $B_3\subseteq C'$.

  If $\{a, b, w\} \cap B_1 = \{a, b\}$, then
  $\{a, b, w\}\cap B_2 = \{w\}$.  We then set
  $B'_1= (B_1\sm D) \cup \{c\}$ and $B'_2 = (B_2\sm D) \cup \{v\}$.
  We observe that there are edges from $B'_1$ to $B'_2$.  Also, these
  sets are connected in $G'$, and together with $B_3$, $B_4$, $B_5$
  and $B_6$ they form a $K_6$ minor of $G'$. Hence, we may assume that
  $\{a, b, w\} \cap B_1 \neq  \{a, b\}$, and symmetrically $\{a, b,
  w\} \cap B_2 \neq  \{a, b\}$. 

  We may assume that $a, w\in B_1$.  We set
  $B'_1= (B_1\sm D) \cup \{c, v\}$ and $B'_2 = B_2\sm D$.  We observe
  that there are edges from $B'_1$ to $B'_2$.  Also, these sets are
  connected in $G'$, and together with $B_3$, $B_4$, $B_5$ and $B_6$
  they form a $K_6$ minor of $G'$.  
\epc

  Since $G'$ is smaller than $G$, (\ref{cl:GpEHF}) and~(\ref{cl:GpK6})
  contradict the minimality of~$G$.
\end{proof}

In the next corollary, we use the function $f_H(k)$ 
as defined in Theorem~\ref{thm:ind-wall}.

\begin{corollary}
  Every (even hole, pyramid)-free graph of maximum degree at most~4
  has tree-width less than~$f_{K_6}(3)$. 
\end{corollary}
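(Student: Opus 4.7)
The plan is to combine Theorem~\ref{th:noK6} with the induced grid theorem for minor-free graphs, Theorem~\ref{thm:ind-wall}. By Theorem~\ref{th:noK6}, every (even hole, pyramid)-free graph $G$ of maximum degree at most $4$ excludes $K_6$ as a minor. Hence Theorem~\ref{thm:ind-wall} applies with $H=K_6$: if $\tw(G)\geq f_{K_6}(3)$, then $G$ contains as an induced subgraph either a $(3\times 3)$-wall or the line graph of a chordless $(3\times 3)$-wall. The strategy is then to derive a contradiction with even-hole-freeness by exhibiting an even hole in either of these two induced subgraphs.

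In more detail, I would first apply Theorem~\ref{th:noK6} to obtain $K_6$-minor-freeness of $G$. I would then proceed by contradiction, assuming $\tw(G)\geq f_{K_6}(3)$ and invoking Theorem~\ref{thm:ind-wall} to extract either an induced $(3\times 3)$-wall $W$ or the induced line graph $L(W)$ of a chordless $(3\times 3)$-wall $W$. The last step is to exhibit an even hole in $W$ or $L(W)$. For this I would reuse the idea already used in the proof of Corollary~\ref{cor:odd-signable-ehf}: any wall contains an induced theta (pick two degree-$3$ vertices bounding a common hexagonal face, together with the three internally vertex-disjoint induced paths between them given by the wall structure), and the line graph of any chordless wall contains an induced prism (the two triangles coming from two adjacent degree-$3$ vertices of $W$, joined by three induced paths coming from edge-disjoint paths in $W$). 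Lemma~\ref{lem:ehf-theta-prism} then yields an even hole in each case, contradicting that $G$ is even-hole-free.

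The main obstacle I anticipate is checking that the parameter $k=3$ chosen in the statement is already ``large enough'' for the theta (resp.\ prism) to be found; Corollary~\ref{cor:odd-signable-ehf} is stated only for a generic ``large'' wall. This reduces to an explicit small-case verification inside a $(3\times 3)$-wall, where two adjacent hexagonal faces sharing an edge suffice to produce the required theta (three internally disjoint induced paths between the two endpoints of the shared edge: the shared edge itself and the two paths going around each of the two hexagonal faces), and the line-graph argument is analogous. Once this verification is in hand, the bound $\tw(G)<f_{K_6}(3)$ follows immediately from the chain: large tree-width $\Rightarrow$ induced wall or induced line graph of a chordless wall $\Rightarrow$ induced theta or prism $\Rightarrow$ even hole.
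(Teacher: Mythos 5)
Your proposal is correct and follows essentially the same chain of reasoning as the paper's proof: apply Theorem~\ref{th:noK6} to get $K_6$-minor-freeness, then Theorem~\ref{thm:ind-wall} with $k=3$ to get an induced $(3\times 3)$-wall or line graph of a chordless $(3\times 3)$-wall, and finally observe these contain a theta or prism respectively, contradicting Lemma~\ref{lem:ehf-theta-prism}. The paper handles the small-case verification you flag by pointing to Figure~\ref{fig:theta-grid-in-wall-subdwall}, which exhibits the required theta and prism explicitly.
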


\begin{proof}
  Suppose that $G$ has tree-width at least $f_{K_6}(3)$.
  By Theorem~\ref{th:noK6}, $G$ does not contain $K_6$ as a
  minor. 
  By Theorem~\ref{thm:ind-wall}, $G$ contains a 
  $(3 \times 3)$-wall or the line graph of a
chordless $(3 \times 3)$-wall as an induced subgraph.
This yields a contradiction because the
$(3 \times 3)$-wall contains a theta, and 
the line graph of a chordless $(3 \times 3)$-wall contains a prism
(see Figure~\ref{fig:theta-grid-in-wall-subdwall}),
a contradiction to Lemma~\ref{lem:ehf-theta-prism}.
\end{proof}

\begin{figure}[ht]
	\centering \includegraphics[scale=0.55]{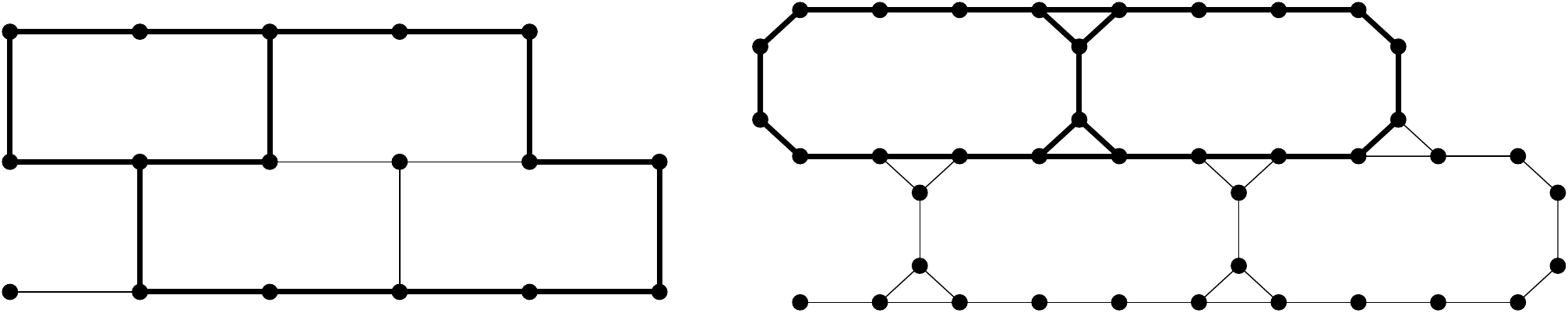}
	\caption{A theta in $(3 \times 3)$-wall and a prism in the line graph of a chordless $(3 \times 3)$-wall}
	\label{fig:theta-grid-in-wall-subdwall} 
\end{figure}

  This approach might work for maximum degree~5 with much more
technicalities, but for larger values it fails as far as we can see.

\section{A possible structure theorem for even-hole-free graphs with maximum degree at most~4}
\label{sec:delta4b}
In this section, we investigate a possible structure theorem that would describe even-hole-free graphs with maximum degree at most~4. We call {\em patterns}, the graphs that are represented on Figure~\ref{fig:wheel-family} and Figure~\ref{fig:pattern-pyramid}. Say that a graph is {\em basic} if it is a complete graph or a chordless cycle, or it can be obtained from one of the patterns, by replacing dashed lines with paths of length at least one or contracting some dashed lines into single vertices. We believe that an even-hole-free graph with maximum degree 4 must be either basic or decomposable with a clique separator or a 2-join that we define below.   

\begin{figure}[ht]
	\centering \includegraphics[scale=0.5]{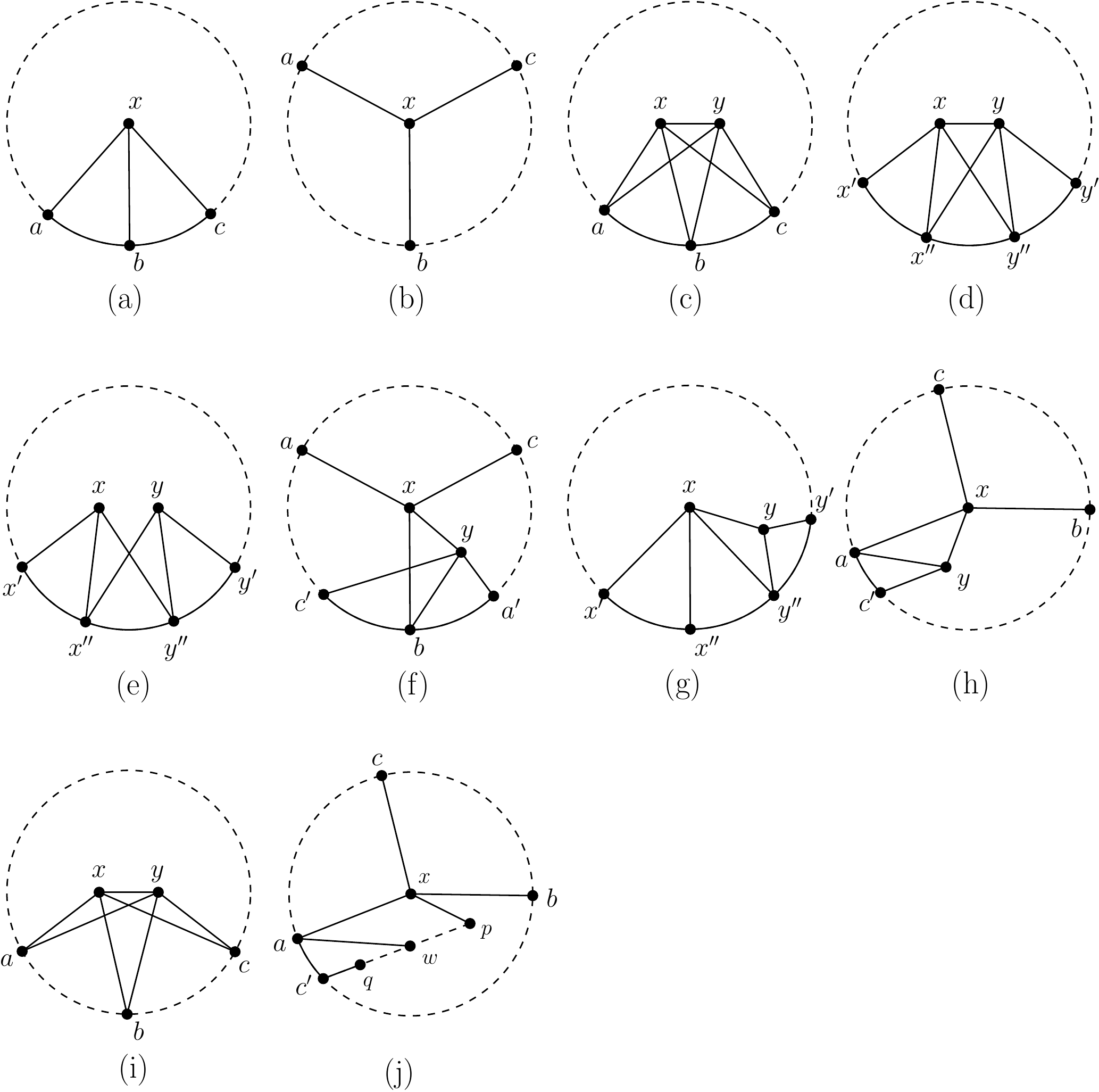}
	\caption{Patterns in the pyramid-free case (solid lines represent edges)}
	\label{fig:wheel-family}
\end{figure}

\begin{figure}[ht]
	\centering \includegraphics[scale=0.5]{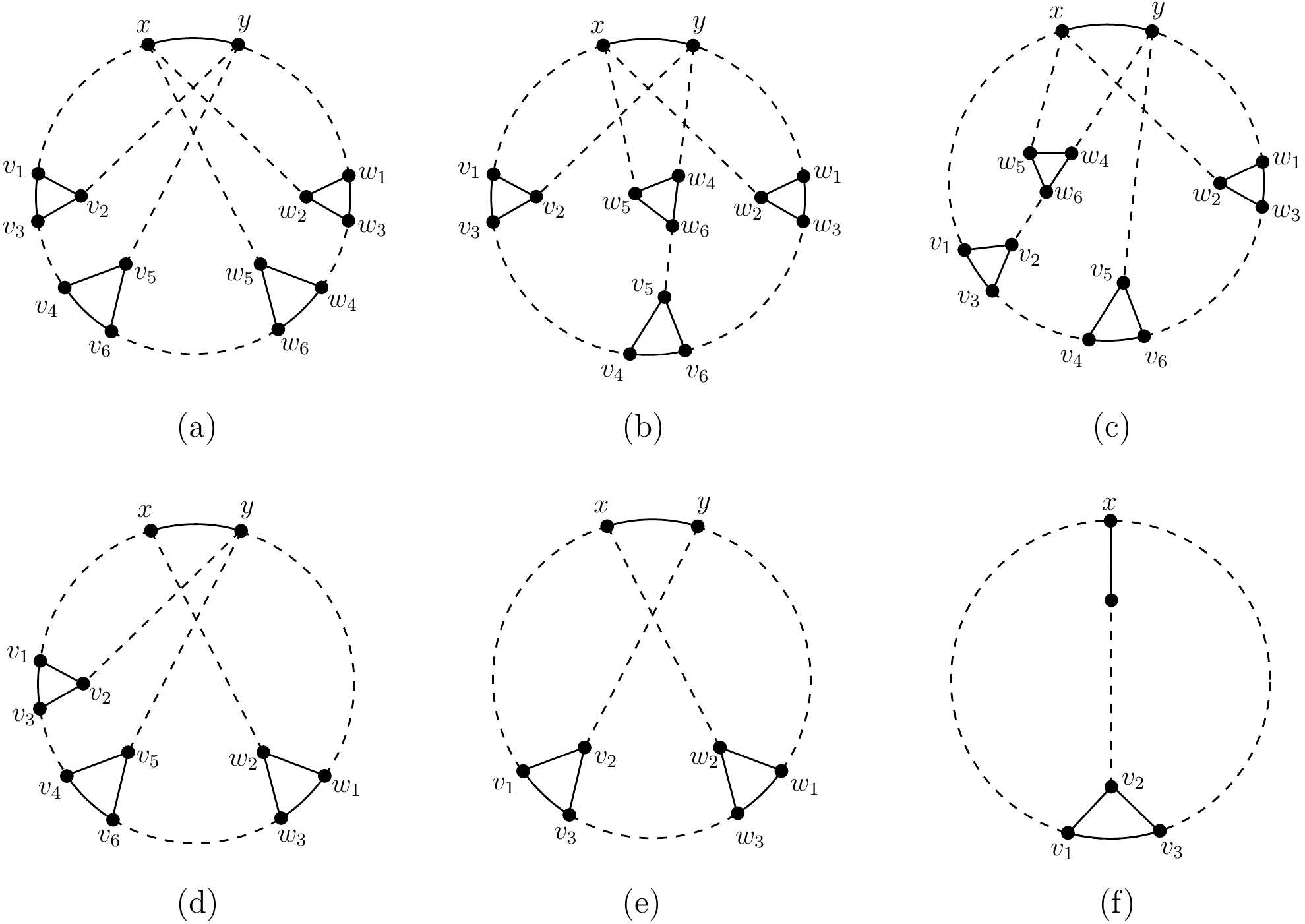}
	\caption{Patterns that contain pyramid}
	\label{fig:pattern-pyramid}
\end{figure}

A {\em 2-join} in a graph $G$ is a partition of $V(G)$ into two sets $X_1$, $X_2$ each of size at least~3, such that for $i=1, 2$, $X_i$ contains two non-empty disjoint sets $A_i$, $B_i$, $A_1$ is complete to $A_2$, $B_1$ is complete to $B_2$, and there are no other edges between $X_1$ and $X_2$. Moreover, for $i=1, 2$, $X_i$ does not consist of a path with one end in $A_i$, one end in $B_i$ and no internal vertex in $A_i\cup B_i$.

We are not sure that our list of patterns is complete for our class, but we believe that the real list is close to it and, above all, finite.  This should imply that the tree-width is bounded.  Also, we wonder whether a similar approach can be extended to even-hole-free graphs of maximum degree~$k$ for any fixed integer $k$. Observe that for $k=3$, this is what we actually do in Theorem~\ref{th:main-thm-max3}, since the list of basic graphs can be seen as obtained by a finite list of patterns and the so-called proper separator is a special case of 2-join.  For $k\geq 5$, rings (already defined in Section~\ref{sec:delta4}) become a problem, but an extension of the notion of 2-join might lead to a true statement.

\bibliographystyle{plain}
%\bibliography{articles} 
%\begin{thebibliography}{10}
\bibliography{Bibliography}
%\end{thebibliography}

\end{document}